\DeclareSymbolFont{slenderlargesymbols}{OMX}{ccex}{m}{n}
\DeclareMathSymbol{\prod}{\mathop}{slenderlargesymbols}{"51}
\newtheorem{theorem}{Theorem} 
\newtheorem{lemma}[theorem]{Lemma}
\newtheorem{proposition}[theorem]{Proposition}
\newcommand{\inner}[3][]{{\langle #2,#3 \rangle_{#1}}} 
\newcommand{\gtilde}{\tilde{g}}
\newcommand{\R}{\mathbb{R}}
\newcommand{\eps}{\epsilon}
\newcommand{\rhotilde}{\tilde{\rho}}
\newcommand{\thetabar}{\overline{\theta}}
\newcommand{\thetacheck}{\widecheck{\theta}}
\newcommand{\zetacheck}{\rho}
\newcommand{\xo}{x_\ast}
\newcommand{\Wplus}[1]{W_{+, #1}}
\newcommand{\Wpx}{\Wplus{x}}
\newcommand{\E}{\mathbb{E}}
\newcommand{\Mxyhat}{M_{\hat{x} \leftrightarrow \hat{y}}}
\newcommand{\xhat}{\hat{x}}
\newcommand{\yhat}{\hat{y}}
\newcommand{\xohat}{\hat{x}_0}
\newcommand{\Qxy}{Q_{x,y}}
\newcommand{\PiWd}{\Pi_{i=d}^1 W_{i,+,x}}
\newcommand{\PiWdo}{\Pi_{i=d}^1 W_{i,+,\xo}}
\newcommand{\Pipithetapii}{\prod_{i=0}^{d-1} \frac{\pi - \thetabar_i}{\pi}}
\newcommand{\zetaterm}{\sum_{i=0}^{d-1} \frac{\sin \thetabar_i}{\pi} \prod_{j=i+1}^{d-1} \frac{\pi - \thetabar_j}{\pi}}
\newcommand{\Wnpx}[1]{W_{#1, +, x}}
\newcommand{\Wipx}{W_{i, +, x}}
\newcommand{\Win}[1]{W_{#1}}
\newcommand{\Wi}{\Win{i}}
\newcommand{\vbarx}{\overline{v}_{x}}
\newcommand{\hx}{h_x}
\newcommand{\feta}{f_\eta}
\newcommand{\fzero}{f_0}
\newcommand{\etabar}{\overline{\eta}}
\DeclareMathOperator{\Span}{span}
\DeclareMathOperator{\relu}{relu}
\DeclareMathOperator{\diag}{diag}
\DeclareMathOperator{\conv}{conv}
\newcommand\defeq{\coloneqq}
\newcommand{\noise}{\eta}
\newcommand\norm[2][\Tnorm]{\ensuremath{{\left\|#2\right\|}_{#1}}}
\newcommand\Tinnerprod{}
\newcommand{\innerprod}[3][\Tinnerprod]{\ifthenelse{\equal{#1}{}}{\ensuremath{\left<#2,#3\right>}}{\ensuremath{\left<#2,#3\right>_{#1}}}}
\newcommand\Tex{}
\newcommand\PR[2][\Tex]{
\ifthenelse{\equal{#1}{}}{{\mathbb P}\left[#2\right]}{\ensuremath{{\mathbb P}_{#1}\left[ #2\right]}}}
\newcommand\EX[2][\Tex]{
\ifthenelse{\equal{#1}{}}{{\mathbb E}\left[#2\right]}{\ensuremath{{\mathbb E}_{#1}\left[ #2\right]}}}
\newcommand\setS{\mathcal S}
\newcommand\qbarx{\bar q_x}
\newcommand\eventnoisesmall{\mathcal E_{\text{noise}}}
\newcommand\stepdir[1]{\tilde v_{#1}}
\newcommand\ball{\mathcal B}
\newcommand{\MSE}{\mathrm{MSE}}
\newcommand{\PiWdi}{\Pi_{i=d}^1 W_{i,+,x_i}}
\newcommand\reals{\mathbb R}
\newcommand{\PiWditilde}{\Pi_{i=d}^1 W_{i,+,\tilde{x}_i}}
\newcommand{\PS}{P_S}
\newcommand{\whcomm}[2]{{\color{black} #1}{\color{black} #2}} 
\title{Deep Denoising}
\author{ Reinhard Heckel\thanks{Department of Electrical and Computational Engineering, Rice Univeristy, Houston, TX}, Wen Huang and Paul Hand\ \thanks{Department of Computational and Applied Mathematics, Rice University, Houston, TX} \  and Vladislav Voroninski\thanks{Helm.ai, Menlo Park, CA}}
\begin{document}

\begin{center}

{\bf{\LARGE{
Rate-Optimal Denoising with Deep Neural Networks
}}}

\vspace*{.2in}

{\large{
\begin{tabular}{cccc}
Reinhard Heckel$^{\ast}$ & Wen Huang$^{\star}$ & Paul Hand$^{\star\star}$ & Vladislav Voroninski$^{\dagger}$\\
\end{tabular}
}}

\vspace*{.05in}

\begin{tabular}{c}
Department of Electrical and Computer Engineering$^\ast$, Rice University \\
School of Mathematical Sciences$\star$, 
Xiamen University, Xiamen, Fujian, P.R.China\\ 
Department of Mathematics and College of Computer and Information Science$^{\star\star}$, Northeastern University\\
Helm.ai$^\dagger$
\end{tabular}

\vspace*{.1in}

\today

\vspace*{.1in}

\end{center}

\begin{abstract}
Deep neural networks provide state-of-the-art performance for image denoising, where the goal is to recover a near noise-free image from a noisy observation.
The underlying principle is that neural networks trained on large datasets have empirically been shown to be able to generate natural images well from a low-dimensional latent representation of the image.
Given such a generator network, a noisy image can be denoised by i) finding the closest image in the range of the generator or by ii) passing it through an encoder-generator architecture (known as an autoencoder).
However, there is little theory to justify this success, let alone to predict the denoising performance as a function of the network parameters.
In this paper we consider the problem of denoising an image from additive Gaussian noise using the two generator based approaches. 
In both cases, we assume the image is well described by a deep neural network with ReLU activations functions, mapping a $k$-dimensional code to an $n$-dimensional image. In the case of the autoencoder, we show that the feedforward network reduces noise energy by a factor of $O(k/n)$. 
In the case of optimizing over the range of a generative model, we state and analyze a simple gradient algorithm that minimizes a non-convex loss function, and provably reduces noise energy by a factor of $O(k/n)$.
We also demonstrate in numerical experiments that this denoising performance is, indeed, achieved by generative priors learned from data.
\end{abstract}

\section{Introduction} \label{sec:introduction}
We consider the denoising problem, where the goal is to remove noise from an unknown image or signal.
In more detail, our goal is to obtain an estimate of an image or signal $y_\ast \in \R^n$ from a noisy measurement
\[
y = y_\ast + \noise.
\]
Here, $\noise$ is unknown noise, which we model as a zero-mean white Gaussian random variable with covariance matrix $\sigma^2/n I$.
Image denoising relies on generative or prior assumptions on the image $y_\ast$, 
such as self-similarity within images~\citep{dabov_image_2007},
 sparsity in fixed~\citep{donoho_de-noising_1995} and learned bases~\citep{elad_image_2006}, and most recently, by assuming the image can be generated by a pre-trained deep-neural network~\citep{burger_image_2012,zhang_beyond_2017}.
Deep-network based approaches, typically yield the best denoising performance. This success can be attributed to their ability to efficiently represent and learn realistic image priors, for example via auto-decoders~\citep{hinton_reducing_2006} and generative adversarial models~\citep{goodfellow_generative_2014}.

Motivated by this success story, we assume that the image $y_\ast$ lies in the range of an image-generating network. 
In this paper, we propose the first algorithm for solving denoising with deep generative priors that provably finds an approximation of the underlying image. 
As the influence of deep networks in denoising and inverse problems grows, it becomes increasingly important to understand their performance at a theoretical level.  Given that most optimization approaches for deep learning are first-order gradient methods, a justification is needed for why they do not get stuck in local minima.  

The most related work that establishes theoretical reasons for why gradient methods might succeed when using deep generative priors for solving inverse problems, is \citep{hand_global_2017}.  In it, the authors establish global favorability for optimization of a $\ell_2$-loss function under a random neural network model.  Specifically, they show existence of a descent direction outside a ball around the global optimizer and a negative multiple of it in the latent space of the generative model.  This work does not justify why the one spurious point is avoided by gradient descent, nor does it provide a specific algorithm which provably estimates the global minimizer, nor does it provide an analysis of the robustness of the problem with respect to noise.  This work was subsequently extended to include the case of generative convolutional neural networks by \cite{ma2018invertibility}, but that work too does not prove convergence of a specific algorithm.




\paragraph{Contributions:}
The goal of this paper is to analytically quantify the denoising performance of deep-prior based denoisers.
Specifically, we characterize the performance of two simple and efficient algorithms for denoising based on a $d$-layer generative neural network
$G\colon \R^k \to \R^{n}$, with $k<n$.

We first provide a simple result for an encoder-generator network $G( E(y) )$ where $E \colon \R^n \to \R^k$ is an encoder network.
We show that if we pass noise through an encoder-decoder network $G( E(y) )$ that acts as the identity on a class of images of interest, then it reduces the random noise by $O(k/n)$. 
This result requires no assumptions on the weights of the network.

The second and main result of our paper pertains to denoising by optimizing over the latent code of a generator network with random weights.
We propose a gradient method 
that attempts to minimize the least-squares loss $f(x) = \frac{1}{2} \norm{G(x) - y}^2$ between the noisy image $y$ and an image in the range of the generator, $G(x)$. Even though $f$ is non-convex, we show that a gradient method yields an estimate $\hat x$ obeying
\begin{align*}
\norm{G(\hat x) - y_\ast}^2
\lesssim \sigma^2 \frac{k}{n},
\end{align*}
with high probability, where the notation $\lesssim$ absorbs a constant factor depending on the number of layers of the network, and its expansitivity, as discussed in more detail later.
Our result shows that the denoising rate of a deep generator based denoiser is optimal in terms of the dimension of the latent representation.
We also show in numerical experiments, that this rate---shown to be analytically achieved for random priors---is also experimentally achieved for priors learned from real imaging data.

\paragraph{Related work:}
We hasten to add that a close theoretical work to the question considered in this paper is the paper~\citep{bora_compressed_2017}, which solves a noisy compressive sensing problem with generative priors by minimizing an $\ell_2$-loss. Under the assumption that the network is Lipschitz, they show that if the global optimizer can be found, which is in principle NP-hard, then a signal estimate is recovered to within the noise level.  While the Lipschitzness assumption is quite mild, the resulting theory does not provide justification for why global optimality can be reached.


\section{Background on denoising with classical and deep-learning based methods}

As mentioned before, image denoising relies on modeling or prior assumptions on the image $y_\ast$. For example, suppose that the image $y_\ast$ lies in a $k$-dimensional subspace of $\mathbb R^n$ denoted by $\mathcal Y$.
Then we can estimate the original image by finding the closest point in $\ell_2$-distance to the noisy observation $y$ on the subspace $\mathcal Y$.
The corresponding estimate, denoted by $\hat y$, obeys
\begin{align}
\label{eq:subspacerate}
\norm{\hat y - y_\ast}^2 \lesssim \sigma^2 \frac{k}{n},
\end{align}
with high probability (throughout, $\norm{\cdot}$ denotes the $\ell_2$-norm).
Thus, the noise energy is reduced by a factor of $k/n$ over the trivial estimate $\hat y = y$ which does not use any prior knowledge of the signal.
The denoising rate~\eqref{eq:subspacerate} shows that the more concise the image prior or image representation (i.e., the smaller $k$), the more noise can be removed. If on the other hand the image model (the subspace, in this example) does not include the original image $y_\ast$, then the error bound~\eqref{eq:subspacerate} increases, as we would remove a significant part of the signal along with noise when projecting onto the range of the image prior.
Thus a concise and accurate model is crucial for denoising.

Real world signals rarely lie in \emph{a priori} known subspaces, and the last few decades of image denoising research have developed sophisticated algorithms based on accurate image models.
Examples include algorithms based on sparse representations in overcomplete dictionaries such as wavelets~\citep{donoho_de-noising_1995} and curvelets~\citep{starck_curvelet_2002}, and algorithms based on exploiting self-similarity within images~\citep{dabov_image_2007}.
A prominent example of the former class of algorithms is the (state-of-the-art) BM3D~\citep{dabov_image_2007} algorithm. 
However, the nuances of real world images are difficult to describe with handcrafted models. Thus, starting with the paper~\citep{elad_image_2006} that proposes to learn sparse representation based on training data,
it has become common to learn concise representation for denoising (and other inverse problems) from a set of training images.

\citet{burger_image_2012} applied deep networks to the denoising problem, by training a plain neural network on a large set of images.
Since then, deep learning based denoisers~\citep{zhang_beyond_2017} have set the standard for denoising. The success of deep network priors can be attributed to their ability to efficiently represent and learn realistic image priors, for example via auto-decoders~\citep{hinton_reducing_2006} and generative adversarial models~\citep{goodfellow_generative_2014}.
Over the last few years, the quality of deep priors has significantly improved \citep{Karras17, DIP,heckel_deep_2019}. As this field matures, priors will be developed with even smaller latent code dimensionality and more accurate approximation of natural signal manifolds. Consequently, the representation error from deep priors will decrease, and thereby enable even more powerful denoisers.


\section{Denoising with a neural network with an hourglass structure}

Perhaps the most straight-forward and classical approach to using deep networks for denoising is to train a deep network with an autoencoder or hourglass structure end-to-end to perform denoising.   
An autoencoder compresses data from the input layer into a low-dimensional code, and then generates an output from that code. In this section, we analyze such networks from the perspective of denoising.

Specifically, we show mathematically that a simple model for neural networks with an hourglass structure achieves optimal denoising rates, as given by the dimensionality of the low-dimensional code.
An autoencoder $H(x) = G( E(x) )$ 
consists of an encoder network $E \colon \reals^n \to \reals^k$ mapping an image to a low-dimensional latent representation, and a decoder or generator network 
$G \colon \reals^k \to \reals^n$
mapping the latent representation to an image. 
To see that the size of the low-dimensional code, $k$, determines the denoising rate, consider a simple one-layer encoder and multilayer decoder of the form
\begin{align}
\label{eq:encoderdecoder}
E(y)
=
\relu(W'y)
,
\quad
G(x) = \relu(\Win{d} \ldots \relu (\Win{2} \relu(\Win{1} x)) \ldots ),
\end{align}
where $\relu(x) = \max(x, 0)$ applies entrywise,
$W'$ are the weights of the encoder, 
$W_i \in \R^{n_i \times n_{i-1}}$, are the weights in the $i$-th layer of the decoder, and $n_i$ is the number of neurons in the $i$th layer.

Typically, the autoencoder network $H$ is trained such that 
$H(y) \approx y$ for some class of signals of interest (say, natural images).
The following proposition shows that the structure of the network alone guarantees that an autoencoder ``filters out'' most of the noise. 

\begin{proposition}
\label{prop:autodec}
Let $H = G \circ E$ be an autoencoder of the form~\eqref{eq:encoderdecoder} and note that it is piecewise linear, i.e., $H(y) = Uy$ in a region around $y$. 
Suppose that $\norm{U}^2 \leq  2$ for all such  regions. 
Let $\noise$ be Gaussian noise with covariance matrix $\sigma I$, $\sigma > 0$. Then, provided that 
$ k\cdot 32 \log(2n_1 n_2 \ldots n_d)   \leq n$, we have 
with probability at least $1 - 2 e^{-k \log(2n_1 n_2 \ldots n_d)}$, that 
\[
\norm[2]{H(\eta)}^2 \leq  
5\frac{k}{n} \log(2n_1 n_2 \ldots n_d) \norm[2]{\eta}^2.
\]
\end{proposition}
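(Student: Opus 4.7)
The plan rests on three observations: (i) on each piecewise-linear region of $H$, the matrix $U$ has rank at most $k$ because the encoder factors through $\R^k$; (ii) by hypothesis $\norm{U}^2 \le 2$; and (iii) the number of such linear regions can be bounded polynomially in the layer widths. Together these reduce the claim to a chi-squared tail bound union-bounded over all regions.

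On a given region, write $U = M D' W'$ where $D' \in \R^{k\times k}$ is the diagonal $\{0,1\}$-matrix of encoder activations and $M \in \R^{n\times k}$ is the linear action of the decoder on its own sub-region. Then $U$ has rank at most $k$, and if $V \subseteq \R^n$ denotes its row-span,
\[
\norm{H(\noise)}^2 = \norm{U\noise}^2 \le \norm{U}^2 \norm{P_V\noise}^2 \le 2\norm{P_V\noise}^2,
\]
where $P_V$ is orthogonal projection onto $V$. Since $\dim V \le k$ and $\noise \sim \calN(0,\sigma I)$, the variable $\norm{P_V\noise}^2/\sigma$ is stochastically dominated by a $\chi^2_k$, so a Laurent--Massart tail bound gives $\PR{\norm{P_V\noise}^2 \ge \sigma k \tau} \le e^{-c k \tau}$ for any $\tau$ above an absolute constant.

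Next I count the realized subspaces $V$ via the linear regions of $H$. The encoder's $k$ pre-activations are $k$ hyperplanes in $\R^n$, producing at most $2^k$ sign patterns. Within any such region, $y$ enters the decoder only through the $k$-vector $W'y$, so every decoder pre-activation is a hyperplane in this $k$-dimensional variable. By Zaslavsky's arrangement bound, the $n_i$ hyperplanes introduced by decoder layer $i$ cut the relevant $k$-dimensional domain into at most $\sum_{j=0}^{k} \binom{n_i}{j} \le n_i^k$ pieces (with the degenerate case $n_i < k$ absorbed into constants). Multiplying across the $d$ decoder layers yields at most $(2\, n_1 n_2 \cdots n_d)^k$ linear regions of $H$, hence at most that many distinct row-spans $V$.

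Choosing $\tau$ proportional to $\log(2 n_1 \cdots n_d)$ and union-bounding the chi-squared tail across these regions gives
\[
\norm{U\noise}^2 \;\le\; C \sigma k \log(2 n_1 \cdots n_d)
\]
with probability at least $1 - e^{-k \log(2 n_1 \cdots n_d)}$. A separate one-sided chi-squared bound yields $\norm{\noise}^2 \ge n\sigma/2$, which under the hypothesis $32 k \log(2 n_1 \cdots n_d) \le n$ holds with matching probability. Taking the ratio and tracking constants produces the claimed $5(k/n)\log(2 n_1 \cdots n_d)$ factor. The main technical obstacle is the region-counting step: I have to argue carefully that, although each decoder pre-activation is a priori a hyperplane in the ambient $\R^n$, it depends on $y$ only through $W'y \in \R^k$, so the effective dimension for the Zaslavsky bound is $k$ rather than $n$ at every layer beyond the encoder. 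This is exactly what brings the layer widths into the logarithm rather than appearing as exponents.
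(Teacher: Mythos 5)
Your proposal is correct and follows essentially the same route as the paper: decompose $H$ into piecewise-linear maps $U$ of rank at most $k$, use the operator-norm hypothesis to pass to a rank-$k$ projection of the noise, count the linear regions by conditioning layer-by-layer on the earlier sign patterns (so each layer's arrangement lives in $\R^k$, giving the $2^k n_1^k\cdots n_d^k$ bound the paper derives via its Lemma~\ref{lem:signpatterns}), and finish with a $\chi^2$ tail bound union-bounded over regions together with a lower tail on $\norm{\eta}^2$ (which the paper packages as Lemma~\ref{lem:noise-projection}). The only differences are cosmetic: you invoke Zaslavsky's arrangement bound directly and split the Laurent--Massart step into two chi-squared estimates rather than citing a single projection lemma.
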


Note that the assumption $\norm{U}^2 \leq 2$ implies that
$\norm[2]{H(y)}^2 / \norm[2]{y}^2 \leq 2$ for all $y$. This in turn guarantees that the autoencoder does not ``amplify'' a signal too much.
 Note that we envision an autoencoder that is trained such that it obeys $H(y) \approx y$ for $y$ in a class of images. 
 The proposition would then justify why the the feedforward network reduces noise by $O(k/n)$. 

The proof of Proposition~\ref{prop:autodec}, contained in the appendix, shows that $H$ lies in the range of a union of $k$-dimensional subspaces and then uses a standard concentration argument showing that the union of those subspaces can represent no more than a fraction of $O(k/n)$ of the noise.

In the remainder of the paper we shows that denoising via enforcing a generative prior gives us an analogous denoising rate.


\section{\label{sec:erm}Denoising via enforcing a generative model}

We consider the problem of estimating a vector $y_\ast \in \R^{n}$
from a noisy observation $y = y_\ast + \noise$.
We assume that the vector $y_\ast$ belongs to the range of a $d$-layer generative neural network $G\colon \R^k \to \R^{n}$, with $k<n$.  That is, $y_\ast = G(x_\ast)$ for some $x_\ast \in \R^k$.
We consider a generative network of the form
\[
G(x) = \relu(\Win{d} \ldots \relu (\Win{2} \relu(\Win{1} \xo)) \ldots ),
\]
where $\relu(x) = \max(x, 0)$ applies entrywise,
$W_i \in \R^{n_i \times n_{i-1}}$, are the weights in the $i$-th layer,
$n_i$ is the number of neurons in the $i$th layer,
and the network is expansive in the sense that $k=n_0 < n_1 < \cdots < n_d = n$.
The problem at hand is:
Given the weights of the network $W_1\ldots W_d$ and a noisy observation $y$, obtain an estimate $\hat y$ of the original image $y_\ast$ such that $\norm{\hat y - y_\ast}$ is small and $\hat y$ is in the range of $G$.

\subsection{Enforcing a generative model}

As a way to solve the above problem, we first obtain an estimate of $\xo$, denoted by $\hat x$, and then estimate $y_\ast$ as $G(\hat x)$.
In order to estimate $\xo$, we minimize the loss
\begin{align*}
f(x) := \frac{1}{2} \norm{ G(x) - y }^2. 
\end{align*}
Since this objective is nonconvex, there is no \textit{a priori} guarantee of efficiently finding the global minimum. 
Approaches such as gradient methods could in principle get stuck in local minima, instead of finding a global minimizer that is close to $\xo$.

However, as we show in this paper, under appropriate conditions, a gradient method ---introduced next---finds a point that is very close to the original latent parameter $\xo$, with the distance to the parameter $\xo$ controlled by the noise.
In order to state the algorithm, we first introduce a useful quantity.
For analyzing which rows of a matrix $W$ are active when computing $\relu(Wx)$, we let
\[
\Wpx = \diag(Wx>0) W.
\]
For a fixed weight matrix $W$, the matrix $\Wpx$ zeros out the rows of $W$ that do not have a positive dot product with $x$.  Alternatively put, $\Wpx$ contains weights from only the neurons that are active for the input $x$.  We also define $\Wnpx{1} = (W_1)_{+,x} =  \diag(W_{1}x>0) W_1$ and
\begin{align*}
\Wipx = \diag(\Wi \Wnpx{i-1} \cdots \Wnpx{2} \Wnpx{1} x > 0) \Wi.
\end{align*}
The matrix $\Wipx$ consists only of the weights of the neurons in the $i$th layer that are active if the input to the first layer is  $x$.


\begin{figure}
\centering
\begin{tikzpicture}[scale=1]
\begin{groupplot}[
group style={group size=2 by 1,horizontal sep=1.4cm},
xlabel=$x_1$,
ylabel=$x_2$,
zlabel=$f(x)$,
width=0.5\textwidth,/tikz/font=\small,colormap/cool]
    \nextgroupplot[]
	 \addplot3[surf,shader=faceted] table[x index=1,y index=0,z index=2] {./fig/surf.dat};
  \end{groupplot}
\end{tikzpicture}
\caption{\label{fig:losssurface} Loss surface $f(x) = \norm{G(x) - G(\xo)}$, $\xo=[1,0]$, of an expansive network $G$ with ReLU activation functions with $k=2$ nodes  in the input layer and $n_2=300$ and  $n_3=784$ nodes in the hidden and output layers, respectively, with random Gaussian weights in each layer.
The surface has a critical point near $-\xo$, a global minimum at $\xo$, and a local maximum at $0$.
}
\end{figure}
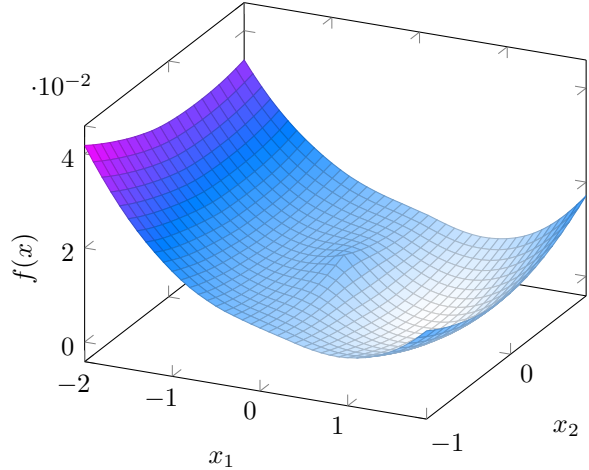%

We are now ready to state our algorithm: a gradient method with a tweak informed by the loss surface of the function to be minimized.
Given a noisy observation $y$, the algorithm  starts with an arbitrary initial point $x_0 \neq 0$.
At each iteration $i=0,1,\ldots$, the algorithm  computes the step direction
\[
\stepdir{x_i} = (\PiWdi)^t (G(x_i) - y),
\]
which is equal to the gradient of $f$ if $f$ is differentiable at $x_i$.
It then takes a small step opposite to $\stepdir{x_i}$.
The tweak is that before each iteration, the algorithm checks whether $f(-x_i)$ is smaller than $f(x_i)$, and if so, negates the sign of the current iterate $x_i$.

This tweak is informed by the loss surface.
To understand this step, it is instructive to examine the loss surface for the \emph{noiseless} case in Figure~\ref{fig:losssurface}.
It can be seen that while the loss function has a \emph{global} minimum at $\xo$, it is relatively flat  close to $-\xo$.  In expectation, there is a critical point that is a negative multiple of $\xo$ with the property that the curvature in the $\pm \xo$ direction is positive, and the curvature in the orthogonal directions is zero. Further, around approximately $-\xo$, the loss function is larger than around the optimum $\xo$.
As a simple gradient descent method (without the tweak) could potentially get stuck in this region, the negation check provides a way to avoid converging to this region.
Our algorithm is formally summarized as Algorithm~\ref{alg} below.

\begin{algorithm}[h!]
\caption{Gradient method}
\label{alg}
\begin{algorithmic}[1]
\REQUIRE Weights of the network $W_i$, noisy observation $y$, and step size $\alpha > 0$
\STATE Choose an arbitrary initial point $x_0 \in \mathbb{R}^k\setminus \{0\}$
\FOR {$i = 0, 1, \ldots$}
\IF {$f(-x_{i}) < f(x_{i})$}  \label{alg:st3}
\STATE $\tilde{x}_{i} \gets - x_{i}$;
\ELSE
\STATE $\tilde{x}_i \gets x_i$;
\ENDIF
\STATE   Compute $v_{\tilde{x}_i} \in \partial f(\tilde{x}_i)$, in particular, if $G$ is differentiable at $\tilde{x}_i$, then set $v_{\tilde{x}_i} = \tilde{v}_{\tilde{x}_i}$,  where
$$
\tilde{v}_{\tilde{x}_i} := (\PiWditilde)^t (G(\tilde{x}_i) - y);
$$
\STATE $x_{i+1} \gets \tilde{x}_i - \alpha {v}_{\tilde{x}_i}$;
\ENDFOR
\end{algorithmic}
\end{algorithm}

Other variations of the tweak are also possible.  For example, the negation check in Step~\ref{alg:st3} could be performed after a convergence criterion is satisfied, and if a lower objective is achieved by negating the latent code, then the gradient descent can be continued again until a convergence criterion is again satisfied.

\section{Main results}

For our analysis, we consider a fully-connected generative network $G\colon \R^k \to \R^n$ with Gaussian weights and no bias terms. Specifically, we assume that the weights $W_i$ are independently and identically distributed as~$\mathcal N(0,2/n_i)$, but do not require them to be independent across layers.
Moreover, we assume that the network is sufficiently \textit{expansive}:

\newtheorem*{block}{Expansivity condition}
\begin{block} 
We say that the expansivity condition with constant $\epsilon>0$ holds if
\[
n_i \geq c \epsilon^{-2} \log(1/\eps) n_{i-1} \log n_{i-1},
\quad
\text{for all }
i,
\]
where $c$ is a particular numerical constant.
\end{block}
\noindent
In a real-world generative network the weights are learned from training data, and are not drawn from a Gaussian distribution. Nonetheless, the motivation for selecting Gaussian weights for our analysis is as follows:
\begin{enumerate}
\item The empirical distribution of weights from deep neural networks often have statistics consistent with Gaussians.  AlexNet is a concrete example~\citep{Arora15}.
\item The field of theoretical analysis of recovery guarantees for deep learning is nascent, and Gaussian networks can permit theoretical results because of well developed theories for random matrices.
\item It is not clear which non-Gaussian distribution for weights is superior from the joint perspective of realism and analytical tractability.
\end{enumerate}

The network model we consider is fully connected.  We anticipate that the analysis of this paper can be extended to the case of generative convolutional neural networks.  This extension has already happened for theoretical results concerning the favorability of the optimization landscape for compressive sensing under generative priors~\citep{ma2018invertibility}, as mentioned previously. 

\noindent We are now ready to state our main result. 


\begin{theorem}
\label{thm:main-rescaled}
Consider a network with the weights in the $i$-th layer, $W_i \in \R^{n_i \times n_{i-1}}$, i.i.d.~$\mathcal N(0,2/n_i)$ distributed, and suppose that the network satisfies the expansivity condition for $\epsilon = K/d^{90}$. 
Also, suppose that the noise variance $\omega$, defined for notational convenience as 
\[
\omega
\defeq
\sqrt{ 18\sigma^2 \frac{k}{n}
\log(n_1^d n_2^{d-1}\ldots n_d)},
\]
obeys
\[
\omega \leq \frac{\norm{\xo} K_1}{d^{16}}.
\]
Consider the iterates of Algorithm~\ref{alg} with stepsize
$\alpha = K_4\frac{1}{d^2}$.
Then, there exists a number of steps $N$ upper bounded by
\[
N
\leq
K_2 d^{86}
\frac{f(x_0)}{ \norm{\xo}}
\]
such that after $N$ steps, the iterates of Algorithm~\ref{alg} obey
\begin{align}
\label{eq:errorxixo-rescaled}
\norm{x_i -\xo}
\leq
\frac{K_5}{d^{36}}  \norm{\xo} 
+
K_6 d^6 \omega,
\quad
\text{for all $i \geq N$},
\end{align}
with probability at least
$1-2e^{-2k \log n} - \sum_{i=2}^d 8 n_i e^{-K_7 n_{i-2} }
- 8 n_1 e^{-K_7  k  \log d / d^{180}}
$.
In addition, for all $i > N$, we have
\begin{align}
\|x_{i} - x_*\| &\leq (1 -  \alpha 7/8 )^{i - N} \|x_N - x_*\| + K_8 \omega \hbox{ and }  \label{GA:e68} \\
\|G(x_{i}) - G(x_*)\| &\leq 1.2 (1 -  \alpha 7/8 )^{i - N} \|x_N - x_*\| + 1.2 K_8 \omega, \label{GA:e74}
\end{align}
where $\alpha < 1$ is the stepsize of the algorithm.
Here, $K_1,K_2, \ldots$ are numerical constants, and $x_0$ is the initial point in the optimization.
\end{theorem}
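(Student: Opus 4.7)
}

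The plan is to analyze the deterministic gradient dynamics under high-probability concentration events for the random weights, combining a landscape analysis (descent direction outside a small ball around $x_*$, avoidance of the spurious region near $-\rho x_*$) with a careful propagation of noise through the layers. I would organize the argument in four blocks.

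\paragraph{Step 1: Concentration for the layerwise products.} First I would establish a uniform concentration statement for the random operators $\PiWdi$ and the inner products $(\PiWdi)^t \PiWdy$ around their Gaussian expectations. Specifically, under the expansivity condition with parameter $\epsilon = K/d^{90}$, and with probability at least $1 - \sum_i 8 n_i e^{-K_7 n_{i-2}} - 8 n_1 e^{-K_7 k \log d/d^{180}}$, one has for all $x,y \in \R^k$ simultaneously
\[
\bigl\|(\PiWdi)^t \PiWdy - Q_{x,y}\bigr\| \leq \epsilon',
\]
where $Q_{x,y}$ is a deterministic matrix with a closed form (involving products of $(\pi - \thetabar_i)/\pi$ factors and $\sin\thetabar_i/\pi$ sum terms) and $\epsilon'$ scales like a power of $d$ times $\epsilon$. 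This is the kind of \emph{weight distribution condition} / \emph{range restricted isometry} used in prior work and is what forces the expansivity hypothesis; I would derive it by a net argument over pairs $x/\|x\|, y/\|y\|$ in $S^{k-1}$ combined with Bernstein-type tail bounds on the relevant Gaussian quantities layer by layer.

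\paragraph{Step 2: Deterministic landscape on the concentration event.} On this event, I would decompose the step direction as $\stepdir{x} = h_{x,x_*} - \xi_x$, where $h_{x,x_*}$ is the deterministic gradient with $y$ replaced by $G(x_*)$, and $\xi_x := (\PiWdi)^t \eta$ is the noise-dependent part. From Step 1, $h_{x,x_*}$ is close (uniformly in $x$) to an explicit vector field depending only on $\|x\|$, $\|x_*\|$ and $\theta_{x,x_*}$. I would then show that this vector field has positive inner product with $x - x_*$ outside a small ball around $x_*$ and outside a small neighborhood of a unique spurious point $\zetacheck x_*$ with $\zetacheck < 0$; quantitatively, one gets
\[
\inner{h_{x,x_*}}{x - x_*} \geq c(d) \|x - x_*\|\,\|x\|
\]
for $x$ outside the two critical neighborhoods, with $c(d) \gtrsim 1/d^{K}$. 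The noise term $\xi_x$ is bounded uniformly by $\|\xi_x\| \lesssim \omega$ on a Gaussian-tail event, by rewriting $\xi_x = (\PiWdi)^t \eta$ and using that the product of $d$ rescaled Gaussian matrices has operator norm $O(1)$ with high probability; this directly yields the appearance of $\omega$ in the error bound.

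\paragraph{Step 3: Global behavior of Algorithm \ref{alg} and the sign-flip trick.} With the descent inequality from Step 2, I would prove that as long as the iterate is outside both critical neighborhoods, one step of size $\alpha = K_4/d^2$ strictly decreases $f$ by an amount $\gtrsim \alpha \|x - x_*\|\cdot\|x\|/d^K$, minus a noise term $O(\alpha \omega)$. This gives an $O(d^{86} f(x_0)/\|x_*\|)$ bound on the number of iterations before the iterate enters the $\zetacheck x_*$ or $x_*$ neighborhood. The sign-flip in line~\ref{alg:st3} of Algorithm~\ref{alg} is then used to rule out being trapped near $\zetacheck x_*$: one shows that $f(-x) < f(x)$ whenever $x$ is in a neighborhood of $\zetacheck x_*$ (using the closed form of $Q_{x,y}$ plus a noise correction bounded by $\omega \|x_*\|$), so after negation the iterate lies in a region where the descent bound of Step 2 applies. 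This portion yields the coarse bound~\eqref{eq:errorxixo-rescaled}.

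\paragraph{Step 4: Local linear convergence near $x_*$.} Once $x_i$ is in a ball of radius $O(\|x_*\|/d^{36})$ around $x_*$, I would prove that the deterministic part of $\stepdir{x}$ acts like a well-conditioned linear operator: on the concentration event, $h_{x,x_*} = M(x - x_*) + \text{(higher order)}$, where $M \succeq (1/2) I$ in the relevant region (its Hessian-like structure comes from the product $(\PiWdi)^t \PiWdo$ evaluated near the diagonal). Plugging this into $x_{i+1} = x_i - \alpha \stepdir{x_i}$ gives
\[
\|x_{i+1} - x_*\| \leq (1 - 7\alpha/8)\,\|x_i - x_*\| + \alpha K \omega,
\]
which iterates to~\eqref{GA:e68}. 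The companion bound~\eqref{GA:e74} on $\|G(x_i) - G(x_*)\|$ follows because $G$ is Lipschitz with constant close to $1$ on the concentration event (again a product-of-random-matrices bound), giving the factor $1.2$.

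\paragraph{Main obstacle.} The hardest technical piece is Step 2, and within it the uniform control over \emph{all} $x$ of the closeness between the random vector field $\stepdir{x}$ and its deterministic proxy, while simultaneously ruling out additional spurious critical points created by the noise $\eta$. The $d^{90}$ expansivity and $d^{16}$ noise conditions are essentially dictated by how the per-layer error amplifies when one composes $d$ concentration statements and then differentiates; getting the union bound to close with failure probability as small as stated requires the careful net/Lipschitz argument alluded to in Step 1, and then feeding that uniform bound into both the descent inequality and the sign-flip comparison in a way that is robust to an additive $O(\omega)$ perturbation.
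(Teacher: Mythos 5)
Your overall architecture (concentration $\to$ landscape $\to$ global descent $\to$ local linear rate) matches the paper's, and Steps~1 and~4 are essentially the WDC/Lipschitz arguments the paper uses. But there is a concrete gap between your Step~2 and Step~3.

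In Step~2 you assert the \emph{correlation inequality}
\[
\inner{h_{x,x_*}}{x - x_*} \geq c(d)\,\|x - x_*\|\,\|x\|
\]
for all $x$ outside small neighborhoods of $x_*$ and of the spurious point near $-\rho_d x_*$. This is not what the paper proves, and it is in fact false. Take $x = -t\,x_*$ with $0 < t < \rho_d$ (say $t = \rho_d/2$), which lies outside both critical neighborhoods. Then $\thetabar_0 = \pi$ forces the $\xi$-term in $h_x$ to vanish, and the explicit formula gives $h_x = \tfrac{1}{2^d}(\rho_d - t)\,x_*$, while $x - x_* = -(1+t)\,x_*$; hence $\inner{h_x}{x-x_*} = -\tfrac{1}{2^d}(\rho_d - t)(1+t)\,\|x_*\|^2 < 0$. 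So along the entire segment between the saddle at $0$ and the spurious minimum at $-\rho_d x_*$, the negative-gradient field points \emph{away} from $x_*$, and your inequality does not hold. The paper avoids this by proving instead a \emph{norm} lower bound: $\|h_x\| \gtrsim \beta 2^{-d}\max(\|x\|,\|x_*\|)$ whenever $x$ is outside a set $\setS_\beta$ that is covered by the two small balls; this controls $\|\stepdir{x}\|$ but deliberately does not assert any alignment with $x - x_*$.

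This matters because your Step~3 then claims that the step decreases $f$ by $\gtrsim \alpha\,\|x-x_*\|\,\|x\|/d^K$; a correlation bound of the form you wrote would only give a decrease of $\|x - x_*\|$, not of $f$, and in any case you cannot use it since it fails. What actually gives the $f$-decrease in the paper is the combination of (i) the lower bound on $\|\stepdir{x}\|$ outside $\setS_\beta$ and (ii) a mean-value estimate for the Clarke subdifferential of the piecewise-quadratic $f$ (together with a Lipschitz bound on $h$) showing that $f(x - \alpha\stepdir{x}) - f(x) \lesssim -\alpha\|\stepdir{x}\|^2$. You should replace the correlation inequality with that norm lower bound and add the Clarke/MVT step, since $f$ is not differentiable along ReLU boundaries and the naive one-step Taylor argument does not apply. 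Finally, you also omit the separate treatment of the ball around $0$: $h_x$ is not small there but the iterates must be shown to exit and never re-enter this region (the paper's Lemma~\ref{lem:notinzeroball}); your plan should include this before the descent-counting argument can close.
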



Our result guarantees that after polynomially many iterations (with respect to $d$), the algorithm converges linearly to a region satisfying
\[
\norm{x_i - \xo}^2
\lesssim
\sigma^2 \frac{k}{n},
\]
where the notation $\lesssim$ absorbs a factor logarithmic in $n$ and polynomial in $d$.
One can show that $G$ is Lipschitz in a region around $\xo$\footnote{The proof of Lipschitzness follows from applying the Weight Distribution Condition in Section~\ref{sec:WDC}.},
\[
\norm{G(x_i) - G(\xo)}^2
\lesssim
\sigma^2 \frac{k}{n}.
\]
Thus, the theorem guarantees that our algorithm yields the denoising rate of $\sigma^2 k/n$, and, as a consequence, denoising based on a generative deep prior provably reduces the energy of the noise in the original image by a factor of $k/n$.  

In the case of $\sigma=0$, the theorem guarantees convergence to the global minimizer $\xo$.   We note that the intention of this paper is to show rate-optimality of recovery with respect to the noise power, the latent code dimensionality, and the signal dimensionality.  As a result, no attempt was made to establish optimal bounds with respect to the scaling of constants or to powers of $d$.  The bounds provided in the theorem are highly conservative in the constants and dependency on the number of layers, $d$, in order to keep the proof as simple as possible.  Numerical experiments shown later reveal that the parameter range for successful denoising are much broader than the constants suggest. As this result is the first of its kind for rigorous analysis of denoising performance by deep generative networks, we anticipate the results can be improved in future research, as has happened for other problems, such as sparsity-based compressed sensing and phase retrieval.  

Finally, we remark that Theorem~\ref{thm:main-rescaled} can be generalized to the case where $y_\ast$ only approximately lies in the range of the generator, i.e., 
$G(x_\ast) \approx y_\ast$. Specifically, if $\norm{G(x_\ast) - y_\ast }$ is sufficiently small, the error induced by this perturbation is proportional to \norm{G(x_\ast) - y_\ast }.


\subsection{\label{sec:WDC}
The Weight Distribution Condition (WDC)}

To prove our main result, we make use of a deterministic condition on $G$, called the Weight Distribution Condition (WDC), and then show that Gaussian $W_i$, as given by the statement of Theorem~\ref{thm:main-rescaled} are such that $W_i / \sqrt{2}$ satisfies the WDC with the appropriate probability for all $i$, provided the expansivity condition holds.
Our main result, Theorem~\ref{thm:main-rescaled}, continues to hold for any weight matrices such that $W_i/\sqrt{2}$ satisfy the WDC.

The condition is on the spatial arrangement of the network weights within each layer.
We say that the matrix $W \in \R^{n \times k}$ satisfies the \textit{Weight Distribution Condition} with constant $\eps$ if for all nonzero $x,y \in \R^k$,
\begin{align}
\Bigl \| \sum_{i=1}^n 1_{\innerprod{w_i}{x}>0} 1_{\innerprod{w_i}{y}>0} \cdot w_i w_i^t  - \Qxy \Bigr \| \leq \eps,  \text{ with } \Qxy = \frac{\pi - \theta_0}{2 \pi} I_k + \frac{\sin \theta_0}{2\pi}  \Mxyhat, \label{WDC}
\end{align}
where $w_i \in \R^k$ is the $i$th row of $W$; $\Mxyhat \in \R^{k \times k}$ is the matrix\footnote{A formula for $\Mxyhat$ is as follows.  If $\theta_0 = \angle(\xhat, \yhat) \in (0, \pi)$ and $R$ is a rotation matrix such that $\xhat$ and $\yhat$ map to $e_1$ and $\cos \theta_0 \cdot e_1 + \sin \theta_0 \cdot e_2$ respectively, then $\Mxyhat = R^t \begin{pmatrix} \cos \theta_0 & \sin \theta_0 & 0 \\ \sin \theta_0 & - \cos \theta_0 & 0 \\ 0 & 0 & 0_{k-2} \end{pmatrix} R$, where $0_{k-2}$ is a $k-2 \times k-2$ matrix of zeros.  If $\theta_0 = 0$ or $\pi$, then $\Mxyhat = \xhat \xhat^t$ or $- \xhat \xhat^t$, respectively.} such that $\xhat \mapsto \yhat$, $\yhat \mapsto \xhat$, and $z \mapsto 0$ for all $z \in \Span(\{x,y\})^\perp$;  $\xhat = x/\|x\|_2$  and $\yhat = y /\|y\|_2$;  $\theta_0 = \angle(x, y)$; and $1_S$ is the indicator function on $S$.  The norm in the left hand side of \eqref{WDC} is the spectral norm.  Note that an elementary calculation\footnote{To do this calculation, take $x=e_1$ and $y = \cos \theta_0\cdot e_1 + \sin \theta_0 \cdot e_2$ without loss of generality.  Then each entry of the matrix can be determined analytically by an integral that factors in polar coordinates.} gives that $\Qxy = \E[\sum_{i=1}^n 1_{\innerprod{w_i}{x}>0} 1_{\innerprod{w_i}{y}>0} \cdot w_i w_i^t ]$ for $w_i \sim \mathcal{N}(0, I_k/n)$.  As the rows $w_i$ correspond to the neural network weights of the $i$th neuron in a layer given by $W$, the WDC provides a deterministic property under which the set of neuron weights within the layer given by $W$ are distributed approximately like a Gaussian.  The WDC could also be interpreted as a deterministic property under which the neuron weights are distributed approximately like a high dimensional vector chosen uniformly from a sphere of a particular radius.  Note that if $x=y$, $\Qxy$ is an isometry up to a factor of $1/2$.

\subsection{Sketch of proof of Theorem~\ref{thm:main-rescaled}}

The proof relies on a characterization of the loss surface. 
We show that outside of two balls around $x = \xo$ and $x = - \rho_d \xo$, with $\rho_d$ a constant defined in the proof, the direction chosen by the algorithm is a descent direction, with high probability.

We show that the stepdirection $\stepdir{x}$ concentrates around a particular $\hx \in \R^k$, that is a continuous function of nonzero $x,\xo$ and is zero only at $x = \xo$, $x = - \rho_d \xo$, and $0$, using a concentration argument similar to~\citep{hand_global_2017}.
Around $x = \xo$, the loss function has a global minimum, close to $0$ it has a saddle point, and close to $x = - \rho_d \xo$
potentially a local minimum. 
In a nutshell, we show that 
i) the algorithm moves away from the saddle point at $0$, and
ii) the algorithm escapes the local minimum close to $x = - \rho_d \xo$ with the twist in Steps 3-5 of the algorithm. 
Finally, the iterates end up close to the $\xo$.

The proof is organized as described next and as illustrated in Figure~\ref{sec:ProofLogic}. 
The algorithm is initialized at an arbitrary point; for example close to $0$. Algorithm~\ref{alg} moves away from $0$, at least till its iterates are outside the gray ring, as $0$ is a local maximum; and once an iterate $x_i$ leaves the gray ring around $0$, all subsequent iterates will never be in the white circle around $0$ again (see Lemma~\ref{lem:notinzeroball} in the supplement).
Then the algorithm might move towards $-\rho_d x_\ast$, but once it enters the dashed ball around $-\rho_d x_\ast$, it enters a region where the function value is strictly larger than that of the dashed ball around $x_\ast$, by Lemma~\ref{GA:le14} in the supplement.
Thus steps 3-5 of the algorithm will ensure that the next iterate $x_i$ is in the dashed ball around $x_\ast$.
From there, the iterates will move into the region $\mathcal S_\beta^+$, since outside of $\mathcal S_\beta^+ \cup \mathcal S_\beta^-$ the algorithm chooses a descent direction in each step (see the argument around equation~\eqref{GA:e21} in the supplement).
The region $\mathcal S_\beta^+$ is covered by a ball of radius $r$, by Lemma~\ref{lemma:Sepsst} in the supplement, determined by the noise and $\epsilon$.
This establishes the bound~\eqref{eq:errorxixo-rescaled} in the theorem. 

The proof proceeds be showing that within a ball around $\xo$, the algorithm then converges linearly, which establishes equations~\eqref{GA:e68} and \eqref{GA:e74}.

\usetikzlibrary{patterns}
\begin{figure}
\begin{center}
\begin{tikzpicture}[>=latex,scale=1.4]

\draw[pattern=north east lines,pattern color=gray!50] (3.5,0) circle (0.8);
\draw[pattern=north west lines,pattern color=gray!50] (3.5,0) circle (0.6);
\draw[fill] (3.5,0) circle (1.5pt);
\node[below, inner sep =3.5pt] at (3.5,0) {$x_\ast$};
\draw[dashed] (3.5,0) circle (1.3);

\draw[gray,pattern=north east lines,pattern color=gray!50] (-3,0) circle (1.2);
\draw[pattern=north west lines,pattern color=gray!50](-3,0) circle [x radius=0.5, y radius=1.2];
\draw[fill] (-3,0) circle (1.5pt);
\node[below, inner sep =3.5pt] at (-3,0) {$-\rho_dx_\ast$};
\draw[dashed] (-3,0) circle (1.3);

\draw[fill,gray!30] (0,0) circle (1);
\draw[fill,white] (0,0) circle (0.5);
\draw[fill] (0,0) circle (1.5pt);
\node[below, inner sep =3.5pt] at (0,0) {$0$};

\node at (3.5,0.3) {$\mathcal S_\beta^+$};
\node at (-3,0.3) {$\mathcal S_\beta^-$};

\draw[->] (-0,0.2) -- (-0.4,0.2);
\draw[->] (-0.4,0.2) -- (-0.8,0.2);
\draw[->] (-0.8,0.2) -- (-1.2,0.2);
\draw[->] (-1.2,0.2) -- (-1.6,0.2);
\draw[->] (-1.6,0.2) -- (-2,0.2);
\draw[->] (-2,0.2) .. controls (0.25,1.5) .. (2.5,0.2);
\draw[->] (2.5,0.2) -- (2.9,0.2);
\draw[->] (2.9,0.2) -- (3.3,0.2);
\end{tikzpicture}
\end{center}
\vspace{-0.3cm}
\caption{
\label{sec:ProofLogic}
Logic of the proof, explained in the text.
}
\end{figure}
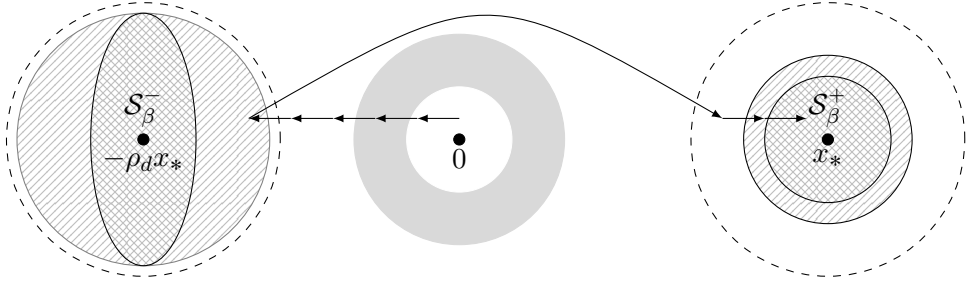


\section{Applications to Compressed Sensing}

In this section we briefly discuss another important scenario to which our results apply to, namely regularizing inverse problems using deep generative priors.
Approaches that regularize inverse problems using deep generative models~\citep{bora_compressed_2017} have empirically been shown to improve over sparsity-based approaches, see~\citep{lucas_using_2018} for a review for applications in imaging, and~\citep{mardani_recurrent_2017} for an application in Magnetic Resonance Imaging showing a significant performance improvement over conventional methods.

Consider an inverse problem, where the goal is to reconstruct an unknown vector $y_\ast \in \R^n$ from $m<n$ noisy linear measurements:
\[
z= A y_\ast + \eta
\quad \in \R^m,
\]
where $A \in \R^{m\times n}$ is called the measurement matrix and $\eta$ is zero mean Gaussian noise with covariance matrix $\sigma^2/n I$, as before.
As before, assume that $y_\ast$ lies in the range of a generative prior $G$, i.e., $y_\ast = G(\xo)$ for some $\xo$.
As a way to recover $\xo$, consider minimizing the empirical risk objective $f(x) = \frac{1}{2} \norm{A G(x) - z}$, using Algorithm~\ref{alg}, with Step~6 substituted by
$\stepdir{x_i} = (A\PiWdi)^t (AG(x_i) - y)$, to account for the fact that measurements were taken with the matrix $A$.

Suppose that $A$ is a random projection matrix, for concreteness assume that $A$ has i.i.d.~Gaussian entries with variance $1/m$.
One could prove an analogous result as Theorem~\ref{thm:main-rescaled}, but with $\omega = \sqrt{ 18\sigma^2 \frac{k}{m}
\log(n_1^d n_2^{d-1}\ldots n_d)}$, (note that $n$ has been replaced by $m$).  This extension shows that, provided $\epsilon$ is chosen sufficiently small, that our algorithm yields an iterate $x_i$ obeying
\[
\norm{G(x_i) - G(\xo)}^2
\lesssim
\sigma^2 \frac{k}{m},
\]
where again $\lesssim$ absorbs factors logarithmic in the $n_i$'s, and polynomial in $d$.  Proving this result would be analogous to the proof of Theorem~\ref{thm:main-rescaled}, but with the additional assumption that the sensing matrix $A$ acts like an isometry on the union of the ranges of $\PiWdi$, analogous to the proof in \citep{hand_global_2017}.
This extension of our result shows that Algorithm~\ref{alg} enables solving inverse problems under noise efficiently, and quantifies the effect of the noise.

We hasten to add that the paper~\citep{bora_compressed_2017} also derived an error bound for minimizing empirical loss.
However, the corresponding result (for example Lemma~4.3) differs in two important aspects to our result.  First, the result in~\citep{bora_compressed_2017} only makes a statement about the \emph{minimizer} of the empirical loss and does not provide justification that an \emph{algorithm} can efficiently find a point near the global minimizer.  As the program is non-convex, and as non-convex optimization is NP-hard in general,  the empirical loss could have local minima at which algorithms get stuck.  In contrast, the present paper presents a specific practical algorithm and proves that it finds a solution near the global optimizer regardless of initialization.
Second, the result in~\citep{bora_compressed_2017} considers arbitrary noise $\noise$ and thus can not assert denoising performance.  
In contrast, we consider a random model for the noise, and show the denoising behavior that the resulting error is no more than $O(k/n)$,
as opposed to $\norm{\noise}^2 \approx O(1)$, which is what we would get from direct application of the result in~\citep{bora_compressed_2017}.

\section{Experimental results}

In this section we provide experimental evidence that corroborates our theoretical claims that denoising with deep priors achieves a denoising rate proportional to $\sigma^2 k/n$.
We focus on denoising by enforcing a generative prior, and consider both a synthetic, random prior, as studied theoretically in the paper, as well as a prior learned from data.
All our results are reproducible with the code provided in the supplement.

\subsection{Denoising with a synthetic prior}

We start with a synthetic generative network prior with ReLU-activation functions, and draw its weights independently from a Gaussian distribution.
We consider a two-layer network with $n=1500$ neurons in the output layer, $500$ in the middle layer, and vary the number of input neurons, $k$, and the noise level, $\sigma$.
We next present simulations showing that if $k$ is sufficiently small, our algorithm achieves a denoising rate proportional to $\sigma k/n$ as guaranteed by our theory.

Towards this goal, we generate Gaussian inputs $\xo$ to the network and observe the noisy image $y = G(\xo) + \noise$, $\noise \sim \mathcal N(0,\sigma^2/n I)$.
From the noisy image, we first obtain an estimate $\hat x$ of the latent representation by running Algorithm~\ref{alg} until convergence, and second we obtain an estimate of the image as $\hat y = G(\hat x)$.
In the left and middle panel of Figure~\ref{fig:plots}, we depict the normalized mean squared error of the latent representation,
$\MSE(\hat x, \xo)$, and the mean squared error in the image domain, $\MSE(G(\hat x), G(\xo))$, where we defined
$\MSE(z, z') = \norm{z - z'}^2$.
For the left panel, we fix the noise variance to $\sigma^2=0.25$, and vary $k$, and for the middle panel we fix $k=50$ and vary the noise variance.
The results show that, if the network is sufficiently expansive, guaranteed by $k$ being sufficiently small, then in the noiseless case ($\sigma^2 = 0$), the latent representation and image are perfectly recovered. In the noisy case, we achieve a MSE proportional to $\sigma^2k/n$, both in the representation and image domains.

We also observed that for the problem instances considered here, the negation trick in step 3-4 of Algorithm~\ref{alg} is often not necessary, in that even without that step the algorithm typically converges to the global minimum.
Having said this, in general the negation step is necessary,  since there exist problem instances that have a local minimum opposite of $\xo$.

\subsection{Denoising with a learned prior}

We next consider a prior learned from data. Technically, for such a prior our theory does not apply since we assume the weights to be chosen at random.
However, the numerical results presented in this section show that even for the learned prior we achieve the rate predicted by our theory pertaining to a random prior.
Towards this goal, we consider a  fully-connected autoencoder parameterized by $k$, consisting of an decoder and encoder with ReLU activation functions and fully connected layers. We choose the number of neurons in the three layers of the encoder as $784,400,k$, and those of the decoder as $k,400,784$.
We set $k = 10$ and $k=20$ to obtain two different autoencoders.
We train both autoencoders on the MNIST~\citep{lecun_gradient-based_1998} \emph{training set}.

We then take an image $y_\ast$ from the MNIST \emph{test set}, add Gaussian noise to it, and denoise it using our method based on the learned decoder-network $G$ for $k=10$ and $k=20$.
Specifically, we estimate the latent representation $\hat x$ by running Algorithm~\ref{alg}, and then set $\hat y = G(\hat x)$. See Figure~\ref{fig:examples} for a few examples demonstrating the performance of our approach for different noise levels.

\begin{figure}
\begin{center}
\begin{tikzpicture}[>=latex,scale=0.8]

\draw[->]  (-7,0.7) node[left] {noisy} -- (-6.6,0.7);
\draw[->]  (-7,-0.5) node[left] {denoised} -- (-6.6,-0.5);
\draw[->]  (-7,1.5) node[left] {noise variance} -- (-6.6,1.5);

\foreach \x/\xtext in {-5.9/0, -4.59/0.3, -3.28/0.6, -1.97/0.9, -0.65/1.2, 0.65/1.5, 1.97/1.8, 3.28/2.1, 4.59/2.4, 5.9/2.7}
\node at (\x,1.5) {\xtext};

\node at (0,0){ \includegraphics[width=10.5cm]{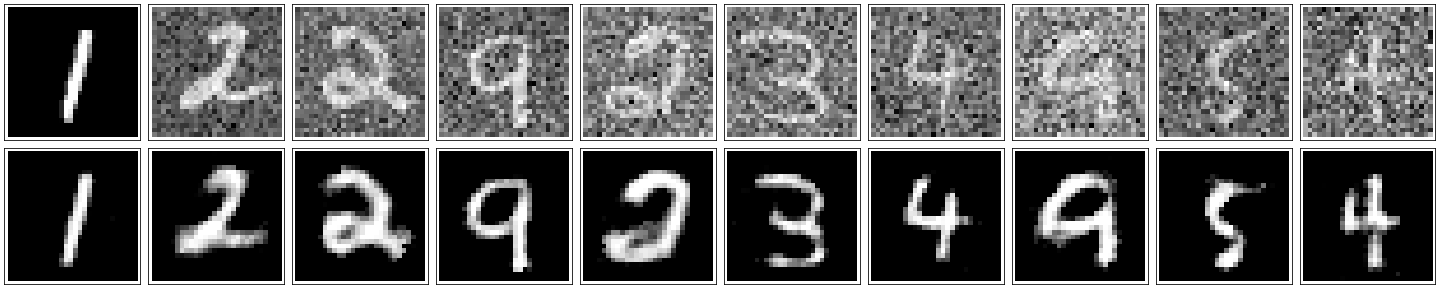}};
\end{tikzpicture}
\end{center}
\vspace{-0.5cm}
\caption{
\label{fig:examples}
Denosing with a learned generative prior:
Even when the number is barely visible, the denoiser recovers a sharp image.
}
\end{figure}

We next show that this achieves a mean squared error (MSE) proportional to $\sigma^2 k/n$, as suggested by our theory which applies for decoders with random weights.
We add noise to the images with noise variance ranging from $\sigma^2 = 0$ to $\sigma^2 = 6$.
In the right panel of Figure~\ref{fig:plots} we show the MSE in the image domain, $\MSE(G(\hat x), G(\xo))$, averaged over a number of images for the learned decoders with $k=10$ and $k=20$.
We observe an interesting tradeoff:
The decoder with $k=10$ has fewer parameters, and thus does not represent the digits as well, therefore the MSE is larger than that for $k=20$ for the noiseless case (i.e., for $\sigma=0$).
On the other hand, the smaller number of parameters results in a better denoising rate (by about a factor of two), corresponding to the steeper slope of the MSE as a function of the noise variance, $\sigma^2$.

\begin{figure}
\begin{center}
\begin{tikzpicture}
\begin{groupplot}[
yticklabel style={
        /pgf/number format/fixed,
        /pgf/number format/precision=5
},
scaled y ticks=false,
         title style={at={(0.5,-0.35)},anchor=north},
         group style={group size=3 by 1, horizontal sep=1cm},
         width=0.33\textwidth, legend pos=north west]

\nextgroupplot[xlabel={$k$}, thick,ylabel={mean squared error}]

\addplot +[mark=noe,red] table[x index=0,y index=2]{./fig/denoise_random_dnn_k.dat};
\addlegendentry{\scriptsize noisy rep.}

\addplot +[mark=noe,blue] table[x index=0,y index=1]{./fig/denoise_random_dnn_k.dat};
\addlegendentry{\scriptsize noisy img.}

\nextgroupplot[xlabel={$\sigma^2$},ymax = 0.6,thick]

\addplot +[mark=noe,red] table[x index=0,y index=2]{./fig/denoise_random_dnn_sigma.dat};
\addlegendentry{\scriptsize noisy rep.}

\addplot +[mark=noe,blue] table[x index=0,y index=1]{./fig/denoise_random_dnn_sigma.dat};
\addlegendentry{\scriptsize noisy img.}

\nextgroupplot[xlabel={$\sigma^2$},ymax = 0.6,thick]

\addplot +[mark=noe,blue] table[x index=0,y index=1]{./fig/denoise_learned.dat};
\addlegendentry{\scriptsize$k=10$}

\addplot +[mark=noe,red] table[x index=0,y index=2]{./fig/denoise_learned.dat};
\addlegendentry{\scriptsize$k=20$}

\addplot +[mark=noe,dotted,blue] table[x index=0,y index=3]{./fig/denoise_learned.dat};

\addplot +[mark=noe,dotted,red] table[x index=0,y index=4]{./fig/denoise_learned.dat};

\end{groupplot}
\end{tikzpicture}
\end{center}
\vspace{-0.5cm}
\caption{\label{fig:plots}
Mean square error in the image domain, $\MSE(G(\hat x), \xo)$, and in the latent representation, $\MSE(\hat x, \xo)$, as a function of the dimension of the latent representation, $k$, with $\sigma^2=0.25$ {\bf(left panel)}, and the noise variance, $\sigma^2$ with $k=50$ {\bf (middle panel)}.
As suggested by the theory pertaining to decoders with random weights, if $k$ is sufficiently small, and thus the network is sufficiently expansive, the denoising rate is proportional to $\sigma^2 k/n$.
{\bf Right panel:}
Denoising of handwritten digits based on a learned decoder with $k=10$ and $k=20$, along with the least-squares fit as dotted lines.
The learned decoder with $k=20$ has more parameters and thus represents the images with a smaller error; therefore the MSE at $\sigma=0$ is smaller.
However, the denoising rate for the decoder with $k=20$, which is the slope of the curve is larger as well, as suggested by our theory.
}
\end{figure}
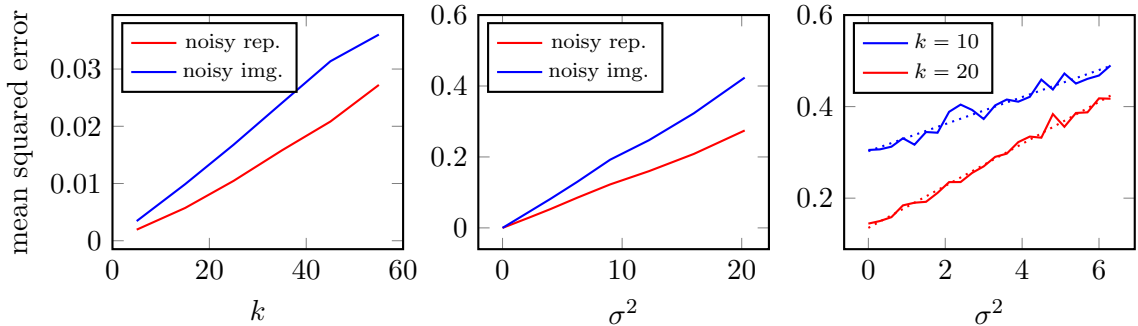

\subsection*{Acknowledgements}

RH is partially supported by a NSF Grant ISS-1816986 and
PH is partially supported by a NSF CAREER Grant DMS-1848087 as well as NSF Grant DMS-1464525, and the authors would like to thank Tan Nguyen for helpful discussions.

\printbibliography


\appendix

\section{Proof of Proposition~\ref{prop:autodec}}

We first show that $H(y)$ lies in the range of a union of $k$-dimensional subspaces, and upper-bound the number of the subspaces. Towards this goal, first note that the effect of the ReLU operation $\relu(z)$ can be described with a diagonal matrix $D$ which contains a one on its diagonal if the respective entry of $z$ is larger than zero, and zero otherwise, so that $D z = \relu(z)$. With this notation, we can write
\[
H(y) 
= 
\underbrace{D_d W_d D_{d-1} \ldots D_2 W_2 D_1 W_1 D' W'}_{U} y.
\]
The matrix $U \in \reals^{n\times n}$ has at most rank $k$, thus $H(y)$ lies in the range of a union of at most $k$-dimensional subspaces, where each subspace is determined by the matrices $D_d,\ldots, D_1,D'$.
We next bound the number of subspaces.
First note that since $W' \in \reals^{n\times k}$, there are only $2^k$ many different choices for $D'$, corresponding to all the sign patterns.
Next note that, with the lemma below, we have that for a fixed $D' W'$, the number of different matrixes $D_1$ can be bounded by $n_1^k$. Likewise, for fixed 
$W_2 D_1 W_1 D' W'$, the number of different matrices $D_2$ can be bounded by $n_2^k$ and so forth. 
Thus, the total number of different choices of the matrices $D_d,\ldots, D_1,D'$ is upper bounded by
\[
2^k n_1^k \ldots n_d^k.
\]

\begin{lemma}
\label{lem:signpatterns}
For any $U \in \reals^{n\times k}$ and $k\geq 5$,
\[
| \{ \diag(U v > 0) U | v \in \reals^k \} |
\leq 
n^{k}.
\]
\end{lemma}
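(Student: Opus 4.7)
The plan is to reduce the claim to counting sign patterns of $n$ linear forms in $\mathbb{R}^k$, and then to apply the Sauer-Shelah lemma for the class of homogeneous half-spaces.

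Writing $u_i^\top$ for the $i$-th row of $U$, the $i$-th row of $\diag(Uv>0)U$ equals $\mathbf{1}\{u_i^\top v>0\}\cdot u_i^\top$, so the entire matrix is determined by the binary sign vector $\sigma(v):=(\mathbf{1}\{u_i^\top v>0\})_{i=1}^n\in\{0,1\}^n$. I would therefore bound the quantity in question by $N:=|\{\sigma(v):v\in\mathbb{R}^k\}|$, which is exactly the number of distinct labelings of the fixed points $u_1,\ldots,u_n\in\mathbb{R}^k$ produced by the family of homogeneous half-spaces $\{u:u^\top v>0\}$ indexed by $v\in\mathbb{R}^k$.

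Next I would invoke the standard fact that homogeneous half-spaces in $\mathbb{R}^k$ form a concept class of VC dimension $k$: the standard basis $e_1,\ldots,e_k$ is shattered, while for any $k+1$ vectors there is a nontrivial linear dependence $\sum a_i u_i=0$, and the identity $\sum a_i(v^\top u_i)=0$ then forbids a suitable labeling (either the sign pattern of the $a_i$'s if they have mixed signs, or the all-ones labeling if they are all of one sign). The Sauer-Shelah lemma then yields
\[
N \;\leq\; \sum_{j=0}^{k}\binom{n}{j}.
\]

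The remaining step is to check $\sum_{j=0}^{k}\binom{n}{j}\leq n^k$ for $k\geq 5$, under the implicit assumption $n\geq k$ that holds in all applications of the lemma inside the proof of Proposition~\ref{prop:autodec}. I would split into two cases: when $n\geq 2k$, one has $\binom{n}{j}\leq\binom{n}{k}$ for $j\leq k$, so $\sum_{j=0}^k\binom{n}{j}\leq (k+1)\binom{n}{k}\leq (k+1)n^k/k!$, and $(k+1)/k!\leq 6/120<1$ for $k\geq 5$ closes the argument; when $k\leq n<2k$, the crude bound $\sum_{j=0}^k\binom{n}{j}\leq 2^n\leq 4^k\leq n^k$ suffices since $n\geq k\geq 5\geq 4$. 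The main conceptual input is Sauer-Shelah applied to the correct concept class; the rest is bookkeeping, and I do not anticipate a genuine obstacle.
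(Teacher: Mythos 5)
Your argument is correct, and it follows the standard route: reduce to counting labelings of $n$ fixed points by homogeneous half-spaces in $\mathbb{R}^k$, observe that class has VC dimension $k$, apply Sauer--Shelah, and then do elementary bookkeeping. The paper states Lemma~\ref{lem:signpatterns} without supplying a proof of its own, so there is nothing in the source to compare against, but your proof is the natural one and is sound. The VC-dimension argument (shattering $e_1,\dots,e_k$; a nontrivial linear dependence $\sum a_i u_i = 0$ forbidding a labeling for any $k+1$ points) is fine, and the two-case binomial estimate
\[
\sum_{j=0}^{k}\binom{n}{j} \leq (k+1)\binom{n}{k} \leq \frac{k+1}{k!}\,n^k \leq n^k
\quad (n\geq 2k,\ k\geq 3),\qquad
\sum_{j=0}^{k}\binom{n}{j} \leq 2^n \leq 4^k \leq n^k \quad (4\leq k\leq n<2k)
\]
checks out.

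Two small completeness remarks. First, the lemma as printed carries no lower bound on $n$, and it is in fact false for $n=1$ and $U\neq 0$ (two realizable matrices versus $1^k=1$), so some restriction on $n$ is genuinely necessary; you are right to flag the implicit assumption. Second, your explicit restriction $n\geq k$ can be relaxed to $n\geq 2$: in the remaining range $2\leq n<k$ the trivial chain $\sum_{j\leq k}\binom{n}{j}\leq 2^n\leq n^n\leq n^k$ closes the gap. With that one-line addition your proof covers every $n\geq 2$, which is as strong as the lemma can be.
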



Next note that by assumption we have that 
\begin{align}
\label{eq:ineqU}
\norm[2]{U \eta }^2 / \norm[2]{\eta}^2 \leq 2,
\end{align}
for all vectors $\eta$ and for all $U$ defined by the matrices $D_d,\ldots, D_1,D'$. 
For fixed $U$, let $S$ be the span of the right singular vectors of $U$, and note that $S$ has dimension at most $k$.  Let $\PS$ be the orthogonal projector onto a subspace $S$.  We have that 
\begin{align*}
\frac{\norm[2]{U \eta}^2}{ \norm[2]{\eta}^2} =\frac{\norm[2]{U P_S \eta}^2}{ \norm[2]{\eta}^2}   \leq 2\frac{\norm[2]{\PS \eta}^2}{\norm[2]{ \eta}^2},
\end{align*}
again for all $\eta$.
Now, we make use of the following bound on the projection of the noise $\eta$ onto a subspace, which follows from standard Gaussian concentration inequalities~\citep[Lem.~1]{laurent_adaptive_2000}. 
\begin{lemma}
\label{lem:noise-projection}
Let $S \subset \reals^n$ be a subspace with dimension $k$.  Let $\eta \sim \mathcal{N}(0, I_n)$ and $\beta \geq 1$.  Then,
\[
\PR{ \frac{\norm[2]{\PS \eta}^2}{\norm[2]{ \eta}^2} \leq \frac{10 \beta k}{n}} \geq 1 - e^{-\beta k} - e^{-n/16}.
\]
\end{lemma}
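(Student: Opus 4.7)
\textbf{Proof proposal for Lemma~\ref{lem:noise-projection}.}

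The plan is to reduce the bound on the ratio $\norm{\PS \eta}^2/\norm{\eta}^2$ to two separate tail bounds on chi-squared random variables, one for the numerator (an upper bound) and one for the denominator (a lower bound), and then to union-bound. By rotational invariance of the standard Gaussian, we may without loss of generality take $S = \Span(e_1,\ldots,e_k)$. Then $\PS \eta$ consists of the first $k$ coordinates of $\eta$, which are i.i.d.\ $\mathcal{N}(0,1)$, so $\norm{\PS \eta}^2 \sim \chi^2_k$. Similarly $\norm{\eta}^2 \sim \chi^2_n$.

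For the numerator, I would invoke the standard Laurent--Massart bound (which is exactly the $[$Lem.~1$]{laurent\_adaptive\_2000}$ being cited): for $X \sim \chi^2_k$ and any $x > 0$,
\[
\PR{X \geq k + 2\sqrt{kx} + 2x} \leq e^{-x}.
\]
Applying this with $x = \beta k$ and using $\beta \geq 1$ to simplify $k + 2k\sqrt{\beta} + 2\beta k \leq 5\beta k$, we get $\norm{\PS \eta}^2 \leq 5\beta k$ with probability at least $1 - e^{-\beta k}$. For the denominator, the companion Laurent--Massart lower tail says
\[
\PR{Y \leq n - 2\sqrt{nx}} \leq e^{-x}
\]
for $Y \sim \chi^2_n$. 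Taking $x = n/16$ produces $n - 2\sqrt{n \cdot n/16} = n/2$, so $\norm{\eta}^2 \geq n/2$ with probability at least $1 - e^{-n/16}$.

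Taking a union bound, both inequalities hold simultaneously with probability at least $1 - e^{-\beta k} - e^{-n/16}$, and in that event
\[
\frac{\norm{\PS \eta}^2}{\norm{\eta}^2} \leq \frac{5 \beta k}{n/2} = \frac{10 \beta k}{n},
\]
which is the desired bound. There is essentially no obstacle here: the only small sanity check is the arithmetic simplification $k + 2k\sqrt{\beta} + 2\beta k \leq 5\beta k$ when $\beta \geq 1$, which follows because each of the three summands is bounded by $\beta k$, $2\beta k$, and $2\beta k$ respectively (using $\sqrt{\beta}\leq \beta$). Everything else is a direct application of the already-cited concentration inequality plus rotational invariance.
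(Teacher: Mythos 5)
Your proof is correct and matches the paper's intended argument: the paper only gestures at the result by citing Laurent--Massart (Lem.~1 of \citep{laurent_adaptive_2000}), and you have simply spelled out the standard reduction via rotational invariance, the upper $\chi^2_k$ tail bound on the numerator, the lower $\chi^2_n$ tail bound on the denominator, and the union bound. The arithmetic ($k + 2k\sqrt{\beta} + 2\beta k \leq 5\beta k$ for $\beta \geq 1$, and $n - 2\sqrt{n\cdot n/16} = n/2$) checks out.
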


Taking a union bound over all subspaces, we obtain with the lemma above that
\[
\PR{ \frac{\norm[2]{H(\eta)}^2}{\norm[2]{ \eta}^2}
\leq 
\frac{20 \beta k}{n}x
}
\geq
1 - (2^k n_1^k \ldots n_d^k) (e^{-\beta k} + e^{-n/16}).
\]
Choosing $\beta = 2 \log(2n_1 n_2 \ldots n_d)$ concludes the proof.

\section{Proof of Theorem~\ref{thm:main-rescaled}} \label{sec:proofs}

In this section we prove our main result, Theorem~\ref{thm:main-rescaled}.   Instead of proving Theorem~\ref{thm:main-rescaled} as stated, we will prove the following equivalent rescaled statement for when $W_i$ have i.i.d. $\mathcal{N}(0, 1/{n_i})$ entries.  Because of this rescaling, $G(x)$ scales like $2^{-d/2} \norm{x}$, the noise $\omega$ is assumed to scale like $2^{-d/2}$, $\nabla f$ scales like $2^d$, and $\alpha$ scales like $2^d$.  Theorem~\ref{thm:main-rescaled} is the $\epsilon=K/d^{90}$ case of what follows.  

\begin{theorem}
\label{thm:main}
Consider a network with the weights in the $i$-th layer, $W_i \in \R^{n_i \times n_{i-1}}$, i.i.d.~$\mathcal N(0,1/n_i)$ distributed, and suppose that the network satisfies the expansivity condition for some $\epsilon \leq K/d^{90}$. 
Also, suppose that the noise variance obeys
\[
\omega \leq \frac{\norm{\xo} K_1 2^{-d/2} }{d^{16}},
\quad
\omega
\defeq
\sqrt{ 18\sigma^2 \frac{k}{n}
\log(n_1^d n_2^{d-1}\ldots n_d)}.
\]
Consider the iterates of Algorithm~\ref{alg} with stepsize
$\alpha = K_4\frac{2^d}{d^2}$.
\begin{enumerate}
\item[A.]
Then, there exists a number of steps $N$ upper bounded by
\[
N
\leq
\frac{K_2}{d^4\eps}
\frac{f(x_0) 2^d}{ \norm{\xo}}
\]
such that after $N$ steps, the iterates of Algorithm~\ref{alg} obey
\begin{align}
\label{eq:errorxixo}
\norm{x_i -\xo}
\leq
K_5 d^9 \norm{\xo} \sqrt{\epsilon}
+
K_6 d^6 2^{d/2} \omega,
\quad
\text{for all $i \geq N$},
\end{align}
with probability at least
$1-2e^{-2k \log n} - \sum_{i=2}^d 8 n_i e^{-K_7 n_{i-2} }
- 8 n_1 e^{-K_7 \epsilon^{2} \log(1/\epsilon) k }
$.
\item[B.]
In addition, for all $i \geq N$, we have
\begin{align}
\|x_{i + 1} - x_*\| &\leq (1 - \alpha 7/8)^{i + 1 - N} \|x_N - x_*\| + K_8 2^{d/2} \omega \hbox{ and }  \label{GA:e68:2} \\
\|G(x_{i + 1}) - G(x_*)\| &\leq \frac{1.2}{2^{d/2}} (1 - \alpha 7/8)^{i + 1 - N} \|x_N - x_*\| + 1.2 K_8 \omega, \label{GA:e74:2}
\end{align}
where $\alpha < 1$ is the stepsize of the algorithm.
Here, $K_1,K_2,..$ are numerical constants, and $x_0$ is the initial point in the optimization.
\end{enumerate}
\end{theorem}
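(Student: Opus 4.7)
\section*{Proof plan for Theorem~\ref{thm:main-rescaled} (rescaled form Theorem~\ref{thm:main})}

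The plan is to reduce the random statement to a deterministic one under the WDC, then analyze the deterministic landscape and trajectory of Algorithm~\ref{alg} in three phases: (i) escape from the saddle at $0$, (ii) escape from the spurious critical region near $-\rho_d\xo$ using the negation check in Steps 3--5, and (iii) linear convergence inside a ball around $\xo$. The noise enters only through a single Gaussian-tail argument controlling $\|(\Pi_{i=d}^1 W_{i,+,x})^t \eta\|$ uniformly in $x$, which produces the $\omega$ term.

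\textbf{Step 1: Pass to the WDC.} First I would invoke the expansivity condition to show that, with probability at least $1-\sum_{i=2}^d 8 n_i e^{-Kn_{i-2}} - 8n_1 e^{-K\epsilon^2\log(1/\epsilon)k}$, every matrix $W_i/\sqrt{2}$ satisfies the WDC with constant $\epsilon$; this is a standard net+concentration argument on the spectral norm of $\sum 1_{\langle w_i,x\rangle>0}1_{\langle w_i,y\rangle>0}w_iw_i^t$ over $(x,y)\in S^{k-1}\times S^{k-1}$. From this point on everything about $G$ itself is deterministic, and the only remaining randomness is in $\eta$.

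\textbf{Step 2: Concentration of the step direction around a deterministic field $h_x$.} Using the WDC layer by layer, together with the identity $\mathbb{E}[\Wpx^t \Wpy] = \Qxy$, I would show that for all nonzero $x$,
\begin{align*}
\tilde v_x = \Bigl(\prod_{i=d}^{1} \Wipx\Bigr)^{t}\bigl(G(x)-G(\xo)\bigr) \;\approx\; h_{x,\xo}
\end{align*}
in a deterministic $\epsilon$-sense (with polynomial-in-$d$ loss), where $h_{x,\xo}$ is the explicit vector field computed in \citep{hand_global_2017} built from the $\prod (\pi-\bar\theta_i)/\pi$ and $\sin\bar\theta_i/\pi$ factors. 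The zeros of $h_{x,\xo}$ are exactly $\xo$, $-\rho_d\xo$, and $0$. I would next add and subtract $G(\xo)$ to write $\tilde v_x = h_{x,\xo} + e_1(x) - (\Pi W_{i,+,x})^t \eta$, where $e_1(x)$ is the WDC error, of size $O(d^C\epsilon)\|\xo\|/2^d$, and apply Gaussian concentration with an $\epsilon$-net over the finitely many sign patterns of $\Pi W_{i,+,x}$ (there are at most $\prod n_i^k$ of them, as in Lemma~\ref{lem:signpatterns}) to bound the noise term uniformly by $2^{-d/2}\omega$ with probability $1-2e^{-2k\log n}$.

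\textbf{Step 3: Landscape properties.} I would establish, via the concentration of Step~2, three geometric facts: (a) outside the two balls $\mathcal S_\beta^+$ around $\xo$ and $\mathcal S_\beta^-$ around $-\rho_d\xo$, the (sub)gradient satisfies a one-point strong convexity type inequality $\langle \tilde v_x,\, x-\xo\rangle \gtrsim 2^{-d}\|\xo\|\cdot\|x-\xo\|/d$, making $\tilde v_x$ a descent direction for the distance-to-$\xo$; (b) $0$ is a local maximizer in the sense that once an iterate leaves a small ball around $0$ it never re-enters (the ``not in zero ball'' Lemma~\ref{lem:notinzeroball} referenced in the paper); and (c) an altitude separation, $\inf_{\mathcal S_\beta^-} f > \sup_{\mathcal S_\beta^+} f$, which is the key analytic content of Lemma~\ref{GA:le14} and follows by plugging $x=-\rho_d\xo$ vs.\ $x=\xo$ into the deterministic formula for $\|G(x)-G(\xo)\|^2$ coming from the WDC.

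\textbf{Step 4: Trajectory analysis.} Combining (a)--(c), I would track the iterates as in Figure~\ref{sec:ProofLogic}. Starting from arbitrary $x_0\ne 0$, property (b) shows the iterate escapes the neighborhood of $0$ in $O(1)$ steps and never returns. Outside $\mathcal S_\beta^\pm$, a standard descent-lemma computation with stepsize $\alpha=K_4 2^d/d^2$ and the lower bound from (a) yields
\begin{align*}
\|x_{i+1}-\xo\|^2 \le \|x_i-\xo\|^2 - c\,\alpha\,2^{-d}\|\xo\|\|x_i-\xo\|/d + O(\alpha^2 2^{-2d}\|x_i-\xo\|^2),
\end{align*}
so after at most $N \lesssim d^{-4}\epsilon^{-1} f(x_0)2^d/\|\xo\|$ steps the iterate is in $\mathcal S_\beta^+\cup \mathcal S_\beta^-$. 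If it lands in $\mathcal S_\beta^-$, the altitude separation (c) forces the negation check in Steps 3--5 to flip it into $\mathcal S_\beta^+$ at the very next iteration, giving the bound \eqref{eq:errorxixo} with the $K_5 d^9\sqrt\epsilon\|\xo\|+K_6 d^6 2^{d/2}\omega$ radius coming directly from the size of $\mathcal S_\beta^+$ under the WDC plus the noise term from Step~2.

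\textbf{Step 5: Local linear convergence.} Inside $\mathcal S_\beta^+$, the WDC and the formula for $h_{x,\xo}$ give a local strong-convexity/Lipschitz pair: the map $x\mapsto \tilde v_x$ satisfies $\langle \tilde v_x,x-\xo\rangle \ge (7/8)\cdot 2^{-d}\|x-\xo\|^2$ and $\|\tilde v_x\|\le (1+o(1))2^{-d}\|x-\xo\| + \|\text{noise}\|$. With $\alpha\sim 2^d$ this yields the contraction
\begin{align*}
\|x_{i+1}-\xo\| \le (1-\alpha 7/8)\|x_i-\xo\| + \alpha\cdot 2^{-d/2}\omega,
\end{align*}
which unrolled gives \eqref{GA:e68:2}, and the corresponding bound on $G(x_{i+1})-G(\xo)$ from the WDC-Lipschitzness of $G$ gives \eqref{GA:e74:2}.

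\textbf{Main obstacle.} The step I expect to be hardest is Step~3(c): getting a clean altitude separation between $\mathcal S_\beta^-$ and $\mathcal S_\beta^+$ that is robust to both the $O(d^C\epsilon)$ WDC error and the noise $\omega$. This is what forces the stringent scaling $\epsilon = K/d^{90}$ and the bound $\omega \lesssim \|\xo\|/d^{16}$: the gap between $f(-\rho_d\xo)$ and $f(\xo)$ (zero in the noiseless, infinite-width limit up to lower-order terms) must dominate all accumulated errors from $d$ layers, and tracking those polynomial-in-$d$ constants cleanly is the delicate bookkeeping part of the argument.
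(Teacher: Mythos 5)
Your roadmap — pass to the WDC, concentrate $\tilde v_x$ around $h_x$, escape $0$, exploit the altitude gap between $\mathcal S_\beta^\pm$ via the negation check, then contract locally near $\xo$ — is the right plan and matches the paper's proof structure quite closely. There is, however, one genuine gap.

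\paragraph{The gap: one-point convexity outside $\mathcal S_\beta^\pm$ is false.}
In Step~3(a) you claim $\langle \tilde v_x, x - \xo\rangle \gtrsim 2^{-d}\|\xo\|\|x-\xo\|/d$ for all $x$ outside the two balls, and in Step~4 you use this to drive a recursion on $\|x_i - \xo\|^2$. This cannot hold. Take $x = -t\xo$ with $0 < t < \rho_d$ (so $x$ is between $0$ and $-\rho_d\xo$ and can lie well outside both $\mathcal S_\beta^+$ and $\mathcal S_\beta^-$). Then $\theta_0 = \pi$, $\xi = 0$, and $h_x = \frac{1}{2^d}\bigl(1 - \rho_d/t\bigr)x$, which has $\langle h_x, x - \xo\rangle < 0$: the (negative) gradient field in this region pushes the iterate \emph{away} from $\xo$ and \emph{towards} $-\rho_d\xo$. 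So no uniform one-point-convexity inequality holds outside $\mathcal S_\beta^\pm$, and the recursion $\|x_{i+1} - \xo\|^2 \le \|x_i - \xo\|^2 - c\,\alpha\,2^{-d}\|\xo\|\|x_i - \xo\|/d + \cdots$ is simply not true on the way from $0$ to $-\rho_d\xo$. The paper instead works with the Lyapunov function $f$ itself: the Clarke mean value theorem on the segment $[x_i, x_{i+1}]$ plus the Lipschitz estimate on $h$ gives $f(x_{i+1}) - f(x_i) \le -\tfrac{1}{12}\alpha\|\tilde v_{x_i}\|^2$ whenever $x_i \notin \mathcal S_\beta$, and the lower bound $\|\tilde v_{x_i}\| \gtrsim 2^{-d} d^3\sqrt{\epsilon}\|\xo\|$ outside $\mathcal S_\beta$ combined with $f \ge 0$ bounds the number of out-of-$\mathcal S_\beta$ iterations. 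That is what produces the $N$ bound involving $f(x_0)$ (and in the paper's actual proof the denominator carries $\|\xo\|^2$, not $\|\xo\|$). Your outline should be fixed to bound the number of iterations through the decrease of $f$, not of $\|x - \xo\|$.

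Two minor points. First, for the noise bound you propose an $\epsilon$-net over sign patterns of $\Pi W_{i,+,x}$; since that set is finite (bounded by $10^{d^2}(n_1^d n_2^{d-1}\cdots n_d)^k$), only a straight union bound is needed, no net. Second, for Step~5 the clean statement used in the paper is $\|v_x - 2^{-d}(x - \xo)\| \le \tfrac{1}{8}2^{-d}\|x - \xo\| + 2^{-d/2}\omega$ in a ball around $\xo$, which packages your ``strong convexity $+$ Lipschitz'' pair into one estimate and directly gives the contraction $\|x_{i+1} - \xo\| \le (1 - \tfrac{7}{8}\alpha 2^{-d})\|x_i - \xo\| + \alpha 2^{-d/2}\omega$.
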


As mentioned in Section~\ref{sec:WDC}, our proof makes use of a deterministic condition, called the Weight Distribution Condition (WDC), formally defined in Section~\ref{sec:WDC}.
The following proposition establishes that the expansivity condition ensures that the WDC holds:

\begin{lemma}[Lemma 9 in~\citep{hand_global_2017}] \label{thm-multi-layer}
Fix $\epsilon \in (0,1)$. If the entires of $W_i \in \R^{n_i \times n_{i-1}}$ are i.i.d. $\mathcal{N}(0, 1/n_i)$ and the expansivity condition 
\[
n_i > c \eps^{-2} \log(1/\eps) n_{i-1} \log n_{i-1}
\]
holds,
then $W_i$ satisfies the WDC with constant $\eps$ with probability at least $1 - 8 n_i e^{- K \epsilon^2 n_{i-1} }$.
Here, $c$ and $K$ are numerical constants.
\end{lemma}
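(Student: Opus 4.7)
The plan is to reduce the uniform WDC bound over all $x,y \in S^{k-1}$ (we may restrict to the sphere by scale invariance, where here $k = n_{i-1}$, $n = n_i$) to (i) matrix concentration at each \emph{fixed} pair and (ii) a careful covering argument that accounts for the fact that the indicators $\mathbf{1}_{\innerprod{w_j}{x}>0}$ are \emph{not} Lipschitz in $x$. Write
\[
S(x,y) := \sum_{j=1}^{n} \mathbf{1}_{\innerprod{w_j}{x} > 0}\, \mathbf{1}_{\innerprod{w_j}{y} > 0}\, w_j w_j^t,
\]
so that the paper's footnote already gives $\mathbb{E} [S(x,y)] = Q_{x,y}$ under $w_j \sim \mathcal{N}(0, I_k/n)$.

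First I would establish a pointwise bound: for fixed $x,y \in S^{k-1}$, each summand $Z_j$ is a rank-one PSD random matrix with $\chi^2$-type entries, hence sub-exponential with Orlicz parameter $O(1/n)$. A truncated matrix Bernstein (truncating at level $O(\sqrt{k/n}\log n)$ and handling the tail separately) or equivalently a Koltchinskii--Lounici sample-covariance estimate yields
\[
\PR{ \norm{S(x,y) - Q_{x,y}} \geq \eps/2 } \leq 2k \exp\!\bigl( - c_0\, n\, \eps^{2}/k \bigr).
\]
A union bound over a $\delta$-net $\calN_\delta \subset S^{k-1}$ of cardinality $(3/\delta)^{k}$ then gives the same bound simultaneously for all $(x,y)\in \calN_\delta \times \calN_\delta$.

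The main obstacle is extending the bound off the net, because the indicators jump. For $(x,y)$ on the sphere, pick net points $(x',y')$ within distance $\delta$; then $S(x,y) - S(x',y')$ is a signed sum of $w_j w_j^t$ over indices $j$ whose indicator pair has flipped. A flip can occur only for $j$ with $|\innerprod{w_j}{x'}|\leq \delta\norm{w_j}$ or $|\innerprod{w_j}{y'}|\leq \delta\norm{w_j}$, i.e., for $w_j$ lying in a thin spherical band around a hyperplane. The key estimate I would prove is the uniform spectral bound
\[
\sup_{u\in S^{k-1}} \Bigl\| \sum_{j=1}^n \mathbf{1}_{|\innerprod{w_j}{u}|\leq \delta}\, w_j w_j^t \Bigr\| \lesssim \delta\,\log(1/\delta),
\]
valid with probability at least $1-Ce^{-cn\delta^2}$. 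I would prove this by (a) a Chernoff bound on the number of $w_j$ in the band (whose mean is $O(\delta n)$), (b) Gaussian concentration of $\norm{w_j}$ conditional on the band, and (c) a second $\delta$-net in $u$ combined with an anti-concentration trick on $\innerprod{w_j}{u}$ to handle the discontinuity. Choosing $\delta$ polynomial in $\eps/k$ makes the discretization error at most $\eps/2$.

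Combining both pieces and taking $\delta$ essentially $\eps/\mathrm{poly}(k)$, the aggregated failure probability collapses to the form $8n_i \exp(-K\eps^2 n_{i-1})$ exactly when the net cardinality $(3/\delta)^{2k}$ is absorbed into the exponential, which requires $n_i \gtrsim \eps^{-2}\log(1/\eps)\, n_{i-1} \log n_{i-1}$ --- the $\log(1/\eps)$ factor being exactly the cost of the band-width trade-off in the tessellation step. I expect the hyperplane-tessellation estimate (the uniform band bound) to be the hardest part, both because it requires a second covering that cannot rely on Lipschitzness and because the constants must be tracked carefully to align with the stated expansivity. The pointwise matrix Bernstein step and the final union-bound bookkeeping are comparatively routine.
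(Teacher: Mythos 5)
You should first note that this paper never proves Lemma~\ref{thm-multi-layer}: it is imported verbatim as Lemma~9 of \citep{hand_global_2017}, so there is no in-paper argument to compare against. Your sketch does reproduce the architecture of the proof in that reference --- pointwise concentration of $S(x,y)=\sum_j 1_{\langle w_j,x\rangle>0}1_{\langle w_j,y\rangle>0}w_jw_j^t$ about $Q_{x,y}$, a net over $(x,y)$, and a uniform spectral bound on the rows falling in a thin band around a hyperplane to control the indicator flips off the net --- so the plan itself is the right one.

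The genuine gap is in the quantitative bookkeeping, which as written does not deliver the stated expansivity condition or failure probability. (a) Your pointwise bound $2k\exp(-c_0 n\epsilon^2/k)$ from matrix Bernstein has exponent $n\epsilon^2/k$; union-bounding over a net of cardinality $(3/\delta)^{2k}$ with $\delta$ polynomial in $\epsilon/k$ then forces $n\epsilon^2/k \gtrsim k\log(k/\epsilon)$, i.e.\ $n \gtrsim \epsilon^{-2}k^2\log(k/\epsilon)$ with $k=n_{i-1}$ --- quadratic in $k$, strictly stronger than the hypothesis $n_i \gtrsim \epsilon^{-2}\log(1/\epsilon)\,n_{i-1}\log n_{i-1}$. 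To get the stated scaling the fixed-pair step must instead use the covariance-estimation-type bound (scalar Bernstein for the sub-exponential quadratic forms $u^t(S(x,y)-Q_{x,y})v$ plus a net over directions $u,v$), giving failure probability of the form $\exp(Ck-c n\epsilon^2)$, which the $(x,y)$-net can absorb under $n\gtrsim \epsilon^{-2}k(\log k+\log(1/\epsilon))$. (b) The choice $\delta \sim \epsilon/\poly(k)$ is also incompatible with your own band estimate: its success probability $1-Ce^{-cn\delta^2}$, and the second net it requires (cost $e^{Ck\log(1/\delta)}$, hence $n\delta^2\gtrsim k\log(1/\delta)$), become vacuous when $n \asymp \epsilon^{-2}\log(1/\epsilon)k\log k$ and $\delta \ll 1/\poly(k)$; the discretization scale must be polynomial in $\epsilon$ alone, with the band bound (stated relative to $\|w_j\|$, not in absolute terms --- as written your band condition $|\langle w_j,u\rangle|\le\delta$ is misnormalized since $\langle w_j,u\rangle\sim\mathcal N(0,1/n)$) supplying a discretization error of order $\sqrt{\delta}$ or $\delta\log(1/\delta)\le \epsilon/2$. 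Consequently the final claim that the error terms ``collapse to $8n_i\exp(-K\epsilon^2 n_{i-1})$ exactly when'' the stated expansivity holds is asserted rather than derived: with your stated rates the exponent $\epsilon^2 n_{i-1}$ and the prefactor $8n_i$ do not emerge.
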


It follows from Lemma~\ref{thm-multi-layer}, that the WDC holds for all $W_i$ simultaneously with probability at least $1- \sum_{i=2}^d 8 n_i e^{-K_7 n_{i-2} }
- 8 n_1 e^{-K_7 \epsilon^{2} \log(1/\epsilon) k }$.

In the remainder of the proof we work on the event that the WDC holds for all $W_i$.

\subsection{Preliminaries}

%

Recall that the goal of our algorithm is to minimize the empirical risk objective
\begin{align*}
f(x) = \frac{1}{2} \norm{ G(x) - y }^2, 
\end{align*}
where $y \defeq G(\xo) + \eta$, with $\eta \sim \mathcal N(0,\sigma^2/n I)$.

Our results rely on the fact that outside of two balls around $x = \xo$ and $x = - \rho_d \xo$, with $\rho_d$ a constant defined below, the direction chosen by the algorithm is a descent direction, with high probability.
In order to prove this, we use a concentration argument, similar to the arguments used in~\citep{hand_global_2017}.
First, define 
\[
\Lambda_x \defeq \PiWd,
\]
with $W_{i,+,x}$ defined in Section~\ref{sec:erm} for notational convenience, and note that the step direction of our algorithm can be written as
\begin{align}
\stepdir{x} = \vbarx + \qbarx,
\quad
\text{with}
\quad
\vbarx :=
 \Lambda_x^t  \Lambda_x x -  (\Lambda_x)^t (\Lambda_{\xo}) \xo,
 \quad
 \text{and}
 \quad
 \qbarx := \Lambda_x^t \eta.
\label{defn-vxxo}
\end{align}
Note that at points $x$ where $G$ (and hence $f$) is differentiable, we have that $\stepdir{x} = \nabla f(x)$.

The proof is based on showing that $\stepdir{x}$ concentrates around a particular $\hx \in \R^k$, defined below, that is a continuous function of nonzero $x,\xo$ and is zero only at $x = \xo$ and $x = - \rho_d \xo$.
The definition of $\hx$ depends on a function that is helpful for controlling how the operator $x \mapsto W_{+,x} x$ distorts angles, defined as:\begin{align}
g(\theta) \defeq \cos^{-1} \Bigl( \frac{ (\pi - \theta) \cos \theta + \sin \theta}{\pi} \Bigr). \label{defn-g}
\end{align}
With this notation, we define
\begin{align*}
\hx \defeq - \frac{1}{2^d} \Bigl( \prod_{i=0}^{d-1} \frac{\pi - \thetabar_i}{\pi}  \Bigr)\xo
+ \frac{1}{2^d} \left[ x - \sum_{i=0}^{d-1} \frac{\sin \thetabar_i}{\pi}  \Bigl( \prod_{j=i+1}^{d-1} \frac{\pi - \thetabar_j}{\pi}  \Bigr)  \frac{\|\xo\|_2}{\|x\|_2} x  \right] ,
\end{align*}
where $\thetabar_0 = \angle(x, \xo)$ and $\thetabar_i = g(\thetabar_{i-1})$.
Note that $h_x$ is deterministic and only depends on $x$, $\xo$, and the number of layers, $d$.  

In order to bound the deviation of $\stepdir{x}$ from $h_x$ we use the following two lemmas, bounding the deviation controlled by the WDC and the deviation from the noise:

\begin{lemma}[Lemma~6 in~\citep{hand_global_2017}]
\label{lemma:vxbarhx}
Suppose that the WDC holds with $\eps < 1/(16 \pi d^2)^2$. Then, for all nonzero $x, \xo \in \R^k$,
\begin{align}
\| \vbarx - \hx \|_2 &\leq K \frac{d^3 \sqrt{\eps}}{2^d} \max ( \|x\|_2, \| \xo\|_2), \text{ and } \label{vxbarhx-are-close} \\
\bigl \langle  \Lambda_x x, \Lambda_{\xo} \xo \bigr \rangle  &\geq \frac{1}{4 \pi} \frac{1}{2^d} \|x\|_2 \|\xo\|_2, \text{ and } \label{piwipxxpiwipyy} \\
\norm{ \Lambda_x }^2 \leq \frac{1}{2^d} (1 + 2 \eps)^d
&\leq
\frac{13}{12} 2^{-d}.
\label{eq:boundGammax}
\end{align}
\end{lemma}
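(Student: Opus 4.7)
The plan is to prove the three bounds \eqref{vxbarhx-are-close}, \eqref{piwipxxpiwipyy}, and \eqref{eq:boundGammax} by forward-propagating $x$ and $\xo$ through the network. Set $\xi_i \defeq \Wnpx{i} \cdots \Wnpx{1} x$ and $\zeta_i \defeq \Wipxo \cdots W_{1,+,\xo}\xo$, so that $\Lambda_x x = \xi_d$ and $\Lambda_{\xo}\xo = \zeta_d$. The entire argument reduces to using the WDC to approximate every cross term $\Wnpx{i}^t \Wipxo$ by $Q_{\xi_{i-1},\zeta_{i-1}}$, and every diagonal term $\Wnpx{i}^t \Wnpx{i}$ by $Q_{\xi_{i-1},\xi_{i-1}} = \tfrac{1}{2} I_k$.

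First I would dispatch the spectral norm bound \eqref{eq:boundGammax}: the WDC applied with $x=y=\xi_{i-1}$ immediately gives $\|\Wnpx{i}\|^2 = \|\Wnpx{i}^t\Wnpx{i}\| \leq \tfrac{1}{2} + \eps$, and submultiplicativity across the $d$ layers yields $\|\Lambda_x\|^2 \leq (\tfrac{1}{2} + \eps)^d = 2^{-d}(1 + 2\eps)^d$, which the hypothesis $\eps < 1/(16\pi d^2)^2$ pushes below $\tfrac{13}{12}\cdot 2^{-d}$. Next I would prove the inner product bound \eqref{piwipxxpiwipyy} by tracking $\thetabarn{i} \defeq \angle(\xi_i, \zeta_i)$ together with the norms $\|\xi_i\|$, $\|\zeta_i\|$. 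A direct computation gives $u^t Q_{u,v} v = \|u\|\|v\|\cdot \tfrac{(\pi - \theta)\cos\theta + \sin\theta}{2\pi}$ and $u^t Q_{u,u} u = \tfrac{1}{2}\|u\|^2$, so the WDC yields $\|\xi_i\|^2 \approx \tfrac{1}{2}\|\xi_{i-1}\|^2$ and $\cos \thetabarn{i} \approx g(\thetabarn{i-1})$ with $g$ as in \eqref{defn-g}. Since $g$ maps $[0,\pi]$ into $[0,\pi/2]$, one has $\cos \thetabarn{d}$ bounded below by a universal positive constant, delivering the factor $1/(4\pi)$ in the worst case.

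For the main bound \eqref{vxbarhx-are-close} I would split $\vbarx = \Lambda_x^t \Lambda_x x - \Lambda_x^t \Lambda_{\xo}\xo$. Iterating $\Wnpx{i}^t \Wnpx{i} \approx \tfrac{1}{2} I_k$ immediately gives $\Lambda_x^t \Lambda_x x \approx 2^{-d} x$, which matches the $x$-piece of $\hx$. For the cross term I would induct over layers using the key identity
\[
Q_{u,v}\,v = \frac{\pi - \theta}{2\pi}\,v + \frac{\sin\theta}{2\pi}\|v\|\,\hat u,
\]
which is immediate from the definition of $\Mxyhat$. Peeling off layers from the inside, the ``$v$-piece'' propagates as $\zeta_{i-1}$ to the next stage, while each ``$\hat u$-piece'' born at layer $i$ is a multiple of $\xi_{i-1}$ and, because the surviving $Q$'s on that direction are of the form $Q_{\xi,\xi} = \tfrac{1}{2} I_k$, the remaining $i-1$ layers rescale it by $2^{-(i-1)}$ and funnel it down to a multiple of $x/\|x\|$. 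Summing the contributions born at each layer yields exactly the closed-form expression for $\hx$ in the statement.

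The main obstacle is the error bookkeeping that produces the rate $d^3 \sqrt{\eps}/2^d$. Each WDC replacement contributes an operator-norm error of $\eps$, but tracking angles forces a conversion from cosine errors to angle errors, which incurs a $\sqrt{\eps}$ factor through the derivative of $\cos^{-1}$ near its endpoints. The hypothesis $\eps < 1/(16\pi d^2)^2$ is exactly what keeps every $\thetabarn{i}$ bounded away from $0$ and $\pi$ along the full trajectory, so that the linearized angle recursion $\thetabarn{i} \approx g(\thetabarn{i-1})$ remains well-conditioned; one factor of $d$ comes from summing $d$ layer-wise contributions, another comes from differentiating the closed-form expression for $\hx$ with respect to the perturbed $\thetabarn{i}$ against the product $\prod_{j=i+1}^{d-1}(\pi - \thetabarn{j})/\pi$, and a third from the $\max(\|x\|,\|\xo\|)$ scaling that appears when the ``$\hat u$-pieces'' are rescaled layer by layer.
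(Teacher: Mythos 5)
Your treatment of \eqref{eq:boundGammax} is exactly the paper's: the WDC with $x=y$ gives $\|W_{i,+,x}\|^2\le \tfrac12+\eps$, and submultiplicativity plus $(1+2\eps)^d\le 1+4\eps d\le \tfrac{13}{12}$ under $\eps<1/(16\pi d^2)^2$. For \eqref{vxbarhx-are-close} and \eqref{piwipxxpiwipyy}, however, the paper gives no argument at all: it cites them verbatim as Lemma~6 of \citep{hand_global_2017}. What you propose is essentially a reconstruction of that cited proof---forward-propagate $x$ and $\xo$, replace each cross Gram matrix by $Q_{\xi_{i-1},\zeta_{i-1}}$ and each within-path Gram matrix by $\tfrac12 I_k$, peel layers with $Q_{u,v}v=\frac{\pi-\theta}{2\pi}v+\frac{\sin\theta}{2\pi}\|v\|\hat u$, and run the angle recursion through $g$---so it is the same route as the source rather than a genuinely different one; the layer-peeling identity does recover the closed form of $\hx$ correctly.

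Two spots in the sketch are too quick. The recursion should read $\thetabarn{i}\approx g(\thetabarn{i-1})$ (an angle, not a cosine), and, more substantively, the claim that ``$g$ maps $[0,\pi]$ into $[0,\pi/2]$, so $\cos\thetabarn{d}$ is bounded below by a universal positive constant'' does not stand on its own: when $\thetabar_0=\pi$, one application of $g$ gives exactly $\pi/2$, whose cosine is $0$; indeed for $d=1$ and $x=-\xo$ the inner product in \eqref{piwipxxpiwipyy} is identically zero, so the bound implicitly requires $d\ge 2$, and the usable constant comes from the second iterate, $\cos g(\pi/2)=1/\pi$, together with a quantitative bound (of order $d^2\sqrt{\eps}$, small compared with $1/\pi-1/(4\pi)$ under the hypothesis on $\eps$) on how far the true angles and norms drift from the ideal recursion. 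Relatedly, the $d^3\sqrt{\eps}$ bookkeeping in \eqref{vxbarhx-are-close}---converting $\eps$ operator-norm errors into $\sqrt{\eps}$ angle errors and propagating them through the products and the sum defining $\hx$---is the actual content of the cited lemma; your proposal correctly identifies where the three factors of $d$ should arise but does not carry out the estimates, so as written it is a faithful plan for the known proof rather than a self-contained derivation.
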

\begin{proof}
Equation~\eqref{vxbarhx-are-close} and \eqref{piwipxxpiwipyy} are Lemma~6 in~\citep{hand_global_2017}.
Regarding~\eqref{eq:boundGammax}, note that the WDC implies that $\norm{W_{i,+,x}}^2 \leq 1/2 + \epsilon$.
It follows that
\[
\norm{\Lambda_x}^2
=
\norm{\PiWd}^2 \leq \frac{1}{2^d} (1 + 2 \eps)^d
= \frac{1}{2^d} e^{d \log(1+2 \eps)} \leq \frac{1 + 4 \eps d}{2^d}
\leq
\frac{13}{12} 2^{-d},
\]
where the last inequalities follow by our assumption on $\epsilon$ (i.e., $\eps < 1/(16 \pi d^2)^2$).
\end{proof}

\begin{lemma}
\label{lem:noisesmall}
Suppose the WDC holds with $\eps < 1/(16 \pi d^2)^2$, that any subset of $n_{i-1}$ rows of $W_i$ are linearly independent for each $i$,  and that $\eta \sim \mathcal N(0, \sigma^2/ n I)$.
Then the event
\begin{align}
\label{eq:qnoiseb}
\eventnoisesmall
\defeq
\left\{
\norm{ \Lambda_x^t \eta }
\leq
\frac{\omega}{2^{d/2}},
\; \text{ for all $x$}
\right\},
\quad
\omega
\defeq
\sqrt{ 16 \sigma \frac{k}{n}
\log(n_1^d n_2^{d-1}\ldots n_d)}
\end{align}
holds with probability at least $1-2e^{-2 k \log n}$.
\end{lemma}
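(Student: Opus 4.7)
The strategy is a union bound over the finitely many distinct values of $\Lambda_x$, combined with a Gaussian quadratic-form concentration inequality, in the same spirit as the proof of Proposition~\ref{prop:autodec}. The starting observation is that, as $x$ varies over $\mathbb{R}^k \setminus \{0\}$, the matrix $\Lambda_x = \PiWd$ depends on $x$ only through the sign patterns of the layer pre-activations, and hence takes only finitely many distinct values. The linear-independence hypothesis on subsets of $n_{i-1}$ rows of each $W_i$ is exactly what rules out degenerate boundary configurations (pre-activations that vanish on positive-measure sets) which could otherwise complicate this counting.

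To count the number $N$ of distinct matrices $\Lambda_x$, I would apply Lemma~\ref{lem:signpatterns} recursively, layer by layer. The first-layer sign pattern, determined by $\operatorname{sign}(W_1 x)$ with $W_1 \in \mathbb{R}^{n_1 \times k}$, takes at most $n_1^k$ values. Within each fixed first-layer region the pre-activation of layer $2$ is a linear function of $x \in \mathbb{R}^k$, giving at most $n_2^k$ second-layer patterns per region; iterating through all $d$ layers yields $N \leq \prod_{i=1}^{d} n_i^k$. A slightly looser recursion (or a coarser per-layer bound that absorbs inter-layer interactions into a worst-case factor) then recovers the logarithmic factor $\log(n_1^d n_2^{d-1} \cdots n_d)$ appearing in $\omega$.

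With the counting in hand, fix one of the $N$ possible values $M = \Lambda_x$ and write $\|M^t \eta\|^2 = (\sigma^2/n)\, \xi^t M M^t \xi$ with $\xi \sim \mathcal{N}(0, I_n)$. By~\eqref{eq:boundGammax} we have $\|M\|^2 \leq (13/12)\, 2^{-d}$, and since the domain of $M$ is $\mathbb{R}^k$ the matrix $M M^t$ has rank at most $k$, so its nonzero eigenvalues $\lambda_1, \dots, \lambda_k$ satisfy $\lambda_j \leq \|M\|^2$ and $\sum_j \lambda_j \leq k \|M\|^2$. Expanding in the eigenbasis of $M M^t$ reduces $\xi^t M M^t \xi$ to a weighted sum of at most $k$ independent $\chi_1^2$ variables, and the Laurent--Massart tail bound (the same inequality cited in Lemma~\ref{lem:noise-projection}) then yields
\[
\|M^t \eta\|^2 \;\leq\; C\,\frac{\sigma^2}{n\cdot 2^d}\,(k + t)
\]
with probability at least $1 - e^{-t}$, for a numerical constant $C$.

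Taking a union bound over the $N$ possible values with $t$ chosen of order $\log N + 2 k \log n$ produces, with probability at least $1 - 2 e^{-2 k \log n}$, the uniform bound $\|\Lambda_x^t \eta\| \leq \omega / 2^{d/2}$. The main obstacle is the recursive counting step: one has to argue that within each cell of the layer-$i$ activation arrangement the subsequent layer's pre-activation is still a linear function of the original $x \in \mathbb{R}^k$, so that Lemma~\ref{lem:signpatterns} can be re-applied with exponent $k$ (and not $n_{i-1}$) at every layer. The role of the linear-independence hypothesis on the $W_i$ is precisely to ensure this piecewise-linear structure is non-degenerate. Once the count of distinct $\Lambda_x$ is in place, the Gaussian concentration and union-bound steps are routine.
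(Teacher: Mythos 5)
Your proposal follows essentially the same strategy as the paper: count the finitely many distinct matrices $\Lambda_x$, apply a $\chi^2$-type concentration bound exploiting the fact that $\Lambda_x$ has rank at most $k$ together with the spectral bound~\eqref{eq:boundGammax}, and take a union bound. The one genuine point of divergence is the counting step: the paper does not re-derive the count but invokes Lemmas~13--14 from \citep{hand_global_2017}, obtaining the (deliberately loose) bound $|\{\Lambda_x\}| \leq 10^{d^2}\,(n_1^{d} n_2^{d-1}\cdots n_d)^k \leq (n_1^{d} n_2^{d-1}\cdots n_d)^{2k}$, whereas you derive a self-contained recursive count via Lemma~\ref{lem:signpatterns}, getting $\prod_{i=1}^d n_i^k$. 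Your count is in fact \emph{tighter} than the paper's, so there is no need for the ``slightly looser recursion'' you gesture at to recover the factor $\log(n_1^d n_2^{d-1}\cdots n_d)$: a tighter count makes the union bound only easier, and the event as stated with the paper's $\omega$ still holds with at least the required probability. The role of the linear-independence hypothesis is also slightly different than you describe --- in the paper it is what the cited counting lemmas from \citep{hand_global_2017} require (so that the bound on the number of distinct $\Lambda_x$ holds almost surely), whereas in your recursion the bound from Lemma~\ref{lem:signpatterns} is an unconditional upper bound, so the hypothesis is not strictly load-bearing for your version of the count. Minor remarks: Lemma~\ref{lem:signpatterns} is stated for $k \geq 5$, so your recursion technically needs that caveat; and your eigenbasis expansion of $\xi^t MM^t\xi$ is an equivalent but more explicit version of the paper's one-line observation that $\|\Lambda_x^t\eta\| \leq \|\Lambda_x\|\,\|P_{\Lambda_x}\eta\|$ with $\|P_{\Lambda_x}\eta\|^2$ a scaled $\chi^2$ of at most $k$ degrees of freedom.
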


As the cost function $f$ is not differentiable everywhere, we make use of the generalized subdifferential in order to reference the subgradients at nondifferentiable points. For a Lipschitz function $\tilde{f}$ defined from a Hilbert space $\mathcal{X}$ to $\mathbb{R}$, the Clarke generalized directional derivative of $\tilde{f}$ at the point $x \in \mathcal{X}$ in the direction $u$, denoted by $\tilde{f}^o(x; u)$, is defined by $\tilde{f}^o(x; u) = \limsup_{y \rightarrow x, t \downarrow 0} \frac{\tilde{f}(y + t u) - \tilde{f}(y)}{t}$, and the generalized subdifferential of $\tilde{f}$ at $x$, denoted by $\partial \tilde{f}(x)$, is defined by
\begin{equation*}
\partial \tilde{f}(x) = \{v \in \mathbb{R}^k \mid \inner[]{v}{u} \leq \tilde{f}^o(x; u), \text{ for all } u \in \mathcal{X}\}.
\end{equation*}
Since $f(x)$ is a piecewise quadratic function, we have
\begin{equation} \label{subdifferential-convex-hull}
\partial f(x) = \conv(v_1, v_2, \ldots, v_t),
\end{equation}
where $\conv$ denotes the convex hull of the vectors $v_1, \ldots, v_t$, $t$ is the number of quadratic functions adjoint to $x$, and $v_i$ is the gradient of the $i$-th quadratic function at $x$. 

\begin{lemma} \label{lemma:vxtilde-hxtilde-subdifferential}
Under the assumption of Lemma~\ref{lem:noisesmall}, and assuming that $\eventnoisesmall$ holds, we have that, for any $x\neq 0$ and any  $v_x \in \partial f(x)$,
\[
\norm{v_x - h_x} \leq K \frac{d^3 \sqrt{\eps}}{2^d} \max ( \|x\|_2, \| \xo\|_2) + \frac{\omega}{2^{d/2}}.
\]
\end{lemma}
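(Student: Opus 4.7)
The plan is to split the argument along the decomposition $\partial f(x) = \conv(v_1, \ldots, v_t)$ supplied by \eqref{subdifferential-convex-hull}, and control each vertex $v_j$ by combining Lemma~\ref{lemma:vxbarhx} with the noise bound on $\eventnoisesmall$ from Lemma~\ref{lem:noisesmall}. First I would dispatch the easy case. If $G$ is differentiable at $x$, then $\partial f(x) = \{\tilde v_x\}$ with $\tilde v_x = \vbarx \pm \qbarx$, so by the triangle inequality
\begin{equation*}
\|\tilde v_x - h_x\| \leq \|\vbarx - h_x\| + \|\qbarx\| \leq K\frac{d^3\sqrt{\eps}}{2^d}\max(\|x\|,\|\xo\|) + \frac{\omega}{2^{d/2}},
\end{equation*}
which is exactly the claimed estimate.

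Now for a non-differentiable $x \neq 0$, I would argue by approximation. Each vertex $v_j$ of $\partial f(x)$ is the gradient at $x$ of one of the quadratic pieces $q_j$ whose closure contains $x$. The interior of the $j$-th piece is a non-empty open set $U_j$ on which $G$ is linear, say $G = \Lambda^{(j)}$, so for $z \in U_j$ the gradient of $f$ is the continuous extension of $z \mapsto (\Lambda^{(j)})^t(\Lambda^{(j)} z - y)$, and this extension evaluated at $x$ is exactly $v_j$. Picking a sequence $z_n \to x$ with $z_n \in U_j$, the weight-pattern activations at $z_n$ agree with those defining $\Lambda^{(j)}$, and hence on the one hand $\Lambda_{z_n} = \Lambda^{(j)}$ and $\nabla f(z_n) \to v_j$, while on the other hand $h_{z_n} \to h_x$ by continuity of $h_x$ in $x$ on $\mathbb{R}^k \setminus \{0\}$.

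I would then apply the differentiable-case bound at each $z_n$. Lemma~\ref{lemma:vxbarhx} gives $\|\overline{v}_{z_n} - h_{z_n}\| \leq K\frac{d^3\sqrt{\eps}}{2^d}\max(\|z_n\|,\|\xo\|)$ uniformly in $n$, and the event $\eventnoisesmall$ from Lemma~\ref{lem:noisesmall} gives $\|\Lambda_{z_n}^t \eta\| \leq \omega/2^{d/2}$ uniformly in $n$. Passing to the limit $n \to \infty$ yields
\begin{equation*}
\|v_j - h_x\| \leq K\frac{d^3\sqrt{\eps}}{2^d}\max(\|x\|,\|\xo\|) + \frac{\omega}{2^{d/2}}.
\end{equation*}
Since $v_x = \sum_{j=1}^t \lambda_j v_j$ for some convex weights $\lambda_j \geq 0$ summing to one, convexity of the norm gives $\|v_x - h_x\| = \|\sum_j \lambda_j(v_j - h_x)\| \leq \max_j \|v_j - h_x\|$, which is the desired estimate.

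The only mildly delicate step is justifying that $h_x$ is continuous in $x$ away from the origin, so that $h_{z_n} \to h_x$; this follows from inspection of its definition, since each $\thetabar_i$ is a composition of the continuous map $g$ with the continuous angle $\angle(\,\cdot\,,\xo)$ on $\mathbb{R}^k \setminus \{0\}$, and the prefactor $\|\xo\|/\|x\|$ is continuous away from zero. Everything else is triangle inequality and continuity, so I do not expect a genuine obstacle here; the lemma is essentially a bookkeeping statement that promotes the gradient-based estimate of Lemma~\ref{lemma:vxbarhx} to an arbitrary-subgradient estimate via \eqref{subdifferential-convex-hull}.
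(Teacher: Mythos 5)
Your proof is correct and follows essentially the same route as the paper's: decompose $\partial f(x) = \conv(v_1,\ldots,v_t)$ via \eqref{subdifferential-convex-hull}, bound each vertex $v_j$ by approximating from a nearby point of differentiability where the decomposition $\stepdir{z} = \vbar_z + \bar q_z$ applies and Lemmas~\ref{lemma:vxbarhx} and~\ref{lem:noisesmall} give the estimate, pass to the limit using continuity of $z \mapsto h_z$ away from the origin, and then collapse the convex combination by the triangle inequality. The only difference is cosmetic: you make the limiting argument and the continuity of $h_x$ explicit, whereas the paper writes $v_i = \lim_{t\downarrow 0} \tilde v_{x+tw}$ and appeals to the same continuity more tersely.
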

\begin{proof}
By \eqref{subdifferential-convex-hull}, $\partial f(x) = \conv(v_1, \ldots v_t)$ for some finite $t$,
and thus $v_x = a_1 v_1 + \ldots a_t v_t$ for some $a_1,\ldots,a_t  \geq 0$, $\sum_i a_i = 1$.
For each $v_i$, there exists a $w$ such that $v_i = \lim_{t \downarrow 0} \tilde{v}_{x + t w}$.  On the event $\eventnoisesmall$, we have that for any $x \neq 0$, for any  $\stepdir{x} \in \partial f(x)$
\begin{align}
\norm{\stepdir{x} - \hx}
&=
\norm{\vbarx + \qbarx-\hx} \nonumber \\
&\leq
\norm{\vbarx-\hx} + \norm{\qbarx} \nonumber \\
&\leq
K \frac{d^3 \sqrt{\eps}}{2^d} \max ( \|x\|_2, \| \xo\|_2)
+
\frac{\omega}{2^{d/2}}, \nonumber
\end{align}
where the last inequality follows from Lemmas~\ref{lemma:vxbarhx} and \ref{lem:noisesmall} above.  The proof is concluded by appealing to the continuity of $h_x$ with respect to nonzero $x$, and by  noting that
\[
\norm{v_x - h_x}
\leq \sum_i a_i \norm{v_i - h_x}
\leq K \frac{d^3 \sqrt{\eps}}{2^d} \max ( \|x\|_2, \| \xo\|_2)
+
\frac{\omega}{2^{d/2}},
\]
where we used the inequality above and that $\sum_i a_i = 1$.
\end{proof}

We will also need an upper bound on the norm of the step direction of our algorithm:

\begin{lemma}
Suppose that the WDC holds with $\epsilon < 1/(16 \pi d^2)^2$ and that the event $\eventnoisesmall$ holds with $\omega \leq \frac{2^{-d/2} \norm{\xo}}{8\pi}$. Then, for all $x$, 
and all $v_x \in \partial f(x)$,
\begin{align}
\label{GA:e23}
\|v_x\|
\leq
\frac{dK}{2^d} \max(\norm{x}, \norm{\xo}),
\end{align}
where $K$ is a numerical constant.
\end{lemma}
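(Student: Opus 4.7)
The plan is to bound $\|v_x\|$ via the triangle inequality $\|v_x\| \leq \|v_x - h_x\| + \|h_x\|$, using Lemma~\ref{lemma:vxtilde-hxtilde-subdifferential} for the first term and a direct estimate from the explicit formula for $h_x$ for the second.

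First I would handle $\|v_x - h_x\|$. Lemma~\ref{lemma:vxtilde-hxtilde-subdifferential} (whose hypotheses are exactly the ones assumed here) gives
\[
\|v_x - h_x\| \leq K \frac{d^3 \sqrt{\eps}}{2^d} \max(\|x\|, \|\xo\|) + \frac{\omega}{2^{d/2}}.
\]
Using $\eps < 1/(16\pi d^2)^2$, we obtain $d^3\sqrt{\eps} \leq d/(16\pi)$, so the first term is at most $(K/(16\pi)) \cdot d/2^d \cdot \max(\|x\|, \|\xo\|)$. Using the noise assumption $\omega \leq 2^{-d/2}\|\xo\|/(8\pi)$, we get $\omega/2^{d/2} \leq \|\xo\|/(8\pi \cdot 2^d)$, which is absorbed into a $d/2^d \cdot \max(\|x\|,\|\xo\|)$ term. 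Together these yield $\|v_x - h_x\| \lesssim d \cdot 2^{-d} \max(\|x\|, \|\xo\|)$.

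Next I would bound $\|h_x\|$ directly from its definition
\[
h_x = -\frac{1}{2^d}\Bigl(\prod_{i=0}^{d-1}\frac{\pi-\thetabar_i}{\pi}\Bigr)\xo + \frac{1}{2^d}\Bigl[x - \sum_{i=0}^{d-1}\frac{\sin\thetabar_i}{\pi}\Bigl(\prod_{j=i+1}^{d-1}\frac{\pi-\thetabar_j}{\pi}\Bigr)\frac{\|\xo\|}{\|x\|}x\Bigr].
\]
Since $\thetabar_i \in [0,\pi]$, each factor $(\pi-\thetabar_i)/\pi$ lies in $[0,1]$ and $|\sin\thetabar_i|/\pi \leq 1/\pi$. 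Hence the coefficient multiplying $\xo$ has magnitude $\leq 2^{-d}$, and $\bigl\|\sum_{i=0}^{d-1}\frac{\sin\thetabar_i}{\pi}(\prod_{j>i}\tfrac{\pi-\thetabar_j}{\pi})\tfrac{\|\xo\|}{\|x\|}x\bigr\| \leq (d/\pi)\|\xo\|$. By the triangle inequality,
\[
\|h_x\| \leq \frac{1}{2^d}\|\xo\| + \frac{1}{2^d}\Bigl(\|x\| + \frac{d}{\pi}\|\xo\|\Bigr) \leq \frac{K'd}{2^d}\max(\|x\|, \|\xo\|).
\]

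Combining the two bounds gives $\|v_x\| \leq (dK/2^d)\max(\|x\|, \|\xo\|)$, as claimed. None of the steps look delicate: the WDC-based concentration and the estimate for $h_x$ are both routine once Lemma~\ref{lemma:vxtilde-hxtilde-subdifferential} is in hand, so I do not anticipate a main obstacle. The only small care needed is to verify that $h_x$ (and hence $v_x - h_x$) makes sense for every nonzero $x$, which is already built into Lemma~\ref{lemma:vxtilde-hxtilde-subdifferential}.
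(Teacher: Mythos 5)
Your proof is correct and follows essentially the same route as the paper: both split $\|v_x\| \leq \|h_x\| + \|v_x - h_x\|$, bound the second term with Lemma~\ref{lemma:vxtilde-hxtilde-subdifferential}, and bound $\|h_x\|$ directly from its formula using $|\zeta_j| \leq 1$ and $|\sin\thetabar_i| \leq 1$. The constants you obtain match those in the paper's computation.
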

\begin{proof}
Define for convenience $\zeta_j=\prod_{i = j}^{d - 1} \frac{\pi - \bar{\theta}_{j, x, x_*}}{\pi}$.
We have
\begin{align}
\|v_x\|
\leq&
\|h_{x}\| + \|h_{x} - v_x\| \nonumber \\
\leq&
\left\|\frac{1}{2^d} x - \frac{1}{2^d} \zeta_{0} x_* - \frac{1}{2^d} \sum_{i = 0}^{d - 1} \frac{\sin \bar{\theta}_{i,x}}{\pi} \zeta_{i + 1}  \frac{\norm{\xo}}{\norm{x}} x \right\|
+
K_1 \frac{d^3 \sqrt{\eps}}{2^d} \max ( \|x\|_2, \| \xo\|_2)
+
\frac{\omega}{2^{d/2}}
\nonumber \\
\leq&
\frac{1}{2^d} \|x\| + \left( \frac{1}{2^d}  + \frac{d}{\pi 2^d} \right) \|x_*\| + K_1 \frac{d^3 \sqrt{\epsilon}}{2^d} \max(\|x\|, \|x_*\|) + \frac{\omega}{2^{d/2}} \nonumber \\
\leq&
\frac{dK}{2^d} \max(\norm{x}, \norm{\xo}), \nonumber
\end{align}
where the second inequality follows from the definition of $h_x$ and Lemma~\ref{lemma:vxtilde-hxtilde-subdifferential}, the third inequality uses $| \zeta_j | \leq 1$,
and the last inequality uses the assumption $\omega \leq \frac{2^{-d/2} \norm{\xo}}{8\pi}$.
\end{proof}



\subsection{Proof of Theorem~\ref{thm:main}A}

We are now ready to prove Theorem~\ref{thm:main}A.
The logic of the proof is illustrated in Figure~\ref{sec:ProofLogic}.
Recall that $x_i$ is the $i$th iterate of $x$ as per Algorithm \ref{alg}.
We first ensure that we can assume throughout that $x_i$ is bounded away from zero:

\begin{lemma}
\label{lem:notinzeroball}
Suppose that WDC holds with $\epsilon < 1/ (16 \pi d^2)^2$ and that $\eventnoisesmall$ holds with $\omega$ in \eqref{eq:qnoiseb} obeying $\omega \leq \frac{2^{-d/2} \norm{\xo}}{8\pi}$.
Moreover, suppose that the step size in Algorithm~\ref{alg} satisfies $0<\alpha < \frac{K 2^d}{d^2}$, where $K$ is a numerical constant.
Then, after at most $N= (\frac{38 \pi K_0 2^d}{\alpha})^2$ steps, we have that for all $i>N$ 
\whcomm{}{ and all $t \in [0, 1]$ that $t \tilde{x}_i + (1 - t) x_{i + 1} \notin \mathcal{B}(0, \frac{1}{32\pi} \|x_*\|)$.}
\end{lemma}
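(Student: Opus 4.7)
The strategy is to combine a lower bound on the step-direction norm $\|v_x\|$ for $x$ near $0$ with the descent property of $f$, so that the algorithm is forced to leave a neighborhood of $0$ within finitely many steps, and descent of $f$ then erects a barrier preventing re-entry. Throughout, fix $r = \|x_*\|/(32\pi)$ and $r_0 = 2r = \|x_*\|/(16\pi)$.

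\textbf{Step 1 (lower bound on $\|v_x\|$ near $0$).} Using the explicit formula for $h_x$, I would show $\|h_x\| \geq C_0 \cdot 2^{-d}\|x_*\|$ whenever $\|x\| \leq r_0$, by case analysis on $\bar\theta_0 = \angle(x, x_*)$. When $\bar\theta_0$ is bounded away from $\pi$, the product $\zeta_0 = \prod_{i=0}^{d-1}(\pi - \bar\theta_i)/\pi$ is bounded below (since $\bar\theta_i = g^{(i)}(\bar\theta_0)$ decreases), and the term $-\frac{\zeta_0}{2^d} x_*$ in $h_x$ dominates. In the antipodal regime $\bar\theta_0 \approx \pi$ (where $\zeta_0 \approx 0$), one has $\bar\theta_1 = g(\pi) = \pi/2$, so the coefficient $c = \sum_{i=0}^{d-1}\frac{\sin\bar\theta_i}{\pi}\zeta_{i+1}$ is bounded below; for $\|x\|\leq r_0 \ll \|x_*\|$ the second term $\frac{1}{2^d}(1 - c\|x_*\|/\|x\|) x$ then dominates and has magnitude $\gtrsim 2^{-d}\|x_*\|$. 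Combining this with Lemma~\ref{lemma:vxtilde-hxtilde-subdifferential} and the assumptions $\eps < 1/(16\pi d^2)^2$ and $\omega \leq 2^{-d/2}\|x_*\|/(8\pi)$ yields $\|v_{\tilde x_i}\| \geq (C_0/2)\cdot 2^{-d}\|x_*\|$ for every $\tilde x_i$ in $\mathcal B(0,r_0)$.

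\textbf{Step 2 (barrier: lower bound on $f$ near $0$).} Expanding $f(x) = \tfrac12\|G(x) - G(x_*) - \eta\|^2$ and using $\|G(x_*)\|^2 \geq \|x_*\|^2/(4\pi\cdot 2^d)$ from~\eqref{piwipxxpiwipyy}, $\|G(x)\|^2 \leq \tfrac{13}{12}\cdot 2^{-d}\|x\|^2$ from~\eqref{eq:boundGammax}, plus the smallness of $\omega$, one obtains $f(x) \geq c_2\cdot 2^{-d}\|x_*\|^2$ for all $\|x\| \leq r_0$. Consequently, once $f(x_i) < c_2\cdot 2^{-d}\|x_*\|^2$, descent of $f$ forces $\|x_j\| > r_0$ for all $j \geq i$, because otherwise $f(x_j) \geq c_2\cdot 2^{-d}\|x_*\|^2 > f(x_i) \geq f(x_j)$.

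\textbf{Step 3 (descent and counting).} Since $f$ is piecewise quadratic with local Hessian norm at most $\tfrac{13}{12}\cdot 2^{-d}$ by~\eqref{eq:boundGammax}, and $\alpha < K\cdot 2^d/d^2$, the standard descent inequality gives $f(x_{i+1}) \leq f(\tilde x_i) - (\alpha/2)\|v_{\tilde x_i}\|^2$; the negation step is trivially non-increasing in $f$. Combined with Step~1, each iteration with $\|\tilde x_i\| \leq r_0$ shrinks $f$ by $\geq c_3\alpha\cdot 4^{-d}\|x_*\|^2$. Since $f\geq 0$, the number of such iterations before $f(x_i)$ dips below $c_2\cdot 2^{-d}\|x_*\|^2$ is at most $O(f(x_0)\cdot 4^d/(\alpha\|x_*\|^2))$, which is comfortably less than $N = (38\pi K_0 2^d/\alpha)^2$ (the stated $N$ is generous; $f(x_0)$ is a priori bounded using coercivity of $f$ at large $\|x\|$ plus the descent property, yielding a uniform bound $\|x_i\| \lesssim \max(\|x_0\|,\|x_*\|)$).

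\textbf{Step 4 (endpoints to full line segment).} Using~\eqref{GA:e23}, $\alpha\|v_{\tilde x_i}\| \leq \alpha\cdot (dK/2^d)\max(\|\tilde x_i\|,\|x_*\|) \leq (K K'/d)\max(\|\tilde x_i\|,\|x_*\|)$, which by the uniform a priori bound on $\|x_i\|$ and smallness of the numerical constant in $\alpha < K\cdot 2^d/d^2$ is at most $r_0 - r = r$. Hence, writing the segment as $\tilde x_i - s\alpha v_{\tilde x_i}$ with $s = 1-t\in[0,1]$,
\[
\|t\tilde x_i + (1-t)x_{i+1}\| = \|\tilde x_i - s\alpha v_{\tilde x_i}\| \geq \|\tilde x_i\| - \alpha\|v_{\tilde x_i}\| > r_0 - r = r,
\]
as desired.

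The main obstacle is Step~1: the two terms in $h_x$ can have competing signs and magnitudes depending on $\bar\theta_0$, so producing a single lower bound uniform over all angles requires a careful dichotomy exploiting the dynamics $\bar\theta_i = g^{(i)}(\bar\theta_0)$, and in particular the fact that even when $\zeta_0 \approx 0$ (antipodal regime), $g(\pi) = \pi/2$ forces the alternate term to be nontrivial. A secondary, more routine obstacle is establishing the uniform bound $\|x_i\| \lesssim \max(\|x_0\|,\|x_*\|)$ needed in Step~4, which follows from combining descent of $f$ with the coercivity estimate $f(x) \gtrsim 2^{-d}\|x\|^2$ for large $\|x\|$ obtained from~\eqref{piwipxxpiwipyy} and~\eqref{eq:boundGammax}.
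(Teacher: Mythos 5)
Your overall plan is different from the paper's, and it has a genuine gap that prevents it from establishing the stated conclusion. Let me isolate the two most serious issues.

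\textbf{Re-entry is not controlled.}
Your Steps~2--3 bound the \emph{total number} of iterations during which $\|\tilde x_i\|\le r_0$, via the descent of $f$ and the barrier $f(x)\ge c_2 2^{-d}\|\xo\|^2$ on the ball. But the lemma asks for something stronger: that \emph{no} iterate beyond some index $N$ (nor the segment it spans) re-enters $\mathcal B(0,K_0\|\xo\|)$. Counting the in-ball iterations does not bound the index of the \emph{last} in-ball iteration; the trajectory could spend long stretches outside the ball (where your argument says nothing about $\|x_i\|$) and then dip back in arbitrarily late. The barrier only prevents re-entry \emph{after} $f$ has dropped below $c_2 2^{-d}\|\xo\|^2$, and there is no reason $f$ must ever do so. In fact when $\sigma$ is large, $f(x)\approx\tfrac12\|\eta\|^2$ for every $x$, so $f$ stays far above the barrier and the mechanism never activates. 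Separately, even when the barrier is checked, it is marginal: under only $\omega \le 2^{-d/2}\|\xo\|/(8\pi)$, the cross term $|\langle G(x)-G(\xo),\eta\rangle|\le (\|x\|+\|\xo\|)\omega 2^{-d/2}$ is of the same order as $\tfrac12\|G(x)-G(\xo)\|^2 \gtrsim \tfrac{1}{8\pi}2^{-d}\|\xo\|^2$, so the sign of $f(x)-f(\xo)$ cannot be controlled. The paper gets an unconditional re-entry barrier by a completely different mechanism: it proves the radial inequality $\langle x, v_x\rangle < 0$ for all $x\in\mathcal B(0,2K_0\|\xo\|)$ and all $v_x\in\partial f(x)$, directly from the WDC estimates~\eqref{piwipxxpiwipyy} and~\eqref{eq:boundGammax} (not via $h_x$). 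This gives an obtuse triangle $0,\tilde x_i, x_{i+1}$, hence $\|x_{i+1}\|^2\ge\|\tilde x_i\|^2+\alpha^2\|v_{\tilde x_i}\|^2$. The norm therefore strictly increases while in $\mathcal B(0,2K_0\|\xo\|)$, and, combined with the outside-the-ball step bound, never again drops below $K_0\|\xo\|$ once it exceeds it — exactly the ``for all $i>N$'' statement.

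\textbf{The iteration count must be independent of $x_0$.}
The lemma gives $N=(38\pi K_0 2^d/\alpha)^2$, a number depending only on $d,\alpha,\|\xo\|$. Your count $f(x_0)\cdot 4^d/(\alpha\|\xo\|^2)$ scales with $f(x_0)$ and hence with $\|x_0\|$, which is an arbitrary nonzero initializer. The obtuse-triangle argument avoids this because the increment $\alpha^2\|v_{\tilde x_i}\|^2\ge\alpha^2(2^d 16\pi)^{-2}\|\xo\|^2$ is uniform, so the norm reaches $2K_0\|\xo\|$ in $O\bigl((2^d/\alpha)^2\bigr)$ in-ball steps regardless of the starting point; starting from a large $\|x_0\|$ is only better, not worse.

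Two secondary issues: (a) In Step~4 you want $\alpha\|v_{\tilde x_i}\|\le r$, but $\|v_{\tilde x_i}\|$ grows linearly with $\|\tilde x_i\|$ via~\eqref{GA:e23}, so an absolute bound fails when $\|\tilde x_i\|$ is large. The paper uses the scale-invariant bound $\alpha\|v_{\tilde x_i}\|\le\tfrac12\|\tilde x_i\|$ and concludes $\|\tilde x_i - s\alpha v_{\tilde x_i}\|\ge\tfrac12\|\tilde x_i\|\ge K_0\|\xo\|$. (b) The case analysis you propose for $\|h_x\|$ in the antipodal regime picks up a $\prod(\pi-\bar\theta_j)/\pi\gtrsim d^{-3}$-type factor, which would weaken the bound; working with $\langle -x/\|x\|, v_x\rangle$ and the concentration estimates directly sidesteps both the case analysis and that loss.
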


In particular, if  $\alpha = K2^d/d^ 2$, then $N$ is bounded by a constant times $d^4$.

We can therefore assume throughout this proof that $x_i \notin \mathcal B(0, K_0 \norm{\xo})$, $K_0=\frac{1}{32\pi}$.
We prove Theorem~\ref{thm:main} by showing that if $\norm{h_x}$ is sufficiently large, i.e., if the iterate $x_i$ is outside of set
\begin{align*}
\setS_{\beta} &= \Bigl \{ x \in \R^k \mid \| h_x \| \leq \frac{1}{2^d} \beta \max(\|x\|, \|\xo\| ) \Bigr \},
\end{align*}
with
\begin{align}
\label{eq:choicebeta}
\beta = 4 K d^3 \sqrt{\epsilon}
+
13 \omega 2^{d/2}/\norm{\xo},
\end{align}
then the algorithm makes progress in the sense that
$f(x_{i+1}) - f(x_i)$ is smaller than a certain negative value.
The set $\setS_\beta$ is contained in two balls around
$\xo$ and $-\rho \xo$, whose radius is controlled by $\beta$:

\begin{lemma} \label{lemma:Sepsst}
For any $\beta \leq \frac{1}{64^2 d^{12}}$,
\begin{align}
\setS_\beta
\subset
\mathcal{B}(\xo, 5000 d^6 \beta \|\xo\|_2  ) \cup  \mathcal{B}( -\zetacheck_d \xo, 500 d^{11} \sqrt{\beta} \|\xo\|_2).
 \label{Seps-estimate-balls}
\end{align}
Here, $\rho_d>0$ is defined in the proof and obeys $\rho_d \to 1$ as $d \to \infty$.
\end{lemma}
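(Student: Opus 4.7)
The plan is to show that $\|h_x\|$ small forces the angle $\bar\theta_0 \defeq \angle(x, \xo)$ to lie near either $0$ or $\pi$, and then in each of those two regimes to extract a quantitative bound on $\|x-\xo\|$ or $\|x + \rho_d \xo\|$ by solving for the radial coordinate $\|x\|/\|\xo\|$.

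First, I would decompose $\hx$ inside $\Span(x, \xo)$. Writing $\xo = \|\xo\|\cos\thetabarn{0}\, \hat x + \|\xo\|\sin\thetabarn{0}\, \hat u$ for $\hat u \perp \hat x$, the formula for $\hx$ gives
\begin{align*}
\langle \hx, \hat u \rangle &= -\tfrac{1}{2^d}\Bigl(\prod_{i=0}^{d-1}\tfrac{\pi-\thetabarn{i}}{\pi}\Bigr)\|\xo\|\sin\thetabarn{0}, \\
\langle \hx, \hat x \rangle &= \tfrac{1}{2^d}\Bigl[\|x\| - \Bigl(\prod_{i=0}^{d-1}\tfrac{\pi-\thetabarn{i}}{\pi}\Bigr)\|\xo\|\cos\thetabarn{0} - \Bigl(\sum_{i=0}^{d-1}\tfrac{\sin\thetabarn{i}}{\pi}\prod_{j=i+1}^{d-1}\tfrac{\pi-\thetabarn{j}}{\pi}\Bigr)\|\xo\|\Bigr].
\end{align*}
The hypothesis $\|\hx\| \leq 2^{-d}\beta\max(\|x\|,\|\xo\|)$ bounds both components separately.

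Second, I would analyze the perpendicular component to force $\thetabarn{0}$ into one of two regimes. The angle-iteration $\thetabarn{i+1} = g(\thetabarn{i})$ is monotone decreasing with $g(0)=0$ and $g(\pi)=\pi/2$, so for $\thetabarn{0} \in [\delta, \pi-\delta]$ one can show $\prod_{i=0}^{d-1}(\pi-\thetabarn{i})/\pi$ is lower bounded by a polynomial in $1/d$ (this follows from the quantitative asymptotics of $g^{d}$ used in \citep{hand_global_2017}). Combined with $|\sin\thetabarn{0}|$, this means the perpendicular bound forces either $\thetabarn{0} \lesssim d^{?}\beta$ or $\pi - \thetabarn{0} \lesssim d^{?}\sqrt{\beta}$, with the exact exponents worked out by tracking the $d$-dependence of the lower bound on the product.

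Third, in the small-angle case, all $\thetabarn{i}$ are tiny, so $\prod_i (\pi-\thetabarn{i})/\pi \approx 1$ and $\sum_i \sin\thetabarn{i}\prod_{j>i}(\pi-\thetabarn{j})/\pi$ is $O(d\thetabarn{0})$. The parallel equation then reduces to $\|x\| \approx \|\xo\|\cos\thetabarn{0}$, which with the small angle bound on $\thetabarn{0}$ yields $\|x - \xo\| \lesssim d^6 \beta \|\xo\|$. In the near-$\pi$ case, I would define $\zetacheck_d$ to be the unique positive root of $\|x\|/\|\xo\| = \sum_{i}\frac{\sin\thetacheck_i}{\pi}\prod_{j>i}\frac{\pi-\thetacheck_j}{\pi} - \prod_i\frac{\pi-\thetacheck_i}{\pi}$ evaluated at $\thetacheck_0 = \pi$, check $\zetacheck_d \to 1$ as $d \to \infty$ (a direct computation since $\thetacheck_i \to 0$), and use continuity of the right-hand side in $\thetabarn{0}$ near $\pi$ together with the parallel-component bound to conclude $\bigl|\|x\|/\|\xo\| - \zetacheck_d\bigr| \lesssim d^{?}\sqrt\beta$, hence $\|x + \zetacheck_d \xo\| \lesssim d^{11}\sqrt\beta\,\|\xo\|$.

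The main obstacle is quantifying step two and getting the asymmetric rates $\beta$ versus $\sqrt\beta$ between the two balls. The slower $\sqrt\beta$ scaling near $-\zetacheck_d \xo$ reflects the fact that the perpendicular component of $\hx$ is only $O(\pi-\thetabarn{0})$ near $\thetabarn{0} = \pi$ (through $\sin\thetabarn{0}$), while the parallel component's sensitivity to both $\thetabarn{0}$ and $\|x\|/\|\xo\|$ must be balanced carefully; extracting clean constants requires tracking the derivatives of $g^{d}$ and the polynomial $\sum_i \frac{\sin\thetabarn{i}}{\pi}\prod_{j>i}\frac{\pi-\thetabarn{j}}{\pi}$ with respect to $\thetabarn{0}$ near $\pi$, which is the technical heart of the argument and the source of the $d^{11}$ factor.
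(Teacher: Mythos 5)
Your outline follows the same overall architecture as the paper's proof of Lemma~\ref{lemma:Sepsst}: project $h_x$ onto two orthonormal directions spanning $\Span(x,\xo)$, use the transverse component to force $\thetabarn{0}$ into one of two regimes, use the longitudinal component to pin down $\|x\|/\|\xo\|$ in each regime, and assemble a Euclidean ball from the angle-plus-radius bounds. The one genuine structural difference is the basis you chose. The paper sets $\xohat = e_1$ and works with the $(e_1,e_2)$ components, which gives the pair $|{-\xi}+\cos\thetabarn{0}(r-\zeta)|\le\beta M$ and $|\sin\thetabarn{0}(r-\zeta)|\le\beta M$; since the transverse coefficient $r-\zeta$ can vanish at intermediate angles, the paper needs a three-way case split (Cases I, II, III) to recover the dichotomy. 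You instead project onto $(\hat x, \hat u)$ with $\hat u\perp\hat x$, obtaining the transverse component $-\tfrac{1}{2^d}\xi\sin\thetabarn{0}$. Because the coefficient $\xi$ carries a uniform lower bound $\xi\ge\frac{\pi-\thetabarn{0}}{\pi}d^{-3}$ (the paper's display \eqref{pi-theta-pi-d-cubed}, established via $\thetabarn{i}\le\thetacheck_i\le 3\pi/(i+3)$), the single inequality $(\pi-\thetabarn{0})\sin\thetabarn{0}\lesssim d^4\beta$ already yields the dichotomy $\thetabarn{0}\lesssim d^4\beta$ or $\pi-\thetabarn{0}\lesssim d^2\sqrt{\beta}$ directly, with no case split and with a better power of $d$ in the near-$\pi$ regime. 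That is a real, if modest, simplification.

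What you have not written out, and what constitutes most of the paper's effort, is the quantitative content: establishing the preliminary bound $r\le 4d$ from the longitudinal component (otherwise $M$ is uncontrolled and the transverse argument doesn't close); the Taylor control $\xi = O_1(\delta/\pi)$ and $\zeta = \zetacheck_d + O_1(3d^3\delta)$ for $\thetabarn{0}=\pi+O_1(\delta)$, which ultimately set the $d^6$ and $d^{11}$ exponents in the two balls; and the conversion from the angle-and-magnitude bound to the Euclidean radius via the polar-coordinate inequality $\|x-\xo\|\le|\|x\|-\|\xo\||+(\|\xo\|+|\|x\|-\|\xo\||)\thetabarn{0}$. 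Your identification of $\zetacheck_d$ as $\zeta$ evaluated along the iteration started at $\thetacheck_0=\pi$ matches the paper, and $\zetacheck_d=\rho_d\to 1$ is exactly what the paper proves at the end of its argument (and, quantitatively, in Lemma~\ref{lemma:rho-d}). So the plan is sound and the decomposition choice is a clean variant; it just needs the bookkeeping filled in.
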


Note that by the assumption $\omega \leq \frac{\norm{\xo} K_1 2^{-d/2} }{d^{16}}$ and $K d^{45} \sqrt{\eps} \leq 1$,
our choice of $\beta$ in \eqref{eq:choicebeta} obeys $\beta \leq \frac{1}{64^2 d^{12}}$ for sufficiently small $K_1, K$, and thus Lemma~\ref{lemma:Sepsst} yields:
\[
\setS_\beta
\subset
\mathcal{B}(\xo, r)
\cup
\mathcal{B}( -\zetacheck_d \xo, \sqrt{r \norm{\xo}} d^8).
\]
were we define the radius
$
r = K_2 d^9 \sqrt{\epsilon} \norm{\xo} + K_3 d^6 \omega 2^{d/2}
$, where $K_2,K_3$ are numerical constants.
Note that hat the radius $r$ is equal to the right hand side in the error bound~\eqref{eq:errorxixo} in our theorem.
In order to guarantee that the algorithm converges to a ball around $\xo$, and not to that around $-\rho_d \xo$, we use the following lemma:

\begin{lemma} \label{GA:le14}
Suppose that the WDC holds with $\eps < 1 / (16 \pi d^2)^2$.
Moreover suppose that $\eventnoisesmall$ holds, and that $\omega$ in the event $\eventnoisesmall$ obeys
$\frac{ \omega }{2^{-d/2} \|\xo\|_2} \leq K_9 / d^2$, where $K_9<1$ is a universal constant. Then for any $\phi_d \in [\rho_d, 1]$, it holds that
\begin{align} \label{GA:e25}
f(x) < f(y)
\end{align}
for all $x \in \mathcal{B}(\phi_d x_*, K_3 d^{-10} \|x_*\|)$ and $y \in \mathcal{B}(- \phi_d x_*, K_3 d^{-10} \|x_*\|)$, where $K_3 < 1$ is a universal constant.
\end{lemma}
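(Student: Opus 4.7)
The plan is to establish a clean deterministic gap $f(-\phi_d\xo) - f(\phi_d\xo) \gtrsim 2^{-d}\|\xo\|^2 / d^q$ for a small exponent $q$ (in the noiseless setting, via WDC), and then certify that this gap survives the two perturbations we must allow: moving each center point inside a ball of radius $K_3 d^{-10}\|\xo\|$, and the noise cross-term. I would begin with the decomposition
\begin{equation*}
f(x) - f(y) = \tfrac{1}{2}\bigl[\,\|G(x)-G(\xo)\|^2 - \|G(y)-G(\xo)\|^2\,\bigr] - \inner{G(x)-G(y)}{\eta},
\end{equation*}
and handle the two bracketed (deterministic) terms and the noise term separately.

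For the deterministic part at the centers $\phi_d\xo$ and $-\phi_d\xo$, I use positive homogeneity $G(\phi_d\xo) = \phi_d G(\xo)$ to get $\|G(\phi_d\xo)-G(\xo)\|^2 = (1-\phi_d)^2\|G(\xo)\|^2$, and I use the WDC-based identity for $\Lambda_{-\phi_d\xo}^t\Lambda_{\xo}$ (the same identity that defines $h_x$) together with $\|G(u)\|^2 = \|u\|^2/2^d + O(d\eps\,2^{-d}\|u\|^2)$ to compute $f(-\phi_d\xo)$ up to WDC errors of order $d^3\sqrt{\eps}\,2^{-d}\|\xo\|^2$. Carrying this out yields
\begin{equation*}
f(-\phi_d\xo) - f(\phi_d\xo) = 2^{-d}\|\xo\|^2 \phi_d\Bigl[\,1 - \Pipithetapii\Bigm|_{\thetabar_0=\pi}\Bigr] + \text{WDC error},
\end{equation*}
so the main technical step is a quantitative lower bound on $1 - \prod_{i=0}^{d-1}(\pi-\thetabar_i)/\pi$ along the angle-map iterates starting from $\thetabar_0 = \pi$. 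This bound must be polynomial in $1/d$ and uniform over $\phi_d \in [\rho_d,1]$; it is consistent with the $\rho_d$-critical-point analysis used elsewhere in the proof.

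The perturbation step is routine: by \eqref{eq:boundGammax} and $G(x) = \Lambda_x x$, the map $x \mapsto G(x)$ is Lipschitz with constant $O(2^{-d/2})$ on the relevant region, hence
\begin{equation*}
\bigl|\,\|G(x)-G(\xo)\|^2 - \|G(\phi_d\xo)-G(\xo)\|^2\,\bigr| \lesssim K_3 d^{-10}\,2^{-d}\|\xo\|^2,
\end{equation*}
and similarly for $y$. For the noise term, writing $\inner{G(x)}{\eta} = \inner{x}{\Lambda_x^t\eta}$ and applying Lemma~\ref{lem:noisesmall} gives
\begin{equation*}
\bigl|\inner{G(x)-G(y)}{\eta}\bigr| \leq (\|x\|+\|y\|)\,\omega/2^{d/2} \lesssim 3\|\xo\|\omega/2^{d/2} \leq 3K_9\,2^{-d}\|\xo\|^2/d^2
\end{equation*}
under the hypothesis on $\omega$. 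Combining, for $K_3, K_9$ chosen sufficiently small the perturbation ($\lesssim K_3/d^{10}$) and noise ($\lesssim K_9/d^2$) contributions are dominated by the deterministic gap ($\gtrsim 1/d^q$ with small $q$), giving $f(x) - f(y) < 0$ uniformly on the two balls.

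The main obstacle is the quantitative lower bound in the deterministic gap step: I must track the iterates $\thetabar_{i+1} = g(\thetabar_i)$ starting from $\thetabar_0 = \pi$ and show that $\prod_{i=0}^{d-1}(\pi-\thetabar_i)/\pi$ is bounded away from $1$ by a polynomial-in-$1/d$ amount, so that the gap comfortably absorbs a $K_3 d^{-10}$ perturbation and a $K_9 d^{-2}$ noise error. Everything else (WDC substitution, Lipschitz bound, noise control) follows by direct application of the lemmas already in the paper.
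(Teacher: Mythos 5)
Your overall structure matches the paper's: decompose $f(x)-f(y)$ into a deterministic part plus a noise cross-term, compute the deterministic gap at the centers $\pm\phi_d\xo$ via WDC, argue that Lipschitz perturbation and the noise bound from Lemma~\ref{lem:noisesmall} are both dominated by the gap when $K_3,K_9$ are small. That is exactly the architecture of the paper's proof (which works with $f_\eta = f_0 - \langle G(\cdot)-G(\xo),\eta\rangle$ and an upper/lower bound pair rather than the difference $f(x)-f(y)$, but these are equivalent up to an additive constant).

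However, your gap formula is wrong, and this is precisely the heart of the argument. Using the $h_x$-identity, $\Lambda_x^t \Lambda_\xo \xo \approx 2^{-d}\bigl(\xi\,\xo + \zeta \tfrac{\|\xo\|}{\|x\|}x\bigr)$ with $\xi = \Pipithetapii$ and $\zeta = \zetaterm$. At $x=-\phi_d\xo$ we have $\thetabar_0=\pi$, hence the $i=0$ factor of $\xi$ is $\tfrac{\pi-\pi}{\pi}=0$ so $\xi=0$ \emph{identically}, and $\zeta = \zetacheck_d = \rho_d$. Tracking this through $f_0(\pm\phi_d\xo)$ gives
\begin{equation*}
f_0(-\phi_d\xo) - f_0(\phi_d\xo) \approx \frac{\phi_d\,(1-\rho_d)\,\|\xo\|^2}{2^d},
\end{equation*}
so the quantity controlling the gap is $1-\rho_d$, the complement of the \emph{sum-of-sines} term $\zetacheck_d$, not $1-\xi$. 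You wrote the gap as $2^{-d}\|\xo\|^2\phi_d\bigl[1 - \Pipithetapii\big|_{\thetabar_0=\pi}\bigr]$ and flagged ``a quantitative lower bound on $1-\Pipithetapii$'' as the main technical step, but that product is exactly zero at $\thetabar_0=\pi$, so $1-\Pipithetapii=1$ trivially; there is nothing to bound there, and the resulting gap of order $2^{-d}\|\xo\|^2$ would be off by a factor of $d^2$. The genuine technical step is the lower bound $1-\rho_d \geq 1/\bigl(K_1(d+2)^2\bigr)$ (paper's Lemma~\ref{lemma:rho-d}), obtained from the recurrence $\rhotilde_d = (1-\thetacheck_{d-1}/\pi)\rhotilde_{d-1} + (\thetacheck_{d-1}-\sin\thetacheck_{d-1})/\pi$ and the two-sided estimates $\pi/(i+1)\le\thetacheck_i\le 3\pi/(i+3)$. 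The $d^{-2}$ factor from this bound, fed into the comparison $\varphi \lesssim \bigl(\phi_d(1-\rho_d)/(Cd^3)\bigr)^2$, is where the radius constraint $K_3 d^{-10}\|\xo\|$ and the noise constraint $\omega \lesssim 2^{-d/2}\|\xo\|/d^2$ actually come from; with your (too-large) gap you would incorrectly conclude much weaker constraints suffice.
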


In order to apply Lemma~\ref{GA:le14}, define for convenience the two sets:
\begin{align*}
\setS_\beta^+ \defeq&  \setS_\beta \cap \mathcal{B}(x_*, r), \hbox{ and }\\
\setS_\beta^- \defeq&  \setS_\beta \cap \mathcal{B}(- \rho_d x_*, \sqrt{r \norm{\xo}} d^8 ).
\end{align*}

By the assumption that $K d^{45} \sqrt{\eps} \leq 1$ and $\omega \leq K_1 d^{-16} 2^{-d/2} \norm{\xo}$, we have that for sufficiently small $K_1, K$,
\[
\setS_\beta^+
\subseteq
\mathcal{B}(x_*, K_3 d^{-10} \|x_*\|)
\quad
\text{and}
\quad
\setS_\beta^-
\subseteq
\mathcal{B}(- \rho_d x_*, K_3 d^{-10} \|x_*\|).
\]
Thus, the assumptions of Lemma~\ref{GA:le14} are met, and the lemma implies that for any $x \in \setS_{\beta}^-$ and $y \in \setS_{\beta}^+$, it holds that $f(x) > f(y)$.
 We now show that the algorithm converges to a point in $\setS_\beta^+$.  This fact and the negation step in our algorithm (line 3-5) establish that the algorithm converges to a point in $\setS_\beta^+$ if we prove that the objective is nonincreasing with iteration number, which will form the remainder of this proof.

Consider $i$ such that $x_i \notin \setS_\beta$.  
 By the mean value theorem \cite[Theorem~8.13]{clason_2018}, there is a $t \in [0,1]$ such that for $\hat{x}_{i} = x_{i} - t \alpha \stepdir{x_i}$ there is a $v_{\hat{x}_{i}} \in \partial f(\hat{x}_{i})$, where $\partial f$ is the generalized subdifferential of $f$, obeying
\begin{align}
f(x_{i} - \alpha \stepdir{x_i}) - f(x_{i})
=&
\inner[]{v_{\hat{x}_{i}}}{- \alpha \stepdir{x_i}} \nonumber \\
=&
\inner[]{\stepdir{x_i}}{- \alpha \stepdir{x_i}} + \inner[]{v_{\hat{x}_{i}} - \stepdir{x_i}}{- \alpha \stepdir{x_i}} \nonumber \\
\leq&
- \alpha \|\stepdir{x_i}\|^2
+ \alpha \|v_{\hat{x}_{i}} - \stepdir{x_i}\| \|\stepdir{x_i}\| \nonumber \\
=&
- \alpha \|\stepdir{x_i}\|
(\|\stepdir{x_i}\| -  \|v_{\hat{x}_{i}} - \stepdir{x_i}\| ). \label{GA:e21}
\end{align}
In the next subsection, we guarantee that for any $t\in [0,1]$, $v_{\hat{x}_{i}}$ with
$\hat{x}_{i} = x_{i} - t \alpha \stepdir{x_i}$
is close to $\stepdir{x_i}$:
\begin{align}
\|v_{\hat{x}_{i}} - \stepdir{x_i}\|
\leq&
\biggl(\frac{5}{6}  + \alpha K_7 \frac{d^2}{2^d}\biggr)
\norm{ \stepdir{x_i} },  \text { for all } v_{\hat{x}_{i}} \in \partial f(\hat{x}_i).
\label{GA:e24}
\end{align}
Applying~\eqref{GA:e24} to~\eqref{GA:e21} yields
\begin{equation*}
f(x_{i} - \alpha \stepdir{x_i}) - f(x_{i})
\leq
- \frac{1}{12} \alpha \|\stepdir{x_i}\|_2^2,
\end{equation*}
where we used that $\alpha K_7 \frac{d^2}{2^d} \leq \frac{1}{12}$, by our assumption on the stepsize $\alpha$ being sufficiently small.

Thus, the maximum number of iterations for which $x_i \not\in \setS_\beta$ is $f(x_0) 12 / (\alpha \min_i  \norm{\stepdir{x_i}}^2)$.
We next lower-bound $\norm{\stepdir{x_i}}$.
We have that on $\eventnoisesmall$, for all $x \not\in \setS_\beta$, with $\beta$ given by \eqref{eq:choicebeta} that
\begin{align}
\|\stepdir{x} \|_2&\geq \|h_x\| - \| h_x - \stepdir{x}\| \notag \\
&\geq  2^{-d} \max(\|x\|, \|x_*\|) \Bigl( \beta - K_1 d^3 \sqrt{\eps} - \omega\frac{2^{d/2}}{\norm{x_*}}   \Bigr)   \notag\\
&\geq 2^{-d} \max(\|x\|, \|x_*\|) \biggl(3K d^3 \sqrt{\eps} + 12 \omega\frac{2^{d/2}}{\norm{x_*}} \biggr) \label{lowerbound-vxtilde-with-noise} \\
&\geq 2^{-d} \|x_*\| 3K d^3 \sqrt{\eps}, \notag
\end{align}
where the second inequality follows by the definition of $\setS_\beta$ and Lemma \ref{lemma:vxtilde-hxtilde-subdifferential}, and the third inequality follows from our definition of $\beta$ in equation~\eqref{eq:choicebeta}.
Thus,
\begin{align*}
f(x_{i} - \alpha \stepdir{x_i}) - f(x_{i})
\leq
- \alpha K_5 2^{-2d}d^6 \epsilon \norm{\xo}^2
\leq
-2^{-d} d^4 K_6 \epsilon \norm{\xo}^2
\end{align*}
where we used
$\alpha = K_4\frac{2^d}{d^2}$.
Hence, there can be at most $\frac{f(x_0) 2^d}{K_6 d^4 \epsilon \norm{\xo}^2}$ iterations for which $x_i \not \in \setS_\beta$.

In order to conclude our proof, we remark that once $x_i$ is inside a ball of radius $r$ around $\xo$, the iterates do not leave a ball of radius $2r$ around $\xo$.
To see this, note that by the bound on $\|v_x\|$ given in equation~\eqref{GA:e23} and our choice of stepsize,
\[
\alpha \norm{\stepdir{x_i}}
\leq
\frac{K}{d} \max(\norm{x_i}, \norm{\xo}).
\]
This concludes the proof of Theorem~\ref{thm:main}A. 


\subsection{Proof of Theorem~\ref{thm:main}B}

Theorem~\ref{thm:main}A establishes that after $N$ iterations the iterates $x_i$ are inside a ball of radius $2r$ around $\xo$. With the assumption that $\epsilon \leq K_1 / d^{90}$ for sufficiently small $K_1$ and the definition of $r$, this implies that the iterates lie in a ball around $\xo$ of radius at most $K_3 d^{-10} \|x_*\|$. 
In this proof of Theorem~\ref{thm:main}B, we prove convergence within this ball.

In this proof, we show that for any $i \geq N$, it holds that $x_i \in \mathcal{B}(x_*, a_4 d^{-10} \|x_*\|)$, $\tilde{x}_i = x_i$, and
\begin{align*}
\|x_{i + 1} - x_*\| \leq b_2^{i + 1 - N} \|x_N - x_*\| + b_4 2^{d/2} \omega.
\end{align*}
where $K_3$ is defined in Lemma~\ref{GA:le14}, $b_2 = 1 - \frac{\alpha}{2^d} \frac{7}{8}$ and $b_4$ is a universal constant.

We need Lemma~\ref{GA:le10} which guarantees that the search directions of the iterates afterward point to $x_*$ only up to the noise $\omega$:
\begin{lemma} \label{GA:le10}
Suppose the WDC
holds with $200 d \sqrt{d \sqrt{\epsilon}} < 1$ and $x \in \mathcal{B}(x_*, d \sqrt{\epsilon} \|x_*\|)$.  Then
for all $x \neq 0$ and for all $v_x \in \partial f(x)$,
\begin{equation*}
\left\|v_x - \frac{1}{2^d} (x - x_*)\right\| \leq \frac{1}{2^d} \frac{1}{8} \|x - x_*\| + \frac{1}{2^{d/2}} \omega.
\end{equation*}
\end{lemma}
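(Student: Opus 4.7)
I would bound $v_x - \tfrac{1}{2^d}(x - x_*)$ by interpolating through the deterministic approximation $h_x$ from the preliminaries. Using the convex-hull structure of $\partial f(x)$ from \eqref{subdifferential-convex-hull} and the continuity of $h_x$ in nonzero $x$, exactly as in the proof of Lemma~\ref{lemma:vxtilde-hxtilde-subdifferential}, the problem reduces to controlling $\tilde v_x - \tfrac{1}{2^d}(x - x_*)$ at differentiable points. Splitting $\tilde v_x = \bar v_x + \bar q_x$, the noise piece $\bar q_x = \Lambda_x^t \eta$ is bounded by $\omega/2^{d/2}$ on $\eventnoisesmall$, which exactly accounts for the noise term in the conclusion. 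It remains to establish the deterministic estimate $\|\bar v_x - \tfrac{1}{2^d}(x - x_*)\| \leq \tfrac{1}{8 \cdot 2^d}\|x - x_*\|$, which I would split via the triangle inequality into an analytic piece $\|h_x - \tfrac{1}{2^d}(x - x_*)\|$ and a concentration piece $\|\bar v_x - h_x\|$.

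For the analytic piece, I would use the explicit formula
\[
h_x - \tfrac{1}{2^d}(x - x_*) \;=\; \tfrac{1}{2^d}\bigl[(1 - \Pi_d)\,x_* \;-\; S_d\,\tfrac{\|x_*\|}{\|x\|}\,x\bigr],
\]
with $\Pi_d := \prod_{i=0}^{d-1}(1 - \bar{\theta}_i/\pi)$ and $S_d := \sum_{i=0}^{d-1}(\sin \bar{\theta}_i/\pi)\prod_{j > i}(1 - \bar{\theta}_j/\pi)$. The key observation is a second-order Taylor cancellation: expanding both quantities in powers of $\bar{\theta}_0$ gives $1 - \Pi_d - S_d = O(\bar{\theta}_0^3)$, because their linear and quadratic contributions match term by term. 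Combining this with $\bar{\theta}_i \leq \bar{\theta}_0$ (from $g(\theta) \leq \theta$ on $[0,\pi]$), $\bar{\theta}_0 \leq C\|x - x_*\|/\|x_*\|$ (elementary geometry), $|\|x_*\|/\|x\| - 1| \leq C\|x - x_*\|/\|x_*\|$, and $|S_d| \leq Cd\bar{\theta}_0$, yields $\|h_x - \tfrac{1}{2^d}(x - x_*)\| \leq Cd\|x - x_*\|^2/(2^d\|x_*\|) \leq Cd^2\sqrt{\epsilon}\|x - x_*\|/2^d$, which is at most $\tfrac{1}{16 \cdot 2^d}\|x - x_*\|$ under the hypothesis $200 d\sqrt{d\sqrt{\epsilon}} < 1$ (since this forces $d^3\sqrt{\epsilon} \leq 1/40000$).

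The concentration piece $\|\bar v_x - h_x\|$ is where the main obstacle lies. The global estimate from Lemma~\ref{lemma:vxbarhx} gives $\|\bar v_x - h_x\| \leq K d^3\sqrt{\epsilon}\max(\|x\|,\|x_*\|)/2^d$, which scales with $\|x_*\|$ rather than $\|x - x_*\|$ and is therefore too weak when $\|x - x_*\| \ll \|x_*\|$. I would need a local refinement on $\mathcal{B}(x_*, d\sqrt{\epsilon}\|x_*\|)$ of the form $\|\bar v_x - h_x\| \leq C' d^3\sqrt{\epsilon}\|x - x_*\|/2^d$. This refinement would be obtained by rerunning the layerwise WDC-based argument behind Lemma~\ref{lemma:vxbarhx}, but carefully tracking at each layer the deviation of $\Lambda_x x - \Lambda_{x_*}x_*$ from its first-order linearization $\Lambda_{x_*}(x - x_*)$; each such deviation carries an extra factor scaling with $\bar{\theta}_0 \leq C\|x - x_*\|/\|x_*\|$, yielding the $\|x - x_*\|$-dependent bound. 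Under $200d\sqrt{d\sqrt{\epsilon}} < 1$ the prefactor $C'd^3\sqrt{\epsilon}$ is at most $1/16$, and combined with the analytic piece the triangle inequality produces the required $\tfrac{1}{8\cdot 2^d}\|x - x_*\|$ estimate.
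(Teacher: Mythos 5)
Your reduction to a local concentration estimate has the right shape, and your bound on the analytic piece does go through: writing
\[
h_x - \tfrac{1}{2^d}(x - x_*) = \tfrac{1}{2^d}\Bigl[(1-\Pi_d-S_d)\,x_* + S_d\Bigl(x_* - \tfrac{\|x_*\|}{\|x\|}x\Bigr)\Bigr],
\]
the telescoping identity $1 - \Pi_d = \sum_{i}\frac{\bar{\theta}_i}{\pi}\prod_{j>i}(1-\bar{\theta}_j/\pi)$ gives $|1-\Pi_d-S_d| = \sum_i\frac{\bar{\theta}_i - \sin\bar{\theta}_i}{\pi}\prod_{j>i}(1-\bar{\theta}_j/\pi) \lesssim d\,\bar{\theta}_0^3$, while the second term is $\lesssim |S_d|\,\|x_*\|\,\bar{\theta}_0 \lesssim d\,\bar{\theta}_0^2\|x_*\|$; both are $O(d^2\sqrt{\epsilon}\,\|x-x_*\|/2^d)$ on the ball, which the hypothesis makes small. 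The gap is the concentration piece. You state that $\|\bar{v}_x - h_x\| \lesssim d^3\sqrt{\epsilon}\,\|x-x_*\|/2^d$ on $\mathcal{B}(x_*, d\sqrt{\epsilon}\|x_*\|)$ ``would be obtained by rerunning the layerwise WDC-based argument,'' but you never carry this out, and it is not a consequence of Lemma~\ref{lemma:vxbarhx}, whose right-hand side scales with $\max(\|x\|,\|x_*\|)$ rather than $\|x-x_*\|$ (as you yourself note). This local refinement is precisely the nontrivial content of the lemma, so asserting it leaves the main work undone.

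The paper takes a route that avoids $h_x$ altogether. Lemma~\ref{GA:le8} decomposes $\bar{v}_x = \Lambda_x^t(\Lambda_x x - \Lambda_{x_*}x_*)$ as $T_1 + T_2$, where $T_1 = \Lambda_x^t\Lambda_x(x-x_*)$ is controlled directly by the WDC near-isometry, and $T_2$ is a telescoping sum over layers $j$ of terms containing the factor $(W_{j,+,x}-W_{j,+,x_*})$ applied to the propagated signal at depth $j-1$. Each such factor is bounded via the sign-change estimate $\|(W_{+,u}-W_{+,v})v\| \leq \sqrt{2(2\epsilon+\theta)}\|u-v\|$ from Lemma~\ref{GA:le20}, and the propagated difference is bounded by the forward Lipschitz estimate of Lemma~\ref{GA:le19}; the $\|x-x_*\|$ scaling then falls out of the telescoping structure rather than from any refined concentration about $h_x$. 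Were you to flesh out your ``rerun the argument'' step, you would in effect be reproducing this telescoping/Lipschitz machinery, so you have correctly located the hard part of the proof but have not done it.
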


Suppose $\tilde{x}_i \in \mathcal{B}(x_*, K_3 d^{-10} \|x_*\|)$. By the assumption $\epsilon \leq K_1 / d^{90}$ for sufficiently small $K_1$, the assumptions in Lemma~\ref{GA:le10} are met. Therefore,
\begin{align}
\|x_{i + 1} - x_*\|
=& \|\tilde{x}_i - \alpha {v}_{\tilde{x}_i} - x_*\|   \nonumber \\
=& \|\tilde{x}_i - x_*  -  \frac{\alpha}{2^d} (\tilde{x}_i - x_*) - \alpha {v}_{\tilde{x}_i} + \frac{\alpha}{2^d} (\tilde{x}_i - x_*) \| \nonumber \\
\leq& \left( 1 - \frac{\alpha}{2^d} \right) \|\tilde{x}_i - x_*\| + \alpha \|{v}_{\tilde{x}_i} - \frac{1}{2^d} (\tilde{x}_i - x_*)\| \nonumber \\
\leq& \left( 1 - \frac{\alpha}{2^d} \right) \|\tilde{x}_i - x_*\| + \alpha \left( \frac{1}{8} \frac{1}{2^d} \|\tilde{x}_i - x_*\| + \frac{1}{2^{d/2}} \omega \right) \nonumber \\
=& \left( 1 - \frac{\alpha}{2^d} \frac{7}{8} \right) \|\tilde{x}_i - x_*\| + \alpha \frac{1}{2^{d/2}} \omega, \label{GA:e64}
\end{align}
where the second inequality holds by Lemma~\ref{GA:le10}.
By the assumptions $\tilde{x}_i \in \mathcal{B}(x_*, K_3 d^{-10} \|x_*\|)$, $\omega \leq \frac{K_1 \|x_*\| }{d^{16} 2^{d/2} }$, and using~\eqref{GA:e64}, we have $x_{i + 1} \in \mathcal{B}(x_*, K_3 d^{-10} \|x_*\|)$. In addition, using Lemma~\ref{GA:le14} yields that $\tilde{x}_{i + 1} = x_{i+1}$. Repeat the above steps yields that $x_i \in \mathcal{B}(x_*, K_3 d^{-10} \|x_*\|)$ and $\tilde{x}_i = x_i$ for all $ i \geq N$.

Using~\eqref{GA:e64} and $\alpha = K_4 \frac{2^d}{d^2}$, we have
\begin{align} \label{GA:e65}
\|x_{i + 1} - x_*\|\leq b_2 \|x_i - x_*\| + b_3 \frac{2^{d/2}}{d^2} \omega,
\end{align}
where $b_2 = 1 - 7 K_4 / (8 d^2)$ and $b_3$ is a universal constant. Repeatedly applying~\eqref{GA:e65} yields
\begin{align*}
\|x_{i + 1} - x_*\| \leq& b_2^{i + 1 - N} \|x_N - x_*\| + (b_2^{i-N} + b_2^{i - N - 1} + \cdots + 1) \frac{b_3 2^{d/2}}{d^2} \omega \\
\leq& b_2^{i + 1 - N} \|x_N - x_*\| + \frac{b_3 2^{d/2}}{(1 - b_2) d^2} \omega \\
\leq& b_2^{i + 1 - N} \|x_N - x_*\| + b_4 2^{d/2} \omega,
\end{align*}
where the last inequality follows from the definition of $b_2$, and $b_4$ is a universal constant. This finishes the proof for~\eqref{GA:e68:2}. Inequality~\eqref{GA:e74:2} follows from Lemma~\ref{GA:le19}.

This concludes the proof.

The remainder of the proof is devoted to prove the lemmas used in this section.

\subsection{Proof of Equation~\eqref{GA:e24}}

Our proof relies on $h_x$ being Lipschitz, as formalized by the lemma below, which is proven in Section~\ref{sec:prooflemma:h-lipschitz}:

\begin{lemma}
\label{lemma:h-lipschitz}
For any $x, y \notin \ball(0, K_0 \norm{\xo})$, where $K_0$ and $K_4$ are numerical constants,
\[
\norm{h_{x} - h_{y}}
\leq \frac{K_4d^2}{2^d} \norm{x-y}.
\]
\end{lemma}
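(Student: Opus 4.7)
The plan is to split $h_x - h_y$ into three pieces according to the three terms in the definition of $h_x$, and bound each piece separately. Write
\begin{align*}
h_x - h_y
&= -\frac{1}{2^d}\Bigl(\prod_{i=0}^{d-1}\tfrac{\pi-\bar\theta_i(x)}{\pi} - \prod_{i=0}^{d-1}\tfrac{\pi-\bar\theta_i(y)}{\pi}\Bigr)x_* \\
&\quad + \frac{1}{2^d}(x-y) - \frac{\|x_*\|}{2^d}\Bigl(\sum_{i=0}^{d-1}\alpha_i(x)\tfrac{x}{\|x\|} - \sum_{i=0}^{d-1}\alpha_i(y)\tfrac{y}{\|y\|}\Bigr),
\end{align*}
where $\alpha_i(x) := \tfrac{\sin\bar\theta_i(x)}{\pi}\prod_{j=i+1}^{d-1}\tfrac{\pi-\bar\theta_j(x)}{\pi}$ and each $\bar\theta_i(x)\in[0,\pi]$. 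The middle piece is trivially $\|x-y\|/2^d$. So the content of the lemma is to handle the angle-dependent factors and the direction $x/\|x\|$.

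First I would establish the elementary Lipschitz estimates that drive everything. The map $x\mapsto x/\|x\|$ satisfies $\|x/\|x\| - y/\|y\|\| \leq 2\|x-y\|/\min(\|x\|,\|y\|)$, so on the complement of $\ball(0,K_0\|x_*\|)$ both this map and the map $x\mapsto \angle(x,x_*)$ are Lipschitz with constant $\lesssim 1/\|x_*\|$. Next, the angle recursion $g$ defined in~\eqref{defn-g} satisfies $0\leq g'(\theta)\leq 1$ on $[0,\pi]$ (a standard computation, and one that is implicit elsewhere in the paper), so by the chain rule $|\bar\theta_i(x)-\bar\theta_i(y)|\leq |\bar\theta_0(x)-\bar\theta_0(y)|\leq C\|x-y\|/\|x_*\|$ uniformly in $i$.

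The product and sum of angle-dependent scalars are then controlled by a telescoping argument. Since each factor $(\pi-\bar\theta_j)/\pi$ lies in $[0,1]$ and each is $(1/\pi)$-Lipschitz in $\bar\theta_j$, a telescoping estimate gives
\[
\Bigl|\prod_{i=0}^{d-1}\tfrac{\pi-\bar\theta_i(x)}{\pi}-\prod_{i=0}^{d-1}\tfrac{\pi-\bar\theta_i(y)}{\pi}\Bigr|\leq \frac{d}{\pi}\max_i|\bar\theta_i(x)-\bar\theta_i(y)|\leq \frac{Cd}{\|x_*\|}\|x-y\|,
\]
and similarly $|\alpha_i(x)-\alpha_i(y)|\leq Cd\|x-y\|/\|x_*\|$. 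Multiplying the first estimate by $\|x_*\|/2^d$ controls the first summand of $h_x-h_y$ by $Cd\|x-y\|/2^d$. For the third summand, use the product-rule decomposition
\[
\alpha_i(x)\tfrac{x}{\|x\|}-\alpha_i(y)\tfrac{y}{\|y\|}=\bigl(\alpha_i(x)-\alpha_i(y)\bigr)\tfrac{x}{\|x\|}+\alpha_i(y)\bigl(\tfrac{x}{\|x\|}-\tfrac{y}{\|y\|}\bigr),
\]
bound $|\alpha_i(y)|\leq 1$, sum over $i\in\{0,\ldots,d-1\}$ (this contributes the second factor of $d$), and multiply by $\|x_*\|/2^d$ to get a bound of order $d^2\|x-y\|/2^d$. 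Combining the three pieces yields the Lipschitz constant $K_4 d^2/2^d$.

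The only subtlety is the uniform bound $g'(\theta)\leq 1$ on $[0,\pi]$; if one wished to be careful, the endpoint $\theta=0$ must be inspected separately, but $g'(0)=0$ there so the bound holds everywhere. The rest is bookkeeping with the triangle inequality and the identities above.
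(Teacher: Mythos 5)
Your proposal follows the paper's argument essentially step for step. The paper first proves (in Lemma~\ref{GA:le4}) the bound $\|h_x-h_y\|\le\bigl(\tfrac{1}{2^d}+\tfrac{6d+4d^2}{\pi 2^d}\max(\tfrac{1}{\|x\|},\tfrac{1}{\|y\|})\|x_*\|\bigr)\|x-y\|$ and then restricts to $x,y\notin\ball(0,K_0\|x_*\|)$; you use exactly the same three-piece decomposition of $h_x-h_y$, the same angle-Lipschitz propagation via $g'\in[0,1]$ (the paper's property~\eqref{GA:e13} combined with~\eqref{GA:e4}--\eqref{GA:e5}), the same telescoping bound on the product (the paper's Lemma~\ref{GA:le17}), and the same product-rule split of the sum term into an $(\alpha_i(x)-\alpha_i(y))\hat x$ piece and an $\alpha_i(y)(\hat x-\hat y)$ piece. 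The accounting of the two factors of $d$ matches as well: one from the depth of the telescoping product, one from the sum over $i$.

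One small misstatement in your closing remark: you assert $g'(0)=0$ to dispatch the endpoint. A Taylor expansion of $g(\theta)=\cos^{-1}\bigl(\tfrac{(\pi-\theta)\cos\theta+\sin\theta}{\pi}\bigr)$ near $\theta=0$ gives $u(\theta)=1-\theta^2/2+O(\theta^3)$ inside the arccos, so $\sqrt{1-u^2}\sim\theta$ while $u'(\theta)\sim-\theta$, hence $g'(0)=1$, not $0$. This does not affect your argument, since you only need $g'\le 1$, which still holds at the endpoint; but the justification you offer for it is wrong.
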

By Lemma~\ref{lemma:h-lipschitz}, for all $t \in [0, 1]$ and $i > N$  (recall that by Lemma~\ref{lem:notinzeroball}, after at most $N$ steps, $x_i \neq \ball (0, K_0 \norm{\xo})$):
\begin{equation}
\label{lem:hxhatmhlip}
\|h_{\hat{x}_{i}} - h_{x_{i}}\| \leq \frac{K_4d^2}{2^d} \|\hat{x}_{i} - x_{i}\|,
\end{equation}
where $\hat{x}_{i} = x_{i} - t \alpha \stepdir{x_i}$.
Thus, we have that on $\eventnoisesmall$, for any $v_{\hat{x}_{i}} \in \partial f(\hat{x}_{i}) $  by Lemma \ref{lemma:vxtilde-hxtilde-subdifferential},
\begin{align}
\|v_{\hat{x}_{i}} - \stepdir{x_i}\|
\leq&
\|v_{\hat{x}_{i}} - h_{\hat{x}_{i}}\| + \|h_{\hat{x}_{i}} - h_{x_{i}}\| + \|h_{x_{i}} - \stepdir{x_i}\| \nonumber \\
\leq&
K_1 \frac{d^3 \sqrt{\epsilon}}{2^d} \max(\|\hat{x}_{i}\|, \|x_*\|) + \frac{\omega}{2^{d/2}} + \frac{K_4d^2}{2^d} \|\hat{x}_{i} - x_{i}\| + K_1 \frac{d^3 \sqrt{\epsilon}}{2^d} \max(\|x_{i}\|, \|x_*\|) + \frac{\omega}{2^{d/2}} \nonumber \\
\leq&
K_1 \frac{d^3 \sqrt{\epsilon}}{2^d} \max(\|x_{i}\| + \alpha \|\stepdir{x_i}\|, \|x_*\|) + \frac{K_4d^2}{2^d} \alpha \|\stepdir{x_i}\| + K_1 \frac{d^3 \sqrt{\epsilon}}{2^d} \max(\|x_{i}\|, \|x_*\|) + 2\frac{\omega}{2^{d/2}}  \nonumber \\
\leq&
K_1 \frac{d^3 \sqrt{\epsilon}}{2^d} \left(2 + \frac{\alpha d K}{2^d}  \right) \max(\|x_{i}\|, \|x_*\|) + \frac{K_4d^2}{2^d} \alpha \|\stepdir{x_i}\| + 2\frac{K_9/d^2}{2^{d}} \norm{\xo} \label{eta-minus-vtilde}
\end{align}
where the second inequality is from Lemma \ref{lemma:vxtilde-hxtilde-subdifferential} and Equation~\eqref{lem:hxhatmhlip},
and the fourth inequality is from~\eqref{GA:e23}
and the assumption $\frac{ \omega }{2^{-d/2} \|\xo\|_2} \leq K_9 / d^2$.

Combining \eqref{eta-minus-vtilde} and \eqref{lowerbound-vxtilde-with-noise}, we get that
\begin{align*}
\|v_{\hat{x}_{i}} - \tilde{v}_{x_{i}}\|
\leq&
\left(
\frac{5}{6}  + \alpha K_7 \frac{d^2}{2^d}
\right)
\norm{ \stepdir{x_i} },
\end{align*}
with the appropriate constants chosen sufficiently small.
This concludes the proof of Equation~\eqref{GA:e24}.

\subsection{Proof of Lemma~\ref{lem:notinzeroball}}
First suppose that $x_i \in \ball(0,2K_0 \norm{\xo})$.
We show that after a polynomial number of iterations $N$, we have that $x_{i+N} \notin \ball(0,2K_0 \norm{\xo})$.
Below, we prove that
\begin{align}
\label{eq:properstepzero}
\text{
$\innerprod{x}{\whcomm{}{v_{x}}} < 0$
and
$\whcomm{}{\norm{ v_x }} \geq \frac{1}{2^d 16\pi} \norm{\xo}$
for all
$x \in \mathcal{B}(0, 2K_0 \norm{\xo})$ \whcomm{}{and $v_x \in \partial f(x)$}}.
\end{align}
It follows that for any
$\whcomm{}{\tilde{x}_i} \in \ball(0, 2K_0 \norm{\xo})$,
\whcomm{}{$\tilde{x}_i$} and the next iterate produced by the algorithm,
\whcomm{}{$x_{i+1} = \tilde{x}_i - \alpha v_{\tilde{x}_i}$}, form an obtruse triangle.
As a consequence,
\whcomm{}{
\begin{align*}
\norm{\tilde{x}_{i + 1}}^2 = \norm{x_{i + 1}}^2
&\geq
\norm{\tilde{x}_i}^2 + \alpha^2 \norm{v_{\tilde{x}_i}}^2 \\
&\geq
\norm{\tilde{x}_i}^2 + \alpha^2 \frac{1}{(2^{d} 16\pi)^2} \norm{\xo}^2,
\end{align*}
}
where the last inequality follows from~\eqref{eq:properstepzero}.
Thus, the norm of the iterates \whcomm{}{$\tilde{x}_i$} will increase until after $\bigl(\frac{2 K_0 2^d 16 \pi}{\alpha} \bigr)^2$  iterations, we have $\whcomm{}{\tilde{x}_{i+N}} \notin \ball(0, 2K_0 \norm{\xo})$.

\whcomm{}{
Consider $\tilde{x}_i \notin \mathcal{B}(0, 2 K_0 \|x_*\|)$, and note that
\begin{align*}
\alpha \|{v}_{\tilde{x}_i}\| \leq \alpha \frac{d K}{2^d} \max(\|\tilde{x}_i\|, \|x_*\|) \leq \alpha \frac{16 \pi K d}{2^d}\|\tilde{x}_i\| \leq \frac{1}{2}\|\tilde{x}_i\|,
\end{align*}
where the first inequality follows from~\eqref{GA:e23}, the second inequality from $\|\tilde{x}_i\| \geq 2 K_0 \|x_*\|$, and finally the last inequality from our assumption on the sufficiently small step size $\alpha$. Therefore, from $x_{i + 1} = \tilde{x}_i - \alpha {v}_{\tilde{x}_i}$, we have that $t \tilde{x}_i + (1 - t) x_{i+1} \notin \mathcal{B}(0, K_0 \|x_*\|)$ for all $t \in [0, 1]$, which completes the proof.
}

%
%

\paragraph{Proof of~\eqref{eq:properstepzero}:}
It remains to prove~\eqref{eq:properstepzero}.
We start with proving $\innerprod{x}{\stepdir{x}} < 0$.
For brevity of notation, let $\Lambda_{z} = \prod_{i = d}^1 W_{i, +, z}$.
We have
\whcomm{}{
\begin{align*}
x^T \stepdir{x} =&
\innerprod{ \Lambda_x^T \Lambda_x x - \Lambda_x^T \Lambda_{x_*} x_* + \Lambda_x^T \eta }{x} \\
\leq&
\frac{13}{12}2^{-d} \|x\|^2 - \frac{1}{4 \pi} \frac{1}{2^d} \|x\| \|x_*\| +
\norm{x} \frac{\omega}{2^{d/2}} \\
\leq&
\|x\| \left( \frac{13}{12}2^{-d} \|x\| +
\frac{1/(16 \pi)}{2^{d}} \|x_*\|
- \frac{1}{4 \pi} \frac{1}{2^d} \|x_*\| \right) \\
\leq&
\|x\| \frac{1}{2^d} \left( 2\|x\| - \frac{3}{16 \pi} \|x_*\| \right).
\end{align*}
}
The first inequality follows from \eqref{piwipxxpiwipyy} and~\eqref{eq:boundGammax}, and the second inequality follows from our assumption on $\omega$.
Therefore, for any $x \in \mathcal{B}(0, \frac{1}{16\pi} \|x_*\|)$, 
\whcomm{}{
$
\innerprod{x}{ \stepdir{x} } < -\frac{1}{16 \pi 2^d} \|x\| \|x_*\| \leq 0,
$
}
as desired.

If $G(x)$ is differentiable at $x$, then $v_x = \tilde{v}_x$ and $\inner[]{x}{{v}_x} < 0$. If $G(x)$ is not differentiable at $x$, by equation~\eqref{subdifferential-convex-hull}, we have
\whcomm{}{
\begin{align}
x^T {v}_x =& x^T (c_1 v_1 + c_2 v_2 + \cdots + c_t v_t) \leq (c_1 + c_2 + \ldots + c_t) \|x\| \frac{1}{2^d} \left( 2\|x\| - \frac{3}{16 \pi} \|x_*\| \right) \nonumber \\
=& \|x\| \frac{1}{2^d} \left( 2\|x\| - \frac{3}{16 \pi} \|x_*\| \right) < -\frac{1}{16 \pi 2^d} \|x\| \|x_*\| \leq 0, \label{GA:e75}
\end{align}
}
for all $v_x \in \partial f(x)$.

\whcomm{}{
Using~\eqref{GA:e75} yields
$$
\|v_x\| = \max_{\|u\|=1} \inner[]{v_x}{u} \geq \inner[]{v_x}{-x/\|x\|} = -\frac{x^T {v}_x}{\|x\|} > \frac{1}{2^d 16\pi} \|x_*\|,
$$
which concludes the proof of~\eqref{eq:properstepzero}.
}

%
%

%


\subsection{Proof of Lemma~\ref{lem:noisesmall}}

Let $\Lambda_x = \PiWd$.
We have that
\[
\norm{\qbarx}^2
=
\norm{ \Lambda_x^t \eta}^2
\leq
\norm{\Lambda_x}^2
\norm{ P_{\Lambda_x} \eta}^2,
\]
where $P_{\Lambda_x}$ is a projector onto the span of $\Lambda_x$.
As a consequence, $\norm{ P_{\Lambda_x} \eta}^2$ is $\chi^2$-distributed random variable with $k$-degrees of freedom scaled by $\sigma/n$. A standard tail bound (see~\cite[p.~43]{boucheron_concentration_2013}) yields that, for any $\beta \geq k$,
\[
\PR{\norm{ P_{\Lambda_x} \eta}^2 \geq 4 \beta}
\leq
2e^{-\beta}.
\]
Next, we note that by applying Lemmas 13-14 from \cite[Proof of Lem.~15]{hand_global_2017})\footnote{The proof in that argument only uses the assumption of independence of subsets of rows of the weight matrices.}, with probability one, that the number of different matrices $\Lambda_x$ can be bounded as
\[
|\left\{
\Lambda_x | x \neq 0
\right\}|
=
|\left\{
\PiWd | x \neq 0
\right\}|
\leq
10^{d^2}(n_1^d n_2^{d-1}\ldots n_d)^k
\leq
(n_1^d n_2^{d-1}\ldots n_d)^{2k},
\]
where the second inequality holds for
$\log(10) \leq k/4 \log(n_1)$.
To see this, note that
$(n_1^d n_2^{d-1}\ldots n_d)^{k} \geq 10^{d^2}$
is implied by
$k( d\log(n_1)+ (d-1) \log(n_2) + \ldots \log(n_d))
\geq k d^2/4 \log(n_1) \geq d^2 \log(10)$.
Thus, by the union bound,
\[
\PR{\norm{ P_{\Lambda_x} \eta}^2
\leq
16k\log(n_1^d n_2^{d-1}\ldots n_d),
\text{ for all $x$}
}
\geq
1 - 2e^{- 2 k\log (n)},
\]
where $n = n_d$.
Recall from~\eqref{eq:boundGammax} that $\norm{\Lambda_x} \leq \frac{13}{12}$.
Combining this inequality with
$\norm{\qbarx}^2
\leq
\norm{\Lambda_x}^2
\norm{ P_{\Lambda_x} \eta}^2$ concludes the proof.


\subsection{Proof of Lemma \ref{lemma:Sepsst} } \label{sec:zeros-hxxo}

We now show that $\hx$ is away from zero outside of a neighborhood of $\xo$ and $-\rho_d \xo$.  We prove Lemma \ref{lemma:Sepsst} by establishing the following:

\begin{lemma} \label{lemma:Seps}
Suppose $64 d^6 \sqrt{\beta} \leq 1$.
Define
\[
\zetacheck_d := \sum_{i=0}^{d-1} \frac{\sin \thetacheck_{i}}{\pi} \left( \prod_{j=i+1}^{d-1} \frac{\pi - \thetacheck_{j} }{\pi} \right),
\]
where $\thetacheck_0= \pi$ and $\thetacheck_i = g(\thetacheck_{i-1})$.  If $x \in \setS_\beta$, then we have that either
\[
|\thetabar_0| \leq 32 d^4 \beta \quad \text{and} \quad |\|x\|_2 - \|\xo\|_2| \leq  132 d^6 \beta \|\xo\|_2
\]
or
\[
|\thetabar_0 - \pi|
\leq
8\pi d^4 \sqrt{\beta}  \quad \text{and} \quad  \left| \|x\|_2 - \|\xo\|_2 \zetacheck_d \right | \leq 200 d^7 \sqrt{\epsilon} \|\xo\|_2.
\]
In particular, we have
\begin{align}
\setS_\beta \subset \mathcal{B}(\xo, 5000 d^6 \beta \|\xo\|_2  ) \cup  \mathcal{B}( -\zetacheck_d \xo, 500 d^{11} \sqrt{\beta} \|\xo\|_2).
 \label{Seps-estimate-balls2}
\end{align}
Additionally, $\rho_d \to 1$ as $d \to \infty$.
\end{lemma}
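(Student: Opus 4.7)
The plan is to exploit the fact that $h_x$ lies in the two-dimensional subspace $\Span(x, \xo)$, read off scalar conditions on $\|x\|$ and $\thetabar_0 := \angle(x, \xo)$ by decomposing $h_x$ in an orthonormal basis of this plane, and convert these back into the Euclidean bounds claimed. Let $\hat x := x/\|x\|$ and let $\hat w$ be the unit vector in $\Span(x, \xo)$ orthogonal to $\hat x$ with $\langle \hat w, \xo\rangle \geq 0$, so that $\xo = \|\xo\|(\cos\thetabar_0\, \hat x + \sin\thetabar_0\, \hat w)$; then the definition of $h_x$ gives
\[
2^d h_x \;=\; \bigl(\|x\| - \pi_d\|\xo\|\cos\thetabar_0 - \rho_d\|\xo\|\bigr)\hat x \;-\; \pi_d\|\xo\|\sin\thetabar_0\, \hat w,
\]
where $\pi_d := \prod_{i=0}^{d-1}\tfrac{\pi - \thetabar_i}{\pi}$ and $\rho_d := \sum_{i=0}^{d-1}\tfrac{\sin\thetabar_i}{\pi}\prod_{j=i+1}^{d-1}\tfrac{\pi - \thetabar_j}{\pi}$. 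Setting $M := \max(\|x\|, \|\xo\|)$, the condition $x \in \setS_\beta$ becomes the pair of scalar bounds (I) $\bigl|\|x\| - \pi_d\|\xo\|\cos\thetabar_0 - \rho_d\|\xo\|\bigr| \leq \beta M$ and (II) $\pi_d\|\xo\|\sin\thetabar_0 \leq \beta M$.

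The first step is to use (II) to force $\thetabar_0$ close to $0$ or close to $\pi$. The key is to obtain a quantitative lower bound on the map $\theta_0 \mapsto \pi_d(\theta_0)\sin\theta_0$: of order $\theta_0/\poly(d)$ for $\theta_0 \in (0, \pi/2]$, using that all iterates satisfy $\thetabar_i \leq \thetabar_0 \leq \pi/2$ by monotonicity of $g$, so each factor $(\pi - \thetabar_i)/\pi \geq 1/2$ hence $\pi_d \geq 2^{-d}$, and $\sin\thetabar_0 \gtrsim \thetabar_0$; and of order $(\pi - \theta_0)^2/\poly(d)$ for $\theta_0 \in [\pi/2, \pi)$, using that the leading factor $(\pi - \thetabar_0)/\pi$ of $\pi_d$ vanishes linearly while the remaining factors are bounded below by a constant (since $\thetabar_1 = g(\thetabar_0) \leq \pi/2$ and subsequent iterates decrease), together with $\sin\thetabar_0 = \sin(\pi - \thetabar_0) \gtrsim \pi - \thetabar_0$. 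Inserting these into (II) produces the dichotomy $\thetabar_0 \leq 32 d^4 \beta$ or $\pi - \thetabar_0 \leq 8\pi d^4 \sqrt\beta$ stated in the lemma.

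Next I would use (I) to pin down $\|x\|$ in each regime. If $\thetabar_0 \leq 32 d^4 \beta$ then $\thetabar_i \leq \thetabar_0$ for all $i$, so $1 - \pi_d \leq d\thetabar_0/\pi$ and $\rho_d \leq d\sin\thetabar_0/\pi$, both of order $d^5 \beta$; substituting into (I) and upgrading $M$ to $\|\xo\|$ by a short perturbation argument yields $\bigl|\|x\| - \|\xo\|\bigr| \leq 132 d^6 \beta \|\xo\|$. If $\pi - \thetabar_0 \leq 8\pi d^4 \sqrt\beta$ then $\pi_d$ itself is $O(d^4 \sqrt\beta)$ (dominated by its small leading factor), and a local Lipschitz estimate on $g$ propagated through $d$ compositions produces $|\rho_d - \zetacheck_d| = O(d^5 \sqrt\beta)$; substituting into (I) then gives $\bigl|\|x\| - \zetacheck_d \|\xo\|\bigr| \leq 200 d^7 \sqrt\beta \|\xo\|$.

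Finally I would convert the angular-plus-radial bounds into Euclidean bounds via the law of cosines: the expansion $\|x - \xo\|^2 = (\|x\| - \|\xo\|)^2 + 2\|x\|\|\xo\|(1 - \cos\thetabar_0) \leq (\|x\| - \|\xo\|)^2 + \|x\|\|\xo\|\thetabar_0^2$ yields the ball of radius $5000 d^6 \beta \|\xo\|$ around $\xo$ in the first regime, and the analogous expansion of $\|x + \zetacheck_d \xo\|^2$ using $1 + \cos\thetabar_0 \leq \tfrac{1}{2}(\pi - \thetabar_0)^2$ yields the ball of radius $500 d^{11} \sqrt\beta \|\xo\|$ around $-\zetacheck_d \xo$ in the second. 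The claim $\zetacheck_d \to 1$ then reduces to the fact that the iterates $\thetacheck_i$ tend to $0$ (at rate $1/i$, as is standard for the $g$-dynamics), via a direct summation. The main obstacle is obtaining the polynomial-in-$d$ control of $\pi_d(\theta_0)$ and $\rho_d(\theta_0)$ uniformly across all regimes of $\thetabar_0$, including the Lipschitz estimates on the $d$-fold composition of $g$ needed in the second regime; these are deterministic one-dimensional estimates analogous to those in~\citep{hand_global_2017}, and once they are in hand the rest is bookkeeping.
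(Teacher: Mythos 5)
Your decomposition of $h_x$ in the $(\hat x, \hat w)$ basis aligned with $x$ is a genuinely different (and arguably cleaner) starting point than the paper's, which works WLOG in a coordinate system aligned with $\xo$ (taking $\xo = e_1$ and $\hat x = \cos\thetabar_0\,e_1 + \sin\thetabar_0\,e_2$). The paper's $e_2$-component condition is $|\sin\thetabar_0\,(r-\zeta)|\leq\beta M$ where $\zeta$ denotes what you call $\rho_d$, which couples the angle to the radial quantity $r-\zeta$ and forces a three-way case split (on whether $\sin\thetabar_0$ is small and whether $|r-\zeta|$ is small) before the angular dichotomy emerges. Your choice of basis instead isolates $\xi := \prod_{i=0}^{d-1}\frac{\pi-\thetabar_i}{\pi}$ in the transverse component, giving $\xi\,\|\xo\|\sin\thetabar_0 \leq \beta M$ directly, so the dichotomy follows from a single quantitative lower bound on the scalar function $\theta_0 \mapsto \xi(\theta_0)\sin\theta_0$. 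That is a real structural simplification and would be worth adopting.

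However, the bound you actually supply for that scalar function does not close the argument. You write that since $\thetabar_i \leq \pi/2$ for $i\geq 1$, ``each factor $(\pi-\thetabar_i)/\pi \geq 1/2$ hence $\pi_d \geq 2^{-d}$''; combined with $\sin\thetabar_0 \gtrsim \thetabar_0$ this gives only $\xi\sin\thetabar_0 \gtrsim 2^{-d}\thetabar_0$, which is exponentially weak in $d$. Since $\beta$ is only polynomially small in $d$ (of order $d^{-12}$ when the assumption $64 d^6\sqrt\beta\leq 1$ is tight) and $M\leq 4d$, the bound $\thetabar_0 \lesssim 2^d\beta M$ is vacuous and cannot yield the claimed $|\thetabar_0|\leq 32 d^4\beta$. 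The same issue recurs in your large-angle case, where ``the remaining factors are bounded below by a constant'' again produces only a $2^{-d}$ lower bound on the product. What is actually needed — and what the paper proves as its inequality \eqref{pi-theta-pi-d-cubed}, namely $\xi \geq \frac{\pi-\thetabar_0}{\pi}\,d^{-3}$ — is that the iterates of $g$ started from any $\thetabar_0 \leq \pi$ decay like $\thetacheck_i \leq \frac{3\pi}{i+3}$ (inequality \eqref{upper-bound-thetai}), so that the telescoping product $\prod_{i=1}^{d-1}(1-\thetacheck_i/\pi) \geq \prod_{i=1}^{d-1}(1-\frac{3}{i+3}) \geq d^{-3}$ is only polynomially small. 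You flag ``polynomial-in-$d$ control of $\pi_d$ and $\rho_d$'' as the main obstacle, but the bound you offer for it is the wrong one; the $1/i$-decay of the $g$-iterates is not optional bookkeeping here but the central quantitative fact that makes the lemma true, and your argument as written doesn't establish it. Once that poly-in-$d$ lower bound on $\xi$ is substituted for the $2^{-d}$ one, the rest of your outline (including the $M\leq 4d$ reduction, the small-angle and large-angle pinning of $\|x\|$, the Lipschitz propagation of $\theta_0\approx\pi$ through $g^{\circ i}$ to control $|\rho_d-\zetacheck_d|$, and the law-of-cosines conversion to Euclidean balls) does go through and in fact yields slightly sharper powers of $d$ than the paper's case analysis.
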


\begin{proof}
Without loss of generality, let $\norm{\xo}
 =1$, $\xo = e_1$ and $\xhat = r \cos \thetabar_0 \cdot e_1 + r \sin \thetabar_0 \cdot e_2$ for $\thetabar_0 \in [0, \pi]$.  Let $x \in \setS_\beta$.

First we introduce some notation for convenience.  Let
\[
\xi = \Pipithetapii, \quad \zeta = \zetaterm, \quad r = \|x\|_2, \quad M = \max(r, 1).
\]
Thus, $\hx = - \frac{1}{2^d}\xi  \xohat + \frac{1}{2^d} (r- \zeta) \xhat$.
By inspecting the components of $\hx$, we have that $x\in \setS_\beta$ implies
\begin{align}
|-\xi + \cos \thetabar_0 ( r - \zeta)| \leq \beta M \label{Seps-e1term}\\
|\sin \thetabar_0 (r - \zeta)| \leq \beta M \label{Seps-e2term}
\end{align}

Now, we record several properties.  We have:
\begin{align}
\thetabar_{i} &\in [0, \pi/2] \text{ for } i \geq 1 \notag \\
\thetabar_{i} &\leq \thetabar_{i-1} \text{ for } i \geq 1 \notag\\
 |\xi| &\leq 1 \label{bound-c-one}\\
| \zeta| &\leq \frac{d}{\pi} \sin \theta_0 \label{zeta-bound-d-sin-theta} \\
\thetacheck_i &\leq \frac{3\pi}{i+3} \text{ for $i \geq 0$} \label{upper-bound-thetai}\\
\thetacheck_i &\geq \frac{\pi}{i+1} \text{ for $i \geq 0$} \label{lower-bound-thetai}\\
\xi = \prod_{i=0}^{d-1} \frac{\pi - \thetabar_i}{\pi} &\geq \frac{\pi - \thetabar_0}{\pi } d^{-3} \label{pi-theta-pi-d-cubed}\\
\thetabar_0 = \pi + O_1(\delta) &\Rightarrow \thetabar_i = \thetacheck_i + O_1(i \delta) \label{thetabar-thetacheck-bound}\\
\thetabar_0 = \pi + O_1(\delta) &\Rightarrow |\xi| \leq \frac{\delta}{\pi} \label{c-one-bound-thetabarzero}\\
\thetabar_0 = \pi + O_1(\delta) &\Rightarrow \zeta = \zetacheck_d + O_1(3 d^3 \delta) \text{ if } \frac{d^2 \delta}{\pi} \leq 1 \label{high-theta-zeta-zetabar-control}
\end{align}
We now establish \eqref{upper-bound-thetai}.  Observe $0< g(\theta) \leq \bigl( \frac{1}{3 \pi} + \frac{1}{\theta} \bigr)^{-1} =: \gtilde(\theta)$ for $\theta \in (0, \pi]$.  As $g$ and $\gtilde$ are monotonic increasing, we have $\thetacheck_i = g^{\circ i}(\thetacheck_0) = g^{\circ i}(\pi) \leq \gtilde^{\circ i}(\pi) = \bigl( \frac{i}{3 \pi} + \frac{1}{\pi} \bigr)^{-1} = \frac{3\pi}{i + 3}$.  Similarly, $g(\theta) \geq (\frac{1}{\pi} + \frac{1}{\theta})^{-1}$ implies that $\thetacheck_i \geq \frac{\pi}{i+1}$, establishing \eqref{lower-bound-thetai}.

We now establish \eqref{pi-theta-pi-d-cubed}.  Using \eqref{upper-bound-thetai} and $\thetabar_i \leq \thetacheck_i$, we have
\begin{align*}
\prod_{i=1}^{d-1} \Bigl(1 - \frac{\thetabar_i}{\pi} \Bigr) &\geq \prod_{i=1}^{d-1} \Bigl(1 - \frac{3}{i+3}  \Bigr)
\geq d^{-3},
\end{align*}
where the last inequality can be established by showing that the ratio of consecutive terms with respect to $d$ is greater for the product in the middle expression than for $d^{-3}$.

We establish \eqref{thetabar-thetacheck-bound} by using the fact that $|g'(\theta)| \leq 1$ for all $\theta \in [0, \pi]$ and using the same logic as for~\cite[Eq.~17]{hand_global_2017}.

We now establish \eqref{high-theta-zeta-zetabar-control}.  As $\thetabar_0 = \pi + O_1(\delta)$, we have $\thetabar_i = \thetacheck_i + O_1(i \delta)$.  Thus, if $\frac{d^2 \delta}{\pi}\leq 1$,
\[
\prod_{j=i+1}^{d-1} \frac{\pi - \thetabar_j}{\pi} = \prod_{j=i+1}^{d-1} \Bigl( \frac{\pi - \thetacheck_j}{\pi} + O_1(\frac{i \delta}{2 \pi} ) \Bigr) = \Bigl(\prod_{j=i+1}^{d-1}  \frac{\pi - \thetacheck_j}{\pi} \Bigr)  + O_1(d^2 \delta )
\]
So
\begin{align}
\zeta &= \sum_{i=0}^{d-1} \Bigl( \frac{\sin \thetacheck_i}{\pi} + O_1(\frac{i \delta}{\pi}) \Bigr)  \Bigl[ \Bigl(\prod_{j=i+1}^{d-1} \frac{\pi - \thetacheck_j}{\pi} \Bigr) +O_1(d^2 \delta)  \Bigr] \\
&= \zetacheck_d + O_1 \Bigl( d^2\delta / \pi + d^3 \delta /\pi + d^4 \delta^2/\pi   \Bigr) \\
&= \zetacheck_d + O_1( 3 d^3 \delta).
\end{align}
Thus \eqref{high-theta-zeta-zetabar-control} holds.

Next, we establish that $x \in \setS_\beta \Rightarrow r \leq 4d$, and thus $M \leq 4d$.  Suppose $r > 1$.  At least one of the following holds: $|\sin\thetabar_0| \geq 1/\sqrt{2}$ or $|\cos \thetabar_0| \geq 1/\sqrt{2}$. If $|\sin \thetabar_0| \geq 1/\sqrt{2}$ then  \eqref{Seps-e2term} implies that $|r - \zeta| \leq \sqrt{2} \beta r$.  Using \eqref{zeta-bound-d-sin-theta}, we get $ r \leq \frac{d/\pi}{1 - \sqrt{2} \beta} \leq d/2$ if $\beta < 1/4$.  If $| \cos \thetabar_0| \geq 1/\sqrt{2}$, then \eqref{Seps-e1term} implies that
$|r - \zeta| \leq \sqrt{2} (\beta r + |\xi|)$.   Using \eqref{bound-c-one}, \eqref{zeta-bound-d-sin-theta}, and $\beta<1/4$, we get $r \leq \frac{\sqrt{2} |\xi| +  \zeta}{1 - \sqrt{2}{\beta}} \leq \frac{d + \sqrt{2}}{1 - \sqrt{2} \beta} \leq 4 d$. Thus, we have $x \in S_\beta \Rightarrow r \leq 4d \Rightarrow M \leq 4d$.

Next, we establish that we only need to consider the small angle case ($\thetabar_0 \approx 0$) and the large angle case ($\thetabar_0 \approx \pi$), by considering the following three cases:
\begin{enumerate}
\item[] (Case I) $\sin \thetabar_0 \leq 16 d^4 \beta$: We have $\thetabar_0 = O_1(32 d^4 \beta)$ or $\thetabar_0 = \pi + O_1(32 d^4 \beta)$, as $32 d^4 \beta < 1$.

\item[] (Case II) $|r -\zeta| < \sqrt{\beta} M$: Applying case II to inequality~\eqref{Seps-e1term} yields $|\xi| \leq 2 \sqrt{\beta} M$.  Using \eqref{pi-theta-pi-d-cubed}, we get $\thetabar_0 = \pi + O_1(2 \pi d^3 \sqrt{\beta} M)$.

\item[] (Case III) $\sin \thetabar_0 > 16 d^4 \beta$ and $|r - \zeta| \geq \sqrt{\beta} M$:
   Finally, consider Case III.  By \eqref{Seps-e2term}, we have $|r - \zeta| \leq \frac{\beta M}{\sin\thetabar_0}$.
Using this inequality in~\eqref{Seps-e1term}, we have $|\xi| \leq \beta M + \frac{\beta M}{\sin \thetabar_0} \leq \frac{2 \beta M}{\sin \thetabar_0} \leq \frac{1}{8} d^{-4} M \leq \frac{1}{2} d^{-3}$, where the second to last inequality uses $\sin \thetabar_0 > 16 d^4 \beta$ and the last inequality uses $M \leq 4 d$.  By \eqref{pi-theta-pi-d-cubed},  we have $\frac{\pi - \thetabar_0}{\pi} d^{-3} \leq \xi \leq \frac{1}{2} d^{-3}$, which implies that $\thetabar_0 \geq \pi/2$.  Now, as $|r - \zeta| \geq \sqrt{\beta}M$, then by \eqref{Seps-e2term}, we have $|\sin\thetabar_0| \leq \sqrt{\beta}$.  Hence,  $\thetabar_0 = \pi + O_1(2 \sqrt{\beta})$, as $\thetabar_0 \geq \pi/2$ and as $\beta < 1$.
\end{enumerate}
At least one of the Cases I,II, or III hold. Thus, we see that it suffices to consider the small angle case $\thetabar_0 = O_1(32 d^4 \beta)$ or the large angle case $\thetabar_0 = \pi + O_1(8 \pi d^4 \sqrt{\beta})$.


\textbf{Small Angle Case}.  Assume $\thetabar_0 = O_1(\delta)$ with $\delta = 32 d^4 \beta$.  As $\thetabar_i \leq \thetabar_0 \leq \delta$ for all $i$, we have $1 \geq \xi \geq (1 - \frac{\delta}{\pi})^d = 1 + O_1(\frac{2 \delta d}{\pi})$ provided $\delta d/\pi \leq 1/2$ (which holds by our choice $\delta = 32 d^4 \beta$ by assumption $64 d^6 \sqrt{\beta} \leq 1$).   By \eqref{zeta-bound-d-sin-theta}, we also have $\zeta = O_1( \frac{d}{\pi} \delta)$.  By \eqref{Seps-e1term},  we have
\[
|-\xi + \cos \thetabar_0 ( r - \zeta)| \leq \beta M.
\]
Thus, as $\cos \thetabar_0 = 1 + O_1(\thetabar_0^2/2) = 1 + O_1(\delta^2/2)$,
\[
-\Bigl( 1 + O_1(\frac{2 \delta d}{\pi}) \Bigr) + (1 + O_1(\frac{2 \delta d}{\pi}) )(r + O_1(\frac{\delta d}{\pi})) = O_1(4 d \beta),
\]
and $r \leq M\leq 4 d$ (shown above) provides,
\begin{align}
r-1 &= O_1(4d \beta + \frac{2 \delta d}{\pi} + \frac{\delta d}{\pi} + \frac{2 \delta d}{\pi}4d + \frac{2 \delta^2 d^2}{\pi^2} )\\
&= O_1(4 \beta d + 4 \delta d^2).
\end{align}
By plugging in that $\delta = 32 d^4 \beta$, we have that $r-1 = O_1(132 d^6 \beta)$, where we have used that $\frac{32 d^5 \beta}{\pi} \leq 1/2$.

\textbf{Large Angle Case}.  Assume $\theta_0 = \pi + O_1(\delta)$ where $\delta = 8 \pi d^4 \sqrt{\beta}$.  By \eqref{c-one-bound-thetabarzero} and \eqref{high-theta-zeta-zetabar-control}, we have $\xi = O_1(\delta/\pi)$, and we have $\zeta = \zetacheck_d + O_1(3 d^3 \delta)$ if $8 d^6 \sqrt{\beta} \leq 1$.
By \eqref{Seps-e1term}, we have
\[
|-\xi +  \cos \theta_0 (r - \zeta) | \leq \beta M,
\]
so, as $\cos \theta_0 = 1 - O_1(\theta_0^2/2)$,
\[
O_1(\delta/\pi) + (1 + O_1(\delta^2/2))(r - \zetacheck_d + O_1(3 d^3 \delta)) = O_1(\beta M),
\]
and thus, using $r \leq 4 d$, $\zetacheck_d \leq d$, and $\delta = 8 \pi d^4 \sqrt{\beta} \leq 1$,
\begin{align}
r - \zetacheck_d &= O_1(\beta M + \delta/\pi + 3 d^3 \delta + \frac{5}{2} \delta^2 d + \frac{3}{2} d^3 \delta^3) \\
&= O_1 \Bigl(4 \beta d + \delta ( \frac{1}{\pi} + 3 d^3 + \frac{5}{2} d + \frac{3}{2} d^3) \Bigr) \\
&= O_1(200 d^7 \sqrt{\beta})
\end{align}

To conclude the proof of \eqref{Seps-estimate-balls2}, we use the fact that
\[
\|x - \xo\|_2 \leq  \bigl| \|x\|_2 - \|\xo\|_2   \bigr| + ( \|\xo\|_2 +  \bigl| \|x\|_2 - \|\xo\|_2   \bigr| ) \thetabar_0.
\]
This fact simply says that if a 2d point is known to have magnitude within $\Delta r$ of some $r$ and is known to be within angle $\Delta \theta$ from $0$, then its Euclidean distance to the point of polar coordinates $(r,0)$ is no more than $\Delta r + (r +  \Delta r) \Delta \theta$.

Finally, we establish that $\rho_d\to 1$ as $d \to \infty$.  Note that $\rho_{d+1} = (1 - \frac{\thetacheck_d}{\pi}) \rho_d + \frac{\sin \thetacheck_d}{\pi}$  and $\rho_0 = 0$.  It suffices to show $\rhotilde_d \to 0$, where $\rhotilde_d := 1 - \rho_d$.  The following recurrence relation holds: $\rhotilde_d = (1 - \frac{\thetacheck_{d-1}}{\pi}) \rhotilde_{d-1} + \frac{\thetacheck_{d-1} - \sin \thetacheck_{d-1}}{\pi}$, with $\rhotilde_0=1$.  Using the recurrence formula \cite[Eq.~(15)]{hand_global_2017}  and the fact that $\thetacheck_0=\pi$, we get that
\begin{align}
\rhotilde_d =  \sum_{i=1}^d \frac{\thetacheck_{i-1} - \sin \thetacheck_{i-1}}{\pi} \prod_{j=i+1}^d \bigl(1 - \frac{\thetacheck_{j-1}}{\pi} \bigr)
\end{align}
using \eqref{lower-bound-thetai}, we have that
\begin{align*}
\prod_{j=i+1}^d \Bigl(1 - \frac{\thetacheck_{j-1}}{\pi} \Bigr) &\leq \prod_{j=i+1}^d \Bigl(1 - \frac{1}{j} \Bigr)
= \exp \Bigl( -\sum_{j = i+1}^d \frac{1}{j}  \Bigr) \leq \exp \Bigl( - \int_{i+1}^{d+1} \frac{1}{s} ds  \Bigr) = \frac{i+1}{d+1}
\end{align*}
Using \eqref{upper-bound-thetai} and the fact that $\thetacheck_{i-1} - \sin \thetacheck_{i-1} \leq \thetacheck_{i-1}^3 / 6$, we have that $\rhotilde_d \leq \sum_{i=1}^d \frac{\thetacheck_{i-1}^3}{6\pi}\cdot  \frac{i+1}{d+1} \to 0$ as $d \to \infty$.

\end{proof}

\subsection{Proof of
Lemma \ref{GA:le14}}

Consider the function
\[
\feta(x) = \fzero(x) - \langle G(x) - G(\xo), \eta \rangle,
\]
and note that $f(x) = \feta(x) + \norm{\eta}^2$.
Consider $x \in \mathcal{B}(\phi_d x_*, \varphi  \|x_*\|)$, for a $\varphi$ that will be specified later.
Note that
\begin{align*}
\left|\innerprod{G(x) - G(x_\ast)}{\eta}\right|
&\leq
|\innerprod{\PiWd x}{\eta}|
+
|\innerprod{\PiWdo \xo}{\eta}| \\
&=
|\innerprod{x}{(\PiWd)^t\eta}|
+
|\innerprod{\xo}{(\PiWdo)^t\eta}| \\
&\leq (\norm{x} + \norm{\xo}) \frac{\omega}{2^{d/2}}\\
&\leq (\varphi \norm{\xo} + \norm{\xo}) \frac{\omega}{2^{d/2}},
\end{align*}
where the second inequality holds on the event $\eventnoisesmall$, by Lemma~\ref{lem:noisesmall}, and the last inequality holds by our assumption on $x$.
Thus, for $x \in \mathcal{B}(\phi_d x_*, \varphi  \|x_*\|)$
\begin{align}
\feta(x)
\leq&
\mathbb{E}\fzero(x) + |\fzero(x) - \mathbb{E}\fzero(x)| +
\left|\innerprod{G(x) - G(x_\ast)}{\eta}\right|
\nonumber \\
\leq&
\frac{1}{2^{d+1}} \left( \phi_d^2 - 2 \phi_d + \frac{10}{K_2^3} d \varphi \right) \|x_*\|^2 + \frac{1}{2^{d+1}} \|x_*\|^2 \nonumber \\
&+
\frac{\epsilon (1 + 4 \epsilon d)}{2^d} \|x\|^2 + \frac{\epsilon(1 + 4 \epsilon d) + 48 d^3 \sqrt{\epsilon}}{2^{d+1}} \|x\| \|x_*\| + \frac{\epsilon (1 + 4 \epsilon d)}{2^d} \|x_*\|^2 \nonumber \\
&+ (\varphi \norm{\xo} + \norm{\xo}) \frac{\omega}{2^{d/2}} \nonumber \\
\leq&
\frac{1}{2^{d+1}} \left( \phi_d^2 - 2 \phi_d + \frac{10}{K_2^3} d \varphi \right) \|x_*\|^2 + \frac{1}{2^{d+1}} \|x_*\|^2 \nonumber \\
&+
\frac{\epsilon (1 + 4 \epsilon d)}{2^d} (\phi_d + \varphi)^2 \|x_*\|^2 + \frac{\epsilon(1 + 4 \epsilon d) + 48 d^3 \sqrt{\epsilon}}{2^{d+1}} (\phi_d + \varphi)\|x_*\|^2 + \frac{\epsilon (1 + 4 \epsilon d)}{2^d} \|x_*\|^2  \nonumber \\
&+
(\varphi \norm{\xo} + \norm{\xo}) \frac{\omega}{2^{d/2}} \nonumber \\
\leq&
\frac{ \norm{x_*}^2 }{2^{d+1}}
\left(
1+
\phi_d^2 - 2 \phi_d + \frac{10}{K_2^3} d \epsilon
+
68 d^2 \sqrt{\epsilon}
\right)
+
(\varphi \norm{\xo} + \norm{\xo}) \frac{\omega}{2^{d/2}} \label{GA:e42}
\end{align}
where the last inequality follows from $\epsilon < \sqrt{\epsilon}$, $\rho_d \leq 1$, $4 \epsilon d < 1$, $\varphi < 1$ and assuming $\varphi = \epsilon$.

Similarly, we have that for any $y \in \mathcal{B}(- \phi_d x_*, \varphi  \|x_*\|)$
\begin{align}
f_\eta(y) \geq& \mathbb{E}[f(y)] - |f(y) - \mathbb{E}[f(y)]|
-
\left|\innerprod{G(x) - G(x_\ast)}{\eta}\right|
  \nonumber \\
\geq& \frac{1}{2^{d+1}} \left(\phi_d^2 - 2 \phi_d \rho_d - 10 d^3 \varphi \right) \|x_*\|^2 + \frac{1}{2^{d+1}} \|x_*\|^2 \nonumber \\
&- \left( \frac{\epsilon (1 + 4 \epsilon d)}{2^d} \|y\|^2 + \frac{\epsilon(1 + 4 \epsilon d) + 48 d^3 \sqrt{\epsilon}}{2^{d+1}} \|y\| \|x_*\| + \frac{\epsilon (1 + 4 \epsilon d)}{2^d} \|x_*\|^2 \right) \nonumber \\
&- (\varphi \norm{\xo} + \norm{\xo}) \frac{\omega}{2^{d/2}} \nonumber \\
\geq&
\frac{\|x_*\|^2}{2^{d+1}} \left(1 + \phi_d^2 - 2 \phi_d \rho_d - 10 d^3 \varphi
- 68 d^2 \sqrt{\epsilon}
\right) -(\varphi \norm{\xo} + \norm{\xo}) \frac{\omega}{2^{d/2}}
\label{GA:e43}
\end{align}
Using $\epsilon < \sqrt{\epsilon}$, $\rho_d \leq 1$, $4 \epsilon d < 1$, $\varphi < 1$ and assuming $\varphi = \epsilon$, the right side of~\eqref{GA:e42} is smaller than the right side of~\eqref{GA:e43} if
\begin{equation} \label{GA:e44}
\varphi = \epsilon \leq \left(\frac{\phi_d - \rho_d \phi_d  - 13 \|\etabar\|_2}{\left( 125 + \frac{5}{K_2^3} \right) d^3} \right)^2.
\end{equation}
We can establish that:
\begin{lemma}\label{lemma:rho-d}
For all $d\geq 2$, that
\[
1/\left(K_1(d + 2)^2\right) \leq 1 - \rho_d \leq 250/(d + 1).
\]
\end{lemma}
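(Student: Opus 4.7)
The plan is to work from the closed-form expression
\[
\rhotilde_d = \sum_{i=1}^d \frac{\thetacheck_{i-1} - \sin \thetacheck_{i-1}}{\pi} \prod_{j=i+1}^d \Bigl(1 - \frac{\thetacheck_{j-1}}{\pi} \Bigr)
\]
established in the last paragraph of the proof of Lemma~\ref{lemma:Seps}, together with the two-sided estimate $\pi/(i+1) \leq \thetacheck_i \leq 3\pi/(i+3)$ from \eqref{upper-bound-thetai}--\eqref{lower-bound-thetai}. Both halves of the lemma will then follow by substituting these estimates together with elementary bounds on $\theta - \sin \theta$, so the only real work is to pick off the correct powers of $d$.

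For the upper bound, I would combine $\thetacheck_{i-1} - \sin \thetacheck_{i-1} \leq \thetacheck_{i-1}^3/6$, the estimate $\thetacheck_{i-1} \leq 3\pi/(i+2)$, and the product bound $\prod_{j=i+1}^d (1 - \thetacheck_{j-1}/\pi) \leq (i+1)/(d+1)$ already derived in the proof of Lemma~\ref{lemma:Seps}. This yields
\[
\rhotilde_d \leq \frac{9\pi^2}{2(d+1)} \sum_{i=1}^d \frac{i+1}{(i+2)^3} \leq \frac{9\pi^2}{2(d+1)} \sum_{i=1}^\infty \frac{1}{(i+2)^2} \leq \frac{250}{d+1},
\]
where the final step uses the crude tail estimate $\sum_{i=1}^\infty (i+2)^{-2} \leq \pi^2/6$.

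For the lower bound I would first prove the pointwise inequality $\theta - \sin \theta \geq 2\theta^3/(3\pi^2)$ for $\theta \in [0,\pi]$, which follows from Jordan's inequality $\sin(t/2) \geq t/\pi$ on $[0,\pi]$ (giving $1 - \cos t = 2\sin^2(t/2) \geq 2t^2/\pi^2$) and then integrating on $[0,\theta]$. Combined with $\thetacheck_{i-1} \geq \pi/i$, this yields $(\thetacheck_{i-1} - \sin \thetacheck_{i-1})/\pi \geq 2/(3i^3)$. For the product I would use $\thetacheck_{j-1} \leq 3\pi/(j+2)$, so that $1 - \thetacheck_{j-1}/\pi \geq (j-1)/(j+2)$, and the right-hand side telescopes to
\[
\prod_{j=i+1}^d \frac{j-1}{j+2} = \frac{i(i+1)(i+2)}{d(d+1)(d+2)}.
\]
Substituting both estimates into the closed form and bounding $(i+1)(i+2)/i^2 \geq 1$ gives
\[
\rhotilde_d \geq \frac{2}{3 d(d+1)(d+2)} \sum_{i=1}^d \frac{(i+1)(i+2)}{i^2} \geq \frac{2}{3(d+1)(d+2)} \geq \frac{1}{2(d+2)^2},
\]
so the conclusion holds with $K_1 = 2$. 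The only mildly delicate point is that the upper bound $\thetacheck_i \leq 3\pi/(i+3)$ is loose for small $i$ (indeed $\thetacheck_0 = \pi$), but this is not a real obstacle: the telescoping product is of order $i^3/d^3$ while the factor $1/i^3$ from the $\theta-\sin\theta$ bound compensates, so each term contributes $\Theta(1/d^3)$ and summation over $i \leq d$ produces the advertised $\Theta(1/d^2)$ order. I do not anticipate any genuine difficulty beyond this routine bookkeeping.
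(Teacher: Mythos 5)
Your argument is correct and gives the lemma with the explicit constant $K_1=2$. You start from the same closed-form expression for $\rhotilde_d$ and the same two-sided bounds $\pi/(i+1)\leq\thetacheck_i\leq 3\pi/(i+3)$, so the high-level structure matches the paper's, but your inner estimates differ in a way worth noting. For the lower bound, the paper estimates the product via $1-\thetacheck_j/\pi\geq \exp(-6/(j+3))$ and an integral comparison, obtaining $\prod_{j=i+1}^{d-1}(1-\thetacheck_j/\pi)\geq\left((i+3)/(d+2)\right)^6$; the exaggerated sixth power is harmless because the cubic decay of $\thetacheck_{i}-\sin\thetacheck_{i}$ still produces $\Theta(1/d^2)$ after summation, but it obscures the true order of each term. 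You instead observe that $1-\thetacheck_{j-1}/\pi\geq (j-1)/(j+2)$ telescopes exactly to $i(i+1)(i+2)/\bigl(d(d+1)(d+2)\bigr)$, which is tight at the correct cubic order and makes the $\Theta(1/d^2)$ conclusion transparent and constant-explicit. Similarly, for the cubic lower bound on $\theta-\sin\theta$ the paper relies on the alternating Taylor remainder to get $\theta-\sin\theta\geq\theta^3/12$, whereas you integrate Jordan's inequality to get $\theta-\sin\theta\geq 2\theta^3/(3\pi^2)$; both hold on $[0,\pi]$ and both suffice. Your upper bound is essentially the paper's: after substituting the bounds into the sum, the only cosmetic difference is that you index the product to absorb the separate $\prod_{i=1}^{d-1}(1-\thetacheck_i/\pi)$ term (which works because $\thetacheck_0=\pi$, so the $i=1$ summand reproduces it exactly) and that you use the slightly sharper $(i+1)/(d+1)$ instead of $(i+2)/(d+1)$. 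In short: same skeleton, genuinely different and cleaner estimate in the lower bound, and the numerics all check out, e.g.\ your upper bound gives $\rhotilde_d\leq \tfrac{3\pi^4}{4(d+1)}\approx 73/(d+1)<250/(d+1)$ and your lower bound is valid since $\tfrac{2}{3(d+1)(d+2)}\geq\tfrac{1}{2(d+2)^2}$ for all $d\geq -5$.
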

Thus, it suffices to have $\varphi = \epsilon = \frac{K_3}{d^{10}}$ and $13 \|\etabar\|_2 \leq \frac{K_9}{d^2} \leq \frac{1}{2} \frac{K_2}{K_1 (d+2)^2}$ for an appropriate universal constant $K_9$, and for an appropriate universal constant $K_3$.  

\subsection{Proof of Lemma \ref{lemma:rho-d}}
It holds that
\begin{align}
&\|x - y\| \geq 2 \sin (\theta_{x, y} / 2) \min(\|x\|, \|y\|), &\forall x, y \label{GA:e4} \\
&\sin(\theta / 2) \geq \theta / 4, &\forall \theta \in [0, \pi] \label{GA:e5} \\
&\frac{d}{d \theta} g(\theta) \in [0, 1] &\forall \theta \in [0, \pi] \label{GA:e13} \\
&\log(1+x) \leq x &\forall x \in [-0.5, 1] \label{GA:e47} \\
&\log(1-x) \geq -2 x &\forall x \in [0, 0.75] \label{GA:e48}
\end{align}
where $\theta_{x, y} = \angle(x, y)$.
We recall the results (36), (37), and (50) in \citep{hand_global_2017}:
\begin{align*}
&\check{\theta}_i \leq \frac{3 \pi}{i + 3}\;\;\;\;\;  \;\;\; \hbox{ and }\;\;\;\;\;  \;\;\; \check{\theta}_i \geq \frac{\pi}{i + 1}\;\;\;\;\; \forall i \geq 0 \\
&1 - \rho_d = \prod_{i = 1}^{d - 1} \left( 1 - \frac{\check{\theta}_{i}}{\pi} \right) + \sum_{i = 1}^{d-1} \frac{\check{\theta}_{i} - \sin \check{\theta}_{i}}{\pi} \prod_{j = i+1}^{d-1} \left( 1 - \frac{\check{\theta}_{j}}{\pi} \right).
\end{align*}
Therefore, we have for all $0 \leq i \leq d - 2$,
\begin{align*}
&\prod_{j = i+1}^{d-1} \left( 1 - \frac{\check{\theta}_{j}}{\pi} \right)
\leq \prod_{j = i+1}^{d-1} \left( 1 - \frac{1}{j + 1} \right) = e^{\sum_{j = i + 1}^{d - 1} \log\left(1 - \frac{1}{j + 1}\right)}
\leq e^{- \sum_{j = i + 1}^{d - 1}  \frac{1}{j + 1}} \leq e^{- \int_{i + 1}^d \frac{1}{s + 1} d s} = \frac{i + 2}{d + 1}, \\
&\prod_{j = i+1}^{d-1} \left( 1 - \frac{\check{\theta}_{j}}{\pi} \right)
\geq
\prod_{j = i+1}^{d-1} \left( 1 - \frac{3}{j + 3} \right) = e^{\sum_{j = i + 1}^{d - 1} \log\left(1 - \frac{3}{j + 3}\right)}
\geq e^{- \sum_{j = i + 1}^{d - 1}  \frac{6}{j + 3}} \geq e^{- \int_{i}^{d - 1} \frac{6}{s + 3} d s} = \left(\frac{i + 3}{d + 2}\right)^6,
\end{align*}
where the second and the fifth inequalities follow from~\eqref{GA:e47} and~\eqref{GA:e48} respectively.
Since $\pi^3 / (12 (i + 1)^3) \leq \check{\theta}_{i}^3 / 12 \leq \check{\theta}_{i} - \sin \check{\theta}_{i} \leq \check{\theta}_{i}^3 / 6 \leq 27 \pi^3 / (6 (i + 3)^3)$, we have that for all $d \geq 3$
\begin{align*}
1 - \rho_d \leq& \frac{2}{d + 1} + \sum_{i = 1}^{d - 1} \frac{27 \pi^3}{6 (i + 3)^3} \frac{i + 2}{d + 1} \leq \frac{2}{d + 1} + \frac{3 \pi^5}{4 (d + 1)} \leq \frac{250}{d + 1}
\end{align*}
and
\begin{align*}
1 - \rho_d \geq& \left(\frac{3}{(d+2)}\right)^6 + \sum_{i = 1}^{d - 1} \frac{\pi^3}{12 (i + 3)^3} \left( \frac{i + 3}{d + 2} \right)^6 \geq \frac{1}{K_1 (d + 2)^2},
\end{align*}
where we use $\sum_{i = 4}^\infty \frac{1}{i^2} \leq \frac{\pi^2}{6}$ and $\sum_{i = 1}^n i^3 = O(n^4)$.

\subsection{\label{sec:prooflemma:h-lipschitz}Proof of Lemma~\ref{lemma:h-lipschitz}}
To establish Lemma \ref{lemma:h-lipschitz}, we prove the following:

\begin{lemma}
\label{GA:le4}
For all $x, y \neq 0$,
\begin{equation*}
\|h_{x} - h_{y}\| \leq \left(\frac{1}{2^d} + \frac{6 d + 4 d^2}{\pi 2^d} \max\left( \frac{1}{\|x\|}, \frac{1}{\|y\|} \right) \|x_*\|\right) \|x - y\|
\end{equation*}
\end{lemma}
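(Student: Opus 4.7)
The plan is to first establish the auxiliary Lemma~\ref{GA:le4} and then deduce Lemma~\ref{lemma:h-lipschitz} as an immediate corollary. Indeed, under the hypothesis $\|x\|,\|y\| \geq K_0 \|\xo\|$, we have $\max(1/\|x\|, 1/\|y\|)\cdot\|\xo\| \leq 1/K_0$, so Lemma~\ref{GA:le4} yields
\[
\|h_x - h_y\| \leq \frac{1}{2^d}\Bigl(1 + \frac{6d + 4d^2}{\pi K_0}\Bigr)\|x-y\| \leq \frac{K_4 d^2}{2^d}\|x-y\|
\]
for a suitable universal constant $K_4$.

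To prove Lemma~\ref{GA:le4}, I write $h_x = \tfrac{1}{2^d}\bigl[x - \xi(x)\xo - \zeta(x)\tfrac{\|\xo\|}{\|x\|}x\bigr]$, where $\xi(x) = \prod_{i=0}^{d-1}\tfrac{\pi-\thetabar_i(x)}{\pi}$ and $\zeta(x) = \sum_{i=0}^{d-1}\tfrac{\sin\thetabar_i(x)}{\pi}\prod_{j=i+1}^{d-1}\tfrac{\pi-\thetabar_j(x)}{\pi}$, with $\thetabar_0(x) = \angle(x,\xo)$ and $\thetabar_i(x) = g(\thetabar_{i-1}(x))$. Splitting
\[
h_x - h_y = \frac{1}{2^d}\Bigl[(x-y) - (\xi(x)-\xi(y))\xo - \Bigl(\zeta(x)\tfrac{\|\xo\|}{\|x\|}x - \zeta(y)\tfrac{\|\xo\|}{\|y\|}y\Bigr)\Bigr]
\]
reduces the task to bounding each of the three contributions by the triangle inequality.

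The key Lipschitz ingredients are: (i) by \eqref{GA:e4} and \eqref{GA:e5}, $\theta_{x,y} \leq \tfrac{2\|x-y\|}{\min(\|x\|,\|y\|)}$, and by the triangle inequality for angles $|\thetabar_0(x) - \thetabar_0(y)| \leq \theta_{x,y}$; (ii) the map $g$ satisfies $|g'|\leq 1$ by \eqref{GA:e13}, so $|\thetabar_i(x) - \thetabar_i(y)| \leq |\thetabar_0(x)-\thetabar_0(y)|$ for every $i$; (iii) since each factor $\tfrac{\pi-\thetabar_i}{\pi} \in [0,1]$, a telescoping argument gives $|\xi(x)-\xi(y)| \leq \tfrac{d}{\pi}|\thetabar_0(x)-\thetabar_0(y)|$, and similarly (using that $\sin$ is $1$-Lipschitz and that $\zeta$ is a sum of at most $d$ products of at most $d$ bounded factors) $|\zeta(x)-\zeta(y)| \leq \tfrac{d(d+1)}{\pi}|\thetabar_0(x)-\thetabar_0(y)|$, while $|\zeta(x)| \leq d/\pi$ by \eqref{zeta-bound-d-sin-theta}; (iv) for the normalization, $\bigl\|\tfrac{\|\xo\|}{\|x\|}x - \tfrac{\|\xo\|}{\|y\|}y\bigr\| = \|\xo\|\,\|\xhat - \yhat\| \leq \tfrac{2\|\xo\|\,\|x-y\|}{\min(\|x\|,\|y\|)}$. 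Combining these with the splitting above, the $(x-y)$ piece contributes $\tfrac{1}{2^d}\|x-y\|$, the $\xi$ piece contributes a term of order $\tfrac{d}{\pi 2^d}\cdot \tfrac{\|\xo\|}{\min(\|x\|,\|y\|)}\|x-y\|$, and the $\zeta$ piece, split as $\zeta(x)\bigl(\tfrac{\|\xo\|}{\|x\|}x - \tfrac{\|\xo\|}{\|y\|}y\bigr) + (\zeta(x)-\zeta(y))\tfrac{\|\xo\|}{\|y\|}y$, contributes a term of order $\tfrac{d^2}{\pi 2^d}\cdot \tfrac{\|\xo\|}{\min(\|x\|,\|y\|)}\|x-y\|$. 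Bookkeeping the constants gives the $(6d + 4d^2)/\pi$ prefactor in the statement of Lemma~\ref{GA:le4}.

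I expect the main obstacle to be nothing conceptually deep --- each step is a direct Lipschitz estimate for $g$, $\sin$, a product of numbers in $[0,1]$, or the unit-vector map $x \mapsto x/\|x\|$ --- but rather the careful constant tracking needed to match the prefactor in Lemma~\ref{GA:le4} exactly; in particular, the $\zeta$ term couples four Lipschitz estimates simultaneously (angles, sine, the truncated product, and normalization), so the $d^2$ scaling must be attributed to the right contribution.
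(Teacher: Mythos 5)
Your proposal is correct and follows essentially the same route as the paper's own proof: the same split of $h_x-h_y$ into an $(x-y)$ piece, a $\xi$-piece handled by the telescoping bound (the paper's Lemma~\ref{GA:le17}), and a $\zeta$-piece further split into a $\zeta(x)(\hat x-\hat y)$ contribution and a $(\zeta(x)-\zeta(y))$ contribution, with all pieces fed by the same angle estimates \eqref{GA:e4}, \eqref{GA:e5}, \eqref{GA:e13}. The only cosmetic differences are in constant bookkeeping (you use $|\thetabar_0(x)-\thetabar_0(y)|\le\theta_{x,y}\le 2\max(1/\|x\|,1/\|y\|)\|x-y\|$, giving a slightly tighter prefactor than the paper's factor of $4$, which is still consistent with the stated bound).
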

Lemma~\ref{lemma:h-lipschitz} follows by noting that if $x, y \notin \mathcal{B}(0, r \|x_*\|)$, then $\|h_{x} - h_{y}\| \leq \left(\frac{1}{2^d} + \frac{6 d + 4 d^2}{\pi r 2^d}\right) \|x - y\|$.

\begin{proof}[Proof of Lemma~\ref{GA:le4}]
For brevity of notation, let $\zeta_{j, z} = \prod_{i = j}^{d - 1} \frac{\pi - \bar{\theta}_{i, z}}{\pi}$.
Combining \eqref{GA:e4} and~\eqref{GA:e5} gives $|\bar{\theta}_{0, x} - \bar{\theta}_{0, y}| \leq 4 \max\left( \frac{1}{\|x\|}, \frac{1}{\|y\|} \right) \|x - y\|$.
Inequality~\eqref{GA:e13} implies
$|\bar{\theta}_{i, x} - \bar{\theta}_{i, y}| \leq |\bar{\theta}_{j, x} - \bar{\theta}_{j, y}|$ for all $i \geq j$.
It follows that
\begin{align}
\|h_{x} - h_{y}\| \leq& \frac{1}{2^d}\|x - y\| + \frac{1}{2^d} \underbrace{\left| \zeta_{0, x} - \zeta_{0, y} \right|}_{T_1} \|x_*\| \nonumber \\
+& \frac{1}{2^d} \underbrace{ \left| \sum_{i = 0}^{d - 1} \frac{\sin \bar{\theta}_{i, x}}{\pi} \zeta_{i + 1, x} \hat{x} - \sum_{i = 0}^{d - 1} \frac{\sin \bar{\theta}_{i, y}}{\pi} \zeta_{i + 1, y} \hat{y} \right| }_{T_2} \|x_*\|. \label{GA:e50}
\end{align}
By Lemma~\ref{GA:le17}, we have
\begin{align} \label{GA:e53}
T_1 \leq \frac{d}{\pi} |\bar{\theta}_{0, x} - \bar{\theta}_{0, y}| \leq \frac{4 d}{\pi}\max\left( \frac{1}{\|x\|}, \frac{1}{\|y\|} \right) \|x - y\|.
\end{align}
Additionally, it holds that
\begin{align}
T_2 =& \left| \sum_{i = 0}^{d - 1} \frac{\sin \bar{\theta}_{i, x}}{\pi} \zeta_{i + 1, x} \hat{x} - \frac{\sin \bar{\theta}_{i, x}}{\pi} \zeta_{i + 1, x} \hat{y} + \frac{\sin \bar{\theta}_{i, x}}{\pi} \zeta_{i + 1, x} \hat{y} - \sum_{i = 0}^{d - 1} \frac{\sin \bar{\theta}_{i, y}}{\pi} \zeta_{i + 1, y} \hat{y} \right| \nonumber \\
\leq& \frac{d}{\pi} \|\hat{x} - \hat{y}\| + \underbrace{ \left| \sum_{i = 0}^{d - 1} \frac{\sin \bar{\theta}_{i, x}}{\pi} \zeta_{i + 1, x} - \sum_{i = 0}^{d - 1} \frac{\sin \bar{\theta}_{i, y}}{\pi} \zeta_{i + 1, y} \right|}_{T_3} . \label{GA:e51}
\end{align}
We have
\begin{align}
T_3 \leq& \sum_{i = 0}^{d - 1} \left[ \left| \frac{\sin \bar{\theta}_{i, x}}{\pi} \zeta_{i + 1, x} -\frac{\sin \bar{\theta}_{i, x}}{\pi} \zeta_{i + 1, y} \right|\right. + \left.\left| \frac{\sin \bar{\theta}_{i, x}}{\pi} \zeta_{i + 1, y} - \frac{\sin \bar{\theta}_{i, y}}{\pi} \zeta_{i + 1, y} \right| \right] \nonumber \\
\leq& \sum_{i = 0}^{d - 1} \left[ \frac{1}{\pi} \left( \frac{d - i  - 1}{\pi} \left|\bar{\theta}_{i-1, x} - \bar{\theta}_{i-1, y}\right| \right) + \frac{1}{\pi} |\sin \bar{\theta}_{i, x} - \sin \bar{\theta}_{i, y}|  \right] \nonumber \\
\leq& \frac{d^2}{\pi} |\bar{\theta}_{0, x} - \bar{\theta}_{0, y}| \leq \frac{4 d^2}{\pi} \max\left(\frac{1}{\|x\|}, \frac{1}{\|y\|}\right) \|x - y\|. \label{GA:e52}
\end{align}
Using~\eqref{GA:e4} and~\eqref{GA:e5} and noting $\|\hat{x} - \hat{y}\| \leq \theta_{x, y}$ yield
\begin{equation} \label{GA:e15}
\|\hat{x} - \hat{y}\| \leq \theta_{x, y} \leq 2 \max\left( \frac{1}{\|x\|}, \frac{1}{\|y\|} \right) \|x - y\|.
\end{equation}
Finally, combining~\eqref{GA:e50}, \eqref{GA:e53}, \eqref{GA:e51}, \eqref{GA:e52} and~\eqref{GA:e15} yields the result.
\end{proof}

\begin{lemma} \label{GA:le17}
Suppose $a_i, b_i \in [0, \pi]$ for $i = 1, \ldots, k$, and $|a_i - b_i| \leq |a_j - b_j|, \forall i \geq j$. Then it holds that
\begin{equation*}
\left|\prod_{i = 1}^k \frac{\pi - a_i}{\pi} - \prod_{i = 1}^k \frac{\pi - b_i}{\pi}\right| \leq \frac{k}{\pi} |a_1 - b_1|.
\end{equation*}
\end{lemma}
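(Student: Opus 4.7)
The plan is to prove the bound by a telescoping decomposition that swaps one factor at a time from the product with $a_i$'s to the product with $b_i$'s, then to use the monotonicity hypothesis on the differences only at the very end to replace every $|a_j - b_j|$ by $|a_1 - b_1|$.

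Concretely, I would first write
\[
\prod_{i=1}^k \frac{\pi - a_i}{\pi} - \prod_{i=1}^k \frac{\pi - b_i}{\pi}
= \sum_{j=1}^k \left(\prod_{i<j}\frac{\pi - b_i}{\pi}\right) \cdot \frac{b_j - a_j}{\pi} \cdot \left(\prod_{i>j}\frac{\pi - a_i}{\pi}\right),
\]
which is the standard telescoping identity for the difference of two products. Each factor $(\pi-a_i)/\pi$ and $(\pi-b_i)/\pi$ lies in $[0,1]$ because $a_i,b_i \in [0,\pi]$, so all the products flanking $(b_j - a_j)/\pi$ are bounded in absolute value by $1$. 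Taking absolute values and applying the triangle inequality therefore gives
\[
\left|\prod_{i=1}^k \frac{\pi - a_i}{\pi} - \prod_{i=1}^k \frac{\pi - b_i}{\pi}\right|
\;\leq\; \sum_{j=1}^k \frac{|a_j - b_j|}{\pi}.
\]

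The hypothesis $|a_i - b_i| \le |a_j - b_j|$ whenever $i \ge j$ says the sequence $j \mapsto |a_j - b_j|$ is non-increasing; in particular $|a_j - b_j| \le |a_1 - b_1|$ for every $j$. Substituting this into the sum yields
\[
\sum_{j=1}^k \frac{|a_j - b_j|}{\pi} \;\le\; \frac{k}{\pi}|a_1 - b_1|,
\]
which is exactly the desired bound. There is no real obstacle here: the only thing to be careful about is making sure the telescoping identity is written correctly (with $b_i$'s on the left of $j$ and $a_i$'s on the right, or vice versa, so that the inner difference is exactly $(b_j - a_j)/\pi$), and to note that the $[0,1]$ bound on each factor depends on $a_i, b_i \in [0,\pi]$, which is given in the hypothesis.
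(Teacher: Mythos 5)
Your proof is correct, and it is essentially the same argument as the paper's: the paper proves the bound by induction on $k$, peeling off the last factor at each step, which is exactly the recursive form of your telescoping sum. Both arguments rest on the same two facts --- each factor $(\pi - a_i)/\pi$, $(\pi - b_i)/\pi$ lies in $[0,1]$, and the monotonicity hypothesis reduces every $|a_j - b_j|$ to $|a_1 - b_1|$.
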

\begin{proof}
Prove by induction. It is easy to verify that the inequality holds if $k = 1$.
Suppose the inequality holds with $k = t - 1$. Then
\begin{align*}
\left|\prod_{i = 1}^{t} \frac{\pi - a_i}{\pi} - \prod_{i = 1}^t \frac{\pi - b_i}{\pi}\right| \leq& \left|\prod_{i = 1}^{t} \frac{\pi - a_i}{\pi} - \frac{\pi - a_t}{\pi} \prod_{i = 1}^{t-1} \frac{\pi - b_i}{\pi}\right| \\
&+ \left|\frac{\pi - a_t}{\pi} \prod_{i = 1}^{t-1} \frac{\pi - b_i}{\pi} - \prod_{i = 1}^t \frac{\pi - b_i}{\pi}\right| \\
\leq& \frac{t-1}{\pi} |a_1 - b_1| + \frac{1}{\pi} |a_t - b_t| \leq \frac{t}{\pi} |a_1 - b_1|.
\end{align*}
\end{proof}

\subsection{Proof of Lemma~\ref{GA:le10}}

We first need Lemmas~\ref{GA:le20}, ~\ref{GA:le19} and~\ref{GA:le8}.


\begin{lemma} \label{GA:le20}
Suppose $W \in \mathbb{R}^{n \times k}$ satisfies the WDC with constant $\epsilon$. Then for any $x, y \in \mathbb{R}^k$, it holds that
\begin{equation*}
\|W_{+, x} x - W_{+, y} y\| \leq \left( \sqrt{\frac{1}{2} + \epsilon} + \sqrt{2(2\epsilon + \theta)} \right) \|x - y\|,
\end{equation*}
where $\theta = \angle(x, y)$.
\end{lemma}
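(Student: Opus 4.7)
My plan is to split the difference into a Lipschitz-like piece and a sign-discrepancy piece, and control each using the WDC.

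\textbf{Step 1 (splitting).} I would write
\[
W_{+,x} x - W_{+,y} y = W_{+,x}(x-y) + (W_{+,x} - W_{+,y})\, y,
\]
and apply the triangle inequality.

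\textbf{Step 2 (first term).} Applying the WDC with $y=x$ gives $Q_{x,x} = \tfrac12 I_k$, hence
$\|W_{+,x}^T W_{+,x} - \tfrac12 I_k\| \le \epsilon$, so $\|W_{+,x}\|^2 \le \tfrac12 + \epsilon$. Therefore
\[
\|W_{+,x}(x-y)\| \le \sqrt{\tfrac12 + \epsilon}\,\|x-y\|,
\]
which accounts for the first summand in the stated bound.

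\textbf{Step 3 (second term).} Let $S = \{i : \mathbf{1}_{w_i\cdot x>0}\neq \mathbf{1}_{w_i\cdot y>0}\}$, and let $D_S$ be the diagonal projector onto those rows. Only indices $i\in S$ contribute to $(W_{+,x}-W_{+,y})y$. The key elementary observation is that for $i\in S$, one of $w_i\cdot x, w_i\cdot y$ is nonpositive and the other is positive, so
\[
|w_i\cdot y|\le |w_i\cdot(x-y)|\quad\text{for all } i\in S.
\]
Hence
\[
\|(W_{+,x}-W_{+,y})y\|^2 \le \sum_{i\in S}(w_i\cdot(x-y))^2 = (x-y)^T W^T D_S W\,(x-y).
\]

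\textbf{Step 4 (operator norm via WDC).} Using the identity
$\mathbf{1}_{i\in S} = \mathbf{1}_{w_i\cdot x>0}+\mathbf{1}_{w_i\cdot y>0} - 2\cdot \mathbf{1}_{w_i\cdot x>0}\mathbf{1}_{w_i\cdot y>0}$, I get
\[
W^T D_S W = W_{+,x}^T W_{+,x} + W_{+,y}^T W_{+,y} - 2\sum_i \mathbf{1}_{w_i\cdot x>0}\mathbf{1}_{w_i\cdot y>0}\, w_i w_i^T.
\]
Applying the WDC (with constant $\epsilon$) to each of the three terms, with expectations $\tfrac12 I_k$, $\tfrac12 I_k$, and $Q_{x,y}$ respectively, gives
\[
\bigl\| W^T D_S W - (I_k - 2 Q_{x,y})\bigr\| \le 4\epsilon.
\]
A direct computation using the formula for $Q_{x,y}$ and the fact that $M_{\hat x\hat y}$ has eigenvalues in $\{-1,0,1\}$ yields
\[
\|I_k - 2 Q_{x,y}\| = \frac{\theta + \sin\theta}{\pi} \le \theta,
\]
so $\|W^T D_S W\| \le 4\epsilon + \theta \le 2(2\epsilon + \theta)$. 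Combining with Step 3,
\[
\|(W_{+,x}-W_{+,y})y\| \le \sqrt{2(2\epsilon+\theta)}\,\|x-y\|.
\]
Adding the bounds from Steps 2 and 4 via the triangle inequality from Step 1 produces the stated estimate.

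The only nonroutine part is the inequality $|w_i\cdot y|\le |w_i\cdot(x-y)|$ on the disagreement set $S$, which is what allows the second-term bound to be expressed in terms of $\|x-y\|$ rather than $\|y\|$; everything else is a straightforward WDC bookkeeping argument.
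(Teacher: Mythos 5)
Your proof is correct and follows essentially the same route as the paper: the same triangle-inequality split, the same key sign-discrepancy observation $|w_i\cdot y|\le|w_i\cdot(x-y)|$ on the disagreement set, and the same reduction to a WDC-controlled quadratic form. The only cosmetic difference is in the bookkeeping at the end: you bound $\|W^T D_S W\|$ directly via the symmetric identity $W^T D_S W = W_{+,x}^TW_{+,x}+W_{+,y}^TW_{+,y}-2W_{+,x}^TW_{+,y}$ (getting $4\epsilon+\theta$ before weakening), whereas the paper bounds the two cross-terms $\|W_{+,x}^T(W_{+,x}-W_{+,y})\|$ and $\|W_{+,y}^T(W_{+,y}-W_{+,x})\|$ separately; since these sum to exactly $W^TD_SW$, the two computations are equivalent.
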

\begin{proof}
We have
\begin{align}
&\|W_{+, x} x - W_{+, y} y\| \leq \|W_{+, x} x - W_{+, x} y\| + \|W_{+, x} y - W_{+, y} y\| \nonumber \\
=& \|W_{+, x} (x - y)\| + \|(W_{+, x} - W_{+, y}) y\| \leq \|W_{+, x}\| \|x - y\| + \|(W_{+, x} - W_{+, y}) y\|. \label{GA:e33}
\end{align}
By WDC assumption, we have
\begin{align}
\|W_{+, x}^T (W_{+, x} - W_{+, y})\| \leq& \left\|W_{+, x}^T W_{+, x} - I  / 2\right\| + \left\|W_{+, x}^T W_{+, y} - Q_{x, y}\right\| + \left\| Q_{x, y} -  I / 2 \right\| \nonumber \\
\leq& 2 \epsilon + \theta. \label{GA:e34}
\end{align}
We also have
\begin{align}
&\|(W_{+, x} - W_{+, y}) y\|^2 = \sum_{i = 1}^n (1_{w_i \cdot x > 0} - 1_{w_i \cdot y > 0})^2 (w_i \cdot y)^2 \nonumber \\
\leq& \sum_{i = 1}^n (1_{w_i \cdot x > 0} - 1_{w_i \cdot y > 0})^2 ((w_i \cdot x)^2 + (w_i \cdot y)^2 - 2 (w_i \cdot x) (w_i \cdot y) ) \nonumber \\
=& \sum_{i = 1}^n (1_{w_i \cdot x > 0} - 1_{w_i \cdot y > 0})^2 (w_i \cdot (x - y) )^2 \nonumber \\
=& \sum_{i = 1}^n 1_{w_i \cdot x > 0} 1_{w_i \cdot y \leq 0} (w_i \cdot (x - y))^2 + \sum_{i = 1}^n 1_{w_i \cdot x \leq 0} 1_{w_i \cdot y > 0} (w_i \cdot(x - y))^2 \nonumber \\
=& (x - y)^T W_{+, x}^T (W_{+, x} - W_{+, y}) (x - y) + (x - y)^T W_{+, y}^T (W_{+, y} - W_{+, x}) (x - y) \nonumber \\
\leq& 2 (2 \epsilon + \theta) \|x - y\|^2. \;\;\; \hbox{ (by~\eqref{GA:e34})} \label{GA:e35}
\end{align}
Combining~\eqref{GA:e33},~\eqref{GA:e35}, and $\|W_{i, +, x}\|^2 \leq 1/2 + \epsilon$ given in~\cite[(10)]{hand_global_2017} yields the result.
\end{proof}


\begin{lemma} \label{GA:le19}
Suppose $x \in \mathcal{B}(x_*, d \sqrt{\epsilon} \|x_*\|)$, and
the WDC
holds with $\epsilon < 1/ (200)^4 / d^6$. Then it holds that
\begin{equation*}
\left\|\prod_{i = j}^1 W_{i, +, x} x - \prod_{i = j}^1 W_{i, +, x_*} x_*\right\| \leq \frac{1.2}{2^{\frac{j}{2}}} \|x - x_*\|.
\end{equation*}
\end{lemma}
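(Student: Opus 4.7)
The plan is to prove Lemma~\ref{GA:le19} by induction on $j$. The base case $j=0$ is immediate: the product is empty, and the claim reduces to $\|x - x_*\| \leq 1.2\,\|x - x_*\|$. For the inductive step, write $u_\ell := \prod_{i=\ell}^1 W_{i,+,x}\,x$ and $v_\ell := \prod_{i=\ell}^1 W_{i,+,x_*}\,x_*$, so that $u_j = W_{j,+,x} u_{j-1} = W_{j,+,u_{j-1}} u_{j-1}$ (and similarly for $v_j$), because the activation pattern of the $j$-th layer depends only on the cumulative input $u_{j-1}$. Applying Lemma~\ref{GA:le20} to the pair $(u_{j-1},v_{j-1})$ with weight matrix $W_j$ yields
\[
\|u_j - v_j\| \;\leq\; \Bigl(\sqrt{\tfrac{1}{2}+\epsilon} + \sqrt{2(2\epsilon + \theta_{j-1})}\Bigr)\,\|u_{j-1} - v_{j-1}\|,
\]
with $\theta_{j-1} := \angle(u_{j-1},v_{j-1})$. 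Combined with the inductive hypothesis, the proof will be complete once the parenthesized multiplier is shown to be of the form $\tfrac{1}{\sqrt{2}}(1+\delta_j)$ with $\prod_{\ell=1}^{d}(1+\delta_\ell) \leq 1.2$.

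The delicate step is to control $\theta_{j-1}$. The WDC with $y=x$ forces $Q_{x,x} = \tfrac12 I_k$, hence $W_{+,z}^T W_{+,z} \succeq (\tfrac12 - \epsilon) I$ and $\|W_{+,z} z\| \geq \sqrt{\tfrac12-\epsilon}\|z\|$. Iterating through $j-1$ layers gives $\|u_{j-1}\|,\|v_{j-1}\| \geq (\tfrac12-\epsilon)^{(j-1)/2}\min(\|x\|,\|x_*\|)$. Using $\min(\|x\|,\|x_*\|) \geq (1 - d\sqrt{\epsilon})\|x_*\|$ from the hypothesis $x\in\mathcal{B}(x_*, d\sqrt{\epsilon}\|x_*\|)$, together with the inductive hypothesis $\|u_{j-1}-v_{j-1}\| \leq \frac{1.2}{2^{(j-1)/2}}\|x-x_*\|$ and the inequality $\theta \leq 2\|u-v\|/\min(\|u\|,\|v\|)$ from~\eqref{GA:e15}, I obtain $\theta_{j-1} \leq C\,\|x-x_*\|/\|x_*\| \leq C d\sqrt{\epsilon}$ for a universal constant $C$, uniformly in $j\leq d$. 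The assumption $\epsilon < 1/(200)^4 d^6$ keeps the layerwise shrinkage factor $(\tfrac12-\epsilon)^{(j-1)/2}/2^{-(j-1)/2}$ close to $1$ throughout, so the angle estimate does not degrade with $j$.

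Substituting this angle bound into the Lemma~\ref{GA:le20} multiplier gives $\sqrt{\tfrac12+\epsilon} + \sqrt{2(2\epsilon + Cd\sqrt{\epsilon})} \leq \tfrac{1}{\sqrt{2}}(1+\delta)$ with $\delta = O(\epsilon + d^{1/2}\epsilon^{1/4})$. Under $\epsilon < 1/(200)^4 d^6$, the quantity $d\delta$ is bounded by a universal constant strictly smaller than $\log(1.2)$, so $\prod_{\ell=1}^{j}(1+\delta_\ell) \leq (1+\delta)^j \leq (1+\delta)^d \leq 1.2$ for every $j\leq d$, which closes the induction.

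The main obstacle is purely quantitative bookkeeping rather than a conceptual difficulty: one has to calibrate the layer-by-layer multiplicative error $\delta$ so that, after $d$ compositions, the cumulative overshoot above the ideal contraction rate $1/\sqrt{2}$ is absorbed by the prescribed constant $1.2$, and the threshold $\epsilon < 1/(200)^4 d^6$ is precisely what makes this work. Conceptually, two stable features drive the argument---the WDC makes each layer act as a near-$1/\sqrt{2}$ contraction on pairs of aligned vectors, and the ReLU ``angle map'' $g$ does not expand already-small angles---so the inductive invariant ``$\|u_j-v_j\|\leq 1.2\cdot 2^{-j/2}\|x-x_*\|$ with $\theta_j = O(d\sqrt{\epsilon})$'' is self-consistent throughout all $d$ layers.
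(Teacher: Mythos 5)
Your proposal is correct, but it takes a genuinely different route than the paper. The paper bounds the intermediate angles $\theta_i = \angle(u_i, v_i)$ directly and in advance: it observes $\bar\theta_i \le \bar\theta_0 \le 2d\sqrt{\epsilon}$ from the ball assumption, then invokes the angle-tracking estimate $|\theta_i - \bar\theta_i| \le 4i\sqrt{\epsilon}$ cited from \citep{hand_global_2017} (Eq.~(14) there), after which the product over layers is a single computation with all multipliers bounded uniformly by $\sqrt{1+2\epsilon} + 2\sqrt{8d\sqrt{\epsilon}}$. You instead bootstrap: you never invoke the external angle-tracking lemma, but derive $\theta_{j-1} = O(d\sqrt{\epsilon})$ from the inductive norm bound $\|u_{j-1}-v_{j-1}\| \le \tfrac{1.2}{2^{(j-1)/2}}\|x-x_*\|$ together with the lower spectral bound $W_{+,z}^T W_{+,z} \succeq (\tfrac12-\epsilon)I$ (the $x=y$ case of the WDC) and the elementary distance-to-angle conversion from~\eqref{GA:e15}. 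This makes the argument self-contained with respect to the WDC alone, at the cost of having to run a forward induction with a slightly strengthened invariant (cumulative overshoot $\prod_{\ell\le j}(1+\delta_\ell)$ rather than the bare constant $1.2$, so the induction actually closes) and of tracking the mild degradation $(\tfrac12-\epsilon)^{(j-1)/2}/2^{-(j-1)/2}$ in the norm lower bound. Both routes yield the uniform per-layer angle bound $O(d\sqrt{\epsilon})$ and then the same product estimate; the paper's version is shorter but leans on a prior result, while yours trades the citation for bookkeeping that is entirely internal to the WDC.
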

\begin{proof}
In this proof, we denote $\theta_{i, x, x_*}$ and $\bar{\theta}_{i, x, x_*}$ by $\theta_i$ and $\bar{\theta}_{i}$ respectively.
Since $x \in \mathcal{B}(x_*, d \sqrt{\epsilon} \|x_*\|)$, we have
\begin{equation} \label{GA:e40}
\bar{\theta}_{i} \leq \bar{\theta}_{0} \leq 2 d \sqrt{\epsilon}.
\end{equation}
By~\cite[(14)]{hand_global_2017}, we also have
$|\theta_{i} - \bar{\theta}_{i}| \leq 4 i \sqrt{\epsilon} \leq 4 d \sqrt{\epsilon}$.
It follows that
\begin{align}
2 \sqrt{\theta_i + 2 \epsilon} \leq& 2 \sqrt{\bar{\theta}_i + 4 d \sqrt{\epsilon} + 2\epsilon} \leq 2 \sqrt{2 d \sqrt{\epsilon} + 4 d \sqrt{\epsilon} + 2\epsilon} \nonumber \\
\leq& 2 \sqrt{8 d \sqrt{\epsilon}} \leq \frac{1}{30 d}. \hbox{ (by the assumption on $\epsilon$)} \label{GA:e39}
\end{align}
Note that $\sqrt{1 + 2 \epsilon} \leq 1 + \epsilon \leq 1 + \sqrt{d\sqrt{\epsilon}}$. We have
\begin{align*}
\prod_{i = d - 1}^0 \left( \sqrt{1+2\epsilon} + 2 \sqrt{ \theta_i + 2 \epsilon} \right) \leq& \left(1 + 7 \sqrt{d\sqrt{\epsilon}}\right)^d \leq 1 + 14 d \sqrt{d\sqrt{\epsilon}} \leq \frac{107}{100} < 1.2,
\end{align*}
where the second inequality is from that $(1+x)^d \leq 1 + 2dx$ if $0 < x d < 1$.
Combining the above inequality with Lemma~\ref{GA:le20} yields
\begin{align*}
\left\|\prod_{i = j}^1 W_{i, +, x}x - \prod_{i = j}^1 W_{i, +, x_*} x_*\right\| \leq \prod_{i = j - 1}^0 \left( \sqrt{\frac{1}{2}+\epsilon} + \sqrt{2} \sqrt{ \theta_i + 2 \epsilon} \right) \|x - x_*\| \leq \frac{1.2}{2^{\frac{j}{2}}} \|x - x_*\|.
\end{align*}
\end{proof}


\begin{lemma}\label{GA:le8}
Suppose $x \in \mathcal{B}(x_*, d \sqrt{\epsilon} \|x_*\|)$, and
the WDC
holds with $\epsilon < 1/ (200)^4 / d^6$.  Then it holds that
\begin{equation*}
\left(\prod_{i = d}^1 W_{i, +, x}\right)^T\left[\left(\prod_{i = d}^1 W_{i, +, x}\right) x - \left(\prod_{i = d}^1 W_{i, +, x_*}\right) x_*\right] = \frac{1}{2^d} (x - x_*) + \frac{1}{2^d} \frac{1}{16} \|x - x_*\| O_1(1).
\end{equation*}
\end{lemma}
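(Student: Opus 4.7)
The plan is to verify the identity by expanding
\[
\Lambda_x^T\bigl(\Lambda_x x - \Lambda_{x_*} x_*\bigr) = \Lambda_x^T \Lambda_x (x - x_*) \;+\; \Lambda_x^T (\Lambda_x - \Lambda_{x_*}) x_*,
\]
where I write $\Lambda_z := \prod_{i=d}^1 (W_i)_{+,z}$, $A_k := \prod_{i=k}^1 (W_i)_{+,x}$, and $B_k := \prod_{i=k}^1 (W_i)_{+,x_*}$. The first summand will supply the leading $\tfrac{1}{2^d}(x-x_*)$ and a negligible error; the second is the main work.

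For the first summand I would apply the WDC layer-by-layer with $y=x$. Since $Q_{z,z} = \tfrac{1}{2} I$, WDC gives $\|(W_i)_{+,x}^T (W_i)_{+,x} - \tfrac12 I\| \leq \epsilon$, and combined with $\|(W_i)_{+,x}\|^2 \leq \tfrac12 + \epsilon$ a straightforward telescoping over the $d$ layers yields $\|\Lambda_x^T \Lambda_x - \tfrac{1}{2^d} I\| \leq C d\epsilon / 2^d$. Hence $\Lambda_x^T \Lambda_x (x-x_*) = \tfrac{1}{2^d}(x-x_*) + E_1$ with $\|E_1\| \leq Cd\epsilon/2^d \cdot \|x-x_*\|$, and the hypothesis $\epsilon < 1/(200)^4/d^6$ absorbs this comfortably into the $\tfrac{1}{16\cdot 2^d}$ error budget.

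For the cross term, the key step is the telescoping identity
\[
\Lambda_x - \Lambda_{x_*} = \sum_{k=1}^d (W_d)_{+,x}\cdots (W_{k+1})_{+,x} \bigl[(W_k)_{+,x} - (W_k)_{+,x_*}\bigr] B_{k-1},
\]
so each summand applied to $x_*$ is shaped by the quantity $\bigl[(W_k)_{+,x} - (W_k)_{+,x_*}\bigr] B_{k-1} x_*$. I would control this using the intermediate inequality already proven inside Lemma~\ref{GA:le20} (equation~\eqref{GA:e35}), namely $\|[(W_k)_{+,u} - (W_k)_{+,v}]v\| \leq \sqrt{2(2\epsilon + \theta_{u,v})}\,\|u-v\|$, invoked with $u = A_{k-1} x$ and $v = B_{k-1} x_*$. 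Then Lemma~\ref{GA:le19} (applied at layer $k-1$) bounds $\|A_{k-1}x - B_{k-1}x_*\| \leq 1.2/2^{(k-1)/2}\|x-x_*\|$; the hypothesis $x \in \mathcal{B}(x_*, d\sqrt{\epsilon}\|x_*\|)$ forces $\bar\theta_0 \leq 2 d\sqrt{\epsilon}$ and hence $\theta_{k-1} \leq \bar\theta_0 + 4(k-1)\sqrt{\epsilon} \leq 6d\sqrt{\epsilon}$; and the outer product $(W_d)_{+,x}\cdots(W_{k+1})_{+,x}$ has operator norm at most $\sqrt{13/12}/2^{(d-k)/2}$. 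Multiplying the three factors gives a per-summand bound of roughly $(\sqrt{d\sqrt{\epsilon}}/2^{d/2}) \|x-x_*\|$, and summing over $k$ together with $d\sqrt{d\sqrt{\epsilon}} \leq 1/200$ (which follows from $\epsilon < 1/(200)^4/d^6$) yields $\|(\Lambda_x - \Lambda_{x_*})x_*\| \leq C'/(200 \cdot 2^{d/2}) \|x-x_*\|$. Multiplying by $\|\Lambda_x\| \leq \sqrt{13/12}/2^{d/2}$ then delivers $\|\Lambda_x^T(\Lambda_x - \Lambda_{x_*})x_*\| \leq C''/(200\cdot 2^d)\|x-x_*\|$, which sits safely inside the $\tfrac{1}{16\cdot 2^d}\|x-x_*\|$ allowance.

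The main obstacle is the cross-term bound. A naive triangle-inequality estimate using only Lemma~\ref{GA:le19} would give $\|(\Lambda_x - \Lambda_{x_*})x_*\| \lesssim 2^{-d/2}\|x-x_*\|$, producing a final bound of $O(2^{-d})\|x-x_*\|$ with a constant near $2.3$ instead of below $1/16$. The refinement crucially exploits the layer-level inequality $\|(W_{+,u} - W_{+,v})v\| \leq \sqrt{2(2\epsilon+\theta)}\|u-v\|$ hidden in the proof of Lemma~\ref{GA:le20}: because the intermediate angles $\theta_{k-1}$ are already $O(d\sqrt{\epsilon})$ thanks to the closeness hypothesis, this saves an extra factor of $\sqrt{d\sqrt{\epsilon}} \ll 1$ per summand, and that factor is what drives the constant down below $1/16$.
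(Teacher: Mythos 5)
Your proposal is correct and takes essentially the same route as the paper: the same decomposition into $\Lambda_x^T\Lambda_x(x-x_*)$ plus a telescoped cross term, the same invocation of the intermediate inequality \eqref{GA:e35} from Lemma~\ref{GA:le20} together with Lemma~\ref{GA:le19} and the $\theta_i = O(d\sqrt{\epsilon})$ angle control to gain the factor $\sqrt{d\sqrt{\epsilon}}$ that drives the constant below $1/16$. The only cosmetic difference is that you bound $\|\Lambda_x\|$ and the partial products $\|(W_d)_{+,x}\cdots(W_{k+1})_{+,x}\|$ separately, whereas the paper bounds $\|\Lambda_{d,1,x}^T\Lambda_{d,j+1,x}\|$ directly via the WDC-induced near-isometry estimates.
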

\begin{proof}
For brevity of notation, let $\Lambda_{j, k, z} = \prod_{i = j}^k W_{i, +, z}$.
We have
\begin{align}
&\Lambda_{d, 1, x}^T\left(\Lambda_{d, 1, x} x - \Lambda_{d, 1, x_*} x_*\right) \nonumber \\
=& \Lambda_{d, 1, x}^T\left[\Lambda_{d, 1, x} x - \sum_{j = 1}^d \left(\Lambda_{d, j, x} \Lambda_{j-1, 1, x_*} x_*\right)\right. + \left.\sum_{j = 1}^d \left(\Lambda_{d, j, x} \Lambda_{j-1, 1, x_*} x_*\right) - \Lambda_{d, 1, x_*} x_*\right] \nonumber \\
=& \underbrace{\Lambda_{d, 1, x}^T \Lambda_{d, 1, x} (x - x_*)}_{T_1} + \underbrace{\Lambda_{d, 1, x}^T \sum_{j = 1}^d \Lambda_{d, j + 1, x} \left( W_{j, +, x} - W_{j, +, x_*} \right) \Lambda_{j-1, 1, x_*} x_*}_{T_2}. \label{GA:e37}
\end{align}
For $T_1$, we have
\begin{align} \label{GA:e54}
T_1 = \frac{1}{2^d} (x - x_*) + \frac{4 d} {2^d} \|x - x_*\| O_1(\epsilon). \;\; \hbox{ \cite[(10)]{hand_global_2017}}
\end{align}
For $T_2$, we have
\begin{align}
T_2 =& O_1(1) \sum_{j = 1}^d \left( \frac{1}{2^{d - \frac{j}{2}}} + \frac{(4d - 2j)\epsilon}{2^{d - \frac{j}{2}}} \right) \left\|(W_{j, +, x} - W_{j, +, x_*}) \Lambda_{j-1, 1, x_*}x_* \right\| \nonumber \\
=& O_1(1) \sum_{j = 1}^d \left( \frac{1}{2^{d - \frac{j}{2}}} + \frac{(4d - 2j)\epsilon}{2^{d - \frac{j}{2}}} \right) \left\|(\Lambda_{j-1, 1, x} x - \Lambda_{j-1, 1, x_*} x_*) \right\| \sqrt{2 (\theta_{i, x, x_*} + 2 \epsilon)} \nonumber \\
=& O_1(1) \sum_{j = 1}^d \left( \frac{1}{2^{d - \frac{j}{2}}} + \frac{(4d - 2j)\epsilon}{2^{d - \frac{j}{2}}} \right) \frac{1.2}{2^{\frac{j}{2}}} \|x - x_*\| \frac{1}{ 30 \sqrt{2} d} \nonumber \\
=& \frac{1}{16} \frac{1}{2^d} \|x - x_*\| O_1(1). \label{GA:e55}
\end{align}
where the first equation is by~\cite[(10)]{hand_global_2017}; the second equation is by~\eqref{GA:e35}; the third equation is by Lemma~\ref{GA:le19} and~\eqref{GA:e39}. The result follows from~\eqref{GA:e37}, \eqref{GA:e54} and~\eqref{GA:e55}.
\end{proof}

Now, we are ready to prove Lemma~\eqref{GA:le10}.
For brevity of notation, let $\Lambda_{j, z} = \prod_{i = j}^1 W_{i, +, z}$. Using Lemma~\ref{GA:le8} yields
$$
\|\bar{v}_{x} - \frac{1}{2^d} (x - x_*)\| \leq \frac{1}{2^d} \frac{1}{16} \|x - x_*\|.
$$
It follows that
$$
\|\tilde{v}_x - \frac{1}{2^d} (x - x_*)\| = \|\bar{v}_x + \bar{q}_x - \frac{1}{2^d} (x - x_*)\| \leq \frac{1}{2^d} \frac{1}{16} \|x - x_*\| + \frac{1}{2^{d/2}} \omega.
$$

For any $x \neq 0$ and for any $v \in \partial f(x)$, by~\eqref{subdifferential-convex-hull}, there exist $c_1, c_2, \ldots, c_t \geq 0$ such that $c_1 + c_2 + \ldots + c_t = 1$ and $v = c_1 v_1 + c_2 v_2 + \ldots + c_t v_t$. It follows that
$\|v - \frac{1}{2^d} (x - x_*)\| \leq \sum_{j = 1}^t c_j \|v_j - \frac{1}{2^d} (x - x_*)\| \leq \frac{1}{2^d} \frac{1}{16} \|x - x_*\| + \frac{1}{2^{d/2}} \omega$.


\end{document}